  \tikzstyle{labels} = [
  \tikzset{every loop/.style={max distance=8mm}}
\DeclareMathOperator*{\argmax}{\arg\!\max}
\providecommand*{\ifempty}[3]{\ifthenelse{\isempty{#1}}{#2}{#3}}
\newcommand{\parensmathoper}[2]{\ensuremath{#1\ifempty{#2}{}{(#2)}}}
\newcommand{\cf}{\emph{cf.}}
\newcommand{\naturals}{\mathbb{N}}
\newcommand{\M}{\mathcal{M}}
\newcommand{\N}{\mathcal{N}}
\newcommand{\MC}[2][k]{\text{\rm MC}_{#2}(#1)}
\newcommand{\C}{\mathcal{C}}
\newcommand{\dist}[1][\lambda]{\delta_{#1}} %% bisimilarity distance
\newcommand{\discr}[2][\lambda]{\gamma_{#1}^{#2}} %% discrepancy 
\newcommand{\ddiscr}[2][\lambda]{\beta_{#1}^{#2}} %% dual discrepancy 
\newcommand{\instance}[1]{\langle #1 \rangle}
\newcommand{\VC}{\textsc{Vertex Cover}}
\newcommand{\CBA}[1][\lambda]{\text{CBA-}{#1}}
\newcommand{\BA}[1][\lambda]{\text{BA-}{#1}}
\newcommand{\MSAB}[1][1]{\text{MSAB-}{#1}}
\newcommand{\SBA}[1][1]{\text{SBA-}{#1}}
\newcommand{\E}[2][]{\mathbf{E}[#2\ifempty{#1}{}{ |\, #1}]}
\newcommand{\D}{\mathcal{D}}
\renewcommand{\Pr}[1][]{\parensmathoper{\mathbb{P}_{#1}}}
\newcommand{\coupling}[2]{\Omega(#1, #2)}
\newcommand{\set}[2]{\left\{ #1 \ifempty{#2}{}{\mid #2} \right\}}
\newcommand{\norm}[2][\infty]{\| #2 \|_{#1}}
\newcommand{\K}[1][d]{\parensmathoper{\parensmathoper{\mathcal{K}}{#1}}}
 \newtheorem{theorem}{Theorem} 
 \newtheorem{lemma}[theorem]{Lemma} 
 \newtheorem{corollary}[theorem]{Corollary} 
 \newproof{proof}{Proof}
 \newdefinition{definition}{Definition}
 \newdefinition{remark}{Remark}
 \newdefinition{example}{Example}
\journal{Journal of Logical and Algebraic Methods in Programming}
\begin{document}

\begin{frontmatter}

\title{On the Metric-based Approximate Minimization \\ of Markov Chains\tnoteref{mytitlenote}\tnoteref{exnote}}
\tnotetext[exnote]{This article in an extended version of a paper accepted for publication at ICALP 2017~\cite{BacciLM:icalp17}. The current version provides proofs omitted in the original paper, additional examples, and a revised section on experimental results. Remarkably, we revise the proof of~\cite[Theorem 14]{BacciLM:icalp17} which contains a flaw.}
\tnotetext[mytitlenote]{Work supported by the EU 7th Framework Programme (FP7/2007-13)
under Grants Agreement nr.318490 (SENSATION), nr.601148 (CASSTING), the 
Sino-Danish Basic Research Center IDEA4CPS funded by Danish National Research 
Foundation and National Science Foundation China, the ASAP Project (4181-00360) funded by the Danish Council for Independent Research, the ERC Advanced Grant LASSO, and the Innovation Fund Denmark center DiCyPS.}

%% Group authors per affiliation:

\author[aau]{Giovanni Bacci\corref{cor}}
\ead{giovbacci@cs.aau.dk}
\author[aau]{Giorgio Bacci}
\ead{grbacci@cs.aau.dk}
\author[aau]{Kim G. Larsen}
\ead{kgl@cs.aau.dk}
\author[aau]{Radu Mardare}
\ead{mardare@cs.aau.dk}

\address[aau]{Department of Computer Science, Aalborg University, Denmark}
\cortext[cor]{Corresponding author}

%%% or include affiliations in footnotes:
%\author[mymainaddress,mysecondaryaddress]{Elsevier Inc}
%\ead[url]{www.elsevier.com}
%
%\author[mysecondaryaddress]{Global Customer Service\corref{mycorrespondingauthor}}
%
%\ead{support@elsevier.com}
%
%\address[mymainaddress]{1600 John F Kennedy Boulevard, Philadelphia}
%\address[mysecondaryaddress]{360 Park Avenue South, New York}

\begin{abstract}
In this paper we address the approximate minimization problem of Markov Chains (MCs) from a behavioral metric-based perspective. Specifically, given a finite MC and a positive integer $k$, we are looking for an MC with at most $k$ states having \emph{minimal} distance to the original. The metric considered in this work is the bisimilarity distance of Desharnais et al.. For this metric we show that (1) optimal approximations always exist; (2) the problem has a bilinear program characterization; and (3) prove that its threshold problem is in PSPACE and NP-hard. 

In addition to the bilinear program solution, we present an approach inspired by expectation maximization techniques for computing suboptimal solutions to the problem. 
Experiments suggest that our method gives a practical approach that outperforms the bilinear 
program implementation run on state-of-the-art bilinear solvers.
\end{abstract}

\begin{keyword}
Behavioral Distances \sep Probabilistic Models \sep Automata Minimization
\end{keyword}

\end{frontmatter}

%\linenumbers

\section{Introduction}
\label{sec:intro}

Minimization of finite automata, i.e., the process of transforming a given finite automaton into an equivalent one with minimum number of states, has been a major subject since the 1950s 
due to its fundamental importance for any implementation of finite automata tools.

The first algorithm for the minimization of deterministic finite automata (DFAs) is due to Moore~\cite{Moore56},
with time complexity $O(n^2s)$, later improved by the now classical Hopcroft's algorithm~\cite{Hopcroft71} to $O(ns \log n)$, where $n$ is the number of states and $s$ the size of the alphabet.
Their algorithms are based on a partition refinement of the states into equivalence classes of the \emph{Myhill-Nerode equivalence relation}.
Partition refinement has been employed in the definition of efficient minimization procedures for a wide variety of automata: by Kanellakis and Smolka~\cite{KanellakisS83,KanellakisS90} for the minimization of labelled transition systems (LTSs) w.r.t.\ Milner's strong bisimulation~\cite{Milner80}; by Baier~\cite{Baier96} for the reduction of Markov Chains (MCs) w.r.t.\ Larsen and Skou's probabilistic bisimulation~\cite{LarsenS91}; by Alur et al.~\cite{AlurCHDW92} and by Yannakakis and Lee~\cite{Yannakakis97}, respectively, for the minimization of timed transition systems and timed-automata. This technique was used also in parallel and distributed implementations of 
the above algorithms~\cite{ZhangS92,Blom2005}, and in the online reachability analysis of transition systems~\cite{LeeY92}. 

In~\cite{JouS90}, Jou and Smolka observed that for reasoning about the behavior of probabilistic systems (and more in general, all type of quantitative systems), rather than equivalences, a notion of distance is more reasonable in practice, since it permits ``\textit{a shift in attention from equivalent processes to probabilistically similar processes}''. 
This observation motivated the development of metric-based semantics for quantitative systems,
that consists in proposing $1$-bounded pseudometrics capturing the similarities of the behaviors
in the presence of small variations of the quantitative data. These pseudometrics generalize behavioral equivalences in the sense that, two processes are at distance $0$ iff they are equivalent, and at distance $1$ if no significant similarities can be observed between them.

The first proposal of a behavioral pseudometric is due to Desharnais et al.~\cite{DesharnaisGJP99} 
on labelled MCs, a.k.a.\ \emph{probabilistic bisimilarity distance}, with the property that two MCs are at distance $0$ iff they are probabilistic bisimilar.
Its definition is parametric on a discount factor $\lambda \in (0,1]$ that controls the significance of the future steps in the measurement.
This pseudometric has been greatly studied by van Breugel and Worrell \cite{BreugelW:icalp01,BreugelW06} who noticed, among other notable results, its relation with the Kantorovich distance on probability distributions and provided a polynomial-time algorithm for its computation~\cite{ChenBW12}.

The introduction of metric-based semantics motivated the interest in the approximate minimization
of quantitative systems. The goal of approximate minimization 
is to start from a minimal automaton and produce a smaller automaton that is close to the given
one in a certain sense. The desired size of the approximating automaton is given as input. 
Inspired by the aggregation of equivalent states typical of partition refinement techniques, 
in~\cite{FernsPP04}, the approximate minimization problem has been approached by aggregating  
states having relative smaller distance. An example of this approach on MCs using the bisimilarity distance of Desharnais et al.\ is shown in Figure~\ref{fig:aggregation}.
\begin{figure}[h]
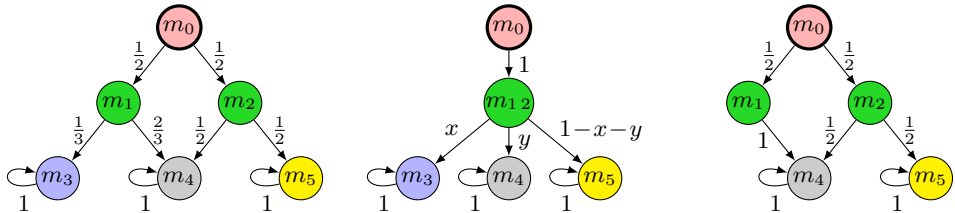

\def\h{0.8cm}
\def\v{1cm}
\begin{align*}
%% Original Markov chains
\tikz[labels, baseline={(current bounding box.center)},
label/.append style={inner sep=0.4mm}, every node/.append style={font=\small},
n1/.style={fill=red!30}, n2/.style={fill=blue!30}, n3/.style={fill=green!70!gray},
n4/.style={fill=yellow}, n5/.style={fill=gray!40}]{ 
  \draw (0,0) node[label, initial, n1] (m0) {$m_0$}; 
  \draw ($(m0)+(down:\v)+(left:\h)$) node[label, n3] (m1) {$m_1$};
  \draw ($(m0)+(down:\v)+(right:\h)$) node[label, n3] (m2) {$m_2$};
  \draw ($(m1)+(down:\v)+(left:\h)$) node[label, n2] (m3) {$m_3$};
  \draw ($(m2)+(down:\v)+(right:\h)$) node[label, n4] (m5) {$m_5$};
  \draw ($(m3)!0.5!(m5)$) node[label, n5] (m4) {$m_4$};
  \path[-latex, font=\small]
    (m0) edge node[left, near start] {$\frac{1}{2}$} (m1)
             edge node[right, near start] {$\frac{1}{2}$} (m2)
    (m1) edge node[left, near start] {$\frac{1}{3}$} (m3)
             edge node[right, near start] {$\frac{2}{3}$} (m4)
    (m2) edge node[left, near start] {$\frac{1}{2}$} (m4)
             edge node[right, near start] {$\frac{1}{2}$} (m5)
    (m3) edge[loop left] node[very near start, below] {$1$} (m3)
    (m4) edge[loop left] node[very near start, below] {$1$} (m4)
    (m5) edge[loop left] node[very near start, below] {$1$} (m5)
    ;
}
&&
%% Approximated aggregation
\tikz[labels, baseline={(current bounding box.center)},
label/.append style={inner sep=0.4mm}, every node/.append style={font=\small},
n1/.style={fill=red!30}, n2/.style={fill=blue!30}, n3/.style={fill=green!70!gray},
n4/.style={fill=yellow},n5/.style={fill=gray!40}]{ 
  \draw (0,0) node[label, initial, n1] (m0) {$m_0$}; 
  \draw ($(m0)+(down:\v)+(left:0.5*\h)$) coordinate (m1);
  \draw ($(m0)+(down:\v)+(right:0.5*\h)$) coordinate (m2);
  \draw ($(m0)+(down:\v)$) node[label, n3, inner sep=0.1mm] (n2) {$m_{1\,2}$};
  \draw ($(m1)+(down:\v)+(left:\h)$) node[label, n2] (m3) {$m_3$};
  \draw ($(m2)+(down:\v)+(right:\h)$) node[label, n4] (m5) {$m_5$};
  \draw ($(m3)!0.5!(m5)$) node[label, n5] (m4) {$m_4$};
  \path[-latex, font=\small]
    (m0) edge node[right] {$1$} (n2)
    (n2) edge node[left, near start, xshift=-1mm] {$x$} (m3)
             edge node[right] {$y$} (m4)
             edge node[right, near start, xshift=1mm] {$1\!-\!x\!-\!y$} (m5)
    (m3) edge[loop left] node[very near start, below] {$1$} (m3)
    (m4) edge[loop left] node[very near start, below] {$1$} (m4)
    (m5) edge[loop left] node[very near start, below] {$1$} (m5)
    ;
}
&&
%% Optimal solution
\tikz[labels, baseline={(current bounding box.center)},
label/.append style={inner sep=0.4mm}, every node/.append style={font=\small},
n1/.style={fill=red!30}, n2/.style={fill=blue!30}, n3/.style={fill=green!70!gray},
n4/.style={fill=yellow}, n5/.style={fill=gray!40}]{ 
  \draw (0,0) node[label, initial, n1] (m0) {$m_0$}; 
  \draw ($(m0)+(down:\v)+(left:\h)$) node[label, n3] (m1) {$m_1$};
  \draw ($(m0)+(down:\v)+(right:\h)$) node[label, n3] (m2) {$m_2$};
  \draw ($(m1)+(down:\v)+(left:\h)$) coordinate (m3);
  \draw ($(m2)+(down:\v)+(right:\h)$) node[label, n4] (m5) {$m_5$};
  \draw ($(m3)!0.5!(m5)$) node[label, n5] (m4) {$m_4$};
  \path[-latex, font=\small]
    (m0) edge node[left, near start] {$\frac{1}{2}$} (m1)
             edge node[right, near start] {$\frac{1}{2}$} (m2)
    (m1) edge node[left] {$1$} (m4)
    (m2) edge node[left, near start] {$\frac{1}{2}$} (m4)
             edge node[right, near start] {$\frac{1}{2}$} (m5)
    (m4) edge[loop left] node[very near start, below] {$1$} (m4)
    (m5) edge[loop left] node[very near start, below] {$1$} (m5)
    ;
}
\end{align*}
\caption{An MC $\M$ (left) with initial state $m_0$; generic $5$-states approximant of $\M$ aggregating $m_1$ and $m_2$ (for $x,y \in [0,1]$ such that $x+y \leq 1$) (middle); an optimal $5$-states approximant of $\M$ (right). Labels are represented by different colors.}
\label{fig:aggregation}
\end{figure}
Let $\M$ be the MC on the left and assume we want to approximate it by an MC 
with at most $5$ states. 
Since $m_1,m_2$ are the only two states at distance less than $1$, the most natural choice for an aggregation shall collapse $m_1$ and $m_2$, obtaining an instance $\N_{x,y}$ of the MC in the middle for some $x,y \geq 0$ such that $x + y \leq 1$. Any MC constructed in this way will not be closer than $\frac{1}{4}$ from $\M$. 
However, the MC on the right is a closer approximant of $\M$, at distance 
$\frac{1}{6}$ from it, showing that the approximate aggregation of states does not necessarily yield the closest optimal solution. 

In this paper we address the issue of finding \emph{optimal} solutions to the approximate minimization problem.
Specifically we aim to answer to the following problem, left open in~\cite{FernsPP04}:
``\textit{given a finite MC and a positive integer $k$, what is its `best' $k$-state approximant? Here by `best' we mean a $k$-state MC at minimal distance to the original}''.
We refer to this problem as \emph{Closest Bounded Approximant} (CBA) and we present the following results 
related to it.
\begin{enumerate}[topsep=0pt, noitemsep, fullwidth, itemindent=\parindent]
  \item We characterize CBA as a bilinear optimization problem, proving the  
  existence of \emph{optimal} solutions. As a consequence of this result, approximations of 
  optimal solutions can be obtained by checking the feasibility of 
  bilinear matrix inequalities (BMIs)~\cite{KocvaraS03,PENBMI}.
    
  \item We provide upper- and lower-bound complexity results for the threshold problem of CBA, 
  called \emph{Bounded Approximant} problem (BA), that asks whether there exists a $k$-state 
  approximant with distance from the original MC bounded by a given rational threshold.  
  We show that BA is in PSPACE and NP-hard. %As a corollary we obtain NP-hardness for CBA.
  
  \item We introduce the \emph{Minimum Significant Approximant Bound} (MSAB) problem, that asks
  what is the minimum size $k$ for an approximant to have some significant 
  similarity to the original MC (i.e., at distance strictly less than $1$). 
  We show that this problem is NP-complete when one considers the undiscounted bisimilarity distance.
  
  \item Finally, we present an algorithm for finding suboptimal solutions of CBA that is inspired by 
  Expectation Maximization (EM) techniques~\cite{McLachlanK08,BenediktLW13}. 
%  By starting from an initial $k$-state approximant, we iteratively improve the current suboptimal 
%  solution by assigning relatively greater probability to transitions that are most representative of the 
%  overall behavior of the original model. 
%  We implement two heuristics for the update of the approximant and compare them on some
%  benchmarks. 
  Experiments suggest that our method gives a practical approach that outperforms the
  bilinear program implementation ---state-of-the-art bilinear solvers~\cite{PENBMI} 
  fail to handle MCs with more than 5 states!

\end{enumerate} 

\paragraph*{Related Work}
In~\cite{FranceschinisM94}, the approximate minimization of MCs is addressed via the 
notion of \emph{quasi-lumpability}. An MC is quasi-lumpable if the given aggregations of the states 
can be turned into actual bisimulation-classes 
by a small perturbation of the transition probabilities. This approach differs from ours 
since there is no relation to a proper notion of behavioral distance (the approximation 
is w.r.t.\ the supremum norm of the difference of the stochastic matrices) and we do not consider any 
approximate aggregation of states.
In~\cite{BallePP15}, Balle et al.\ consider the approximate minimization of weighted finite automata (WFAs).
Their method is via a truncation of a canonical normal form for WFAs that they introduced for the SVD 
decomposition of infinite Hankel matrices.
Both~\cite{FranceschinisM94} and \cite{BallePP15} do not consider the issue of finding the \emph{closest}
approximant, which is the main focus of this paper, instead they give upper bounds on the distance 
from the given model.

\paragraph*{Synopsis} Section~\ref{sec:MC} introduces some notation and definitions used in the paper, concluding with a useful characterization of the bisimilarity distance. Section~\ref{sec:OBA} formalizes the $\CBA$ problem and characterizes it as a bilinear optimization problem. Sections~\ref{sec:comeplexityAM} and~\ref{sec:MSAB} introduce two decision problems closely related with $\CBA$, namely $\BA$ and $\MSAB$, and study their computational complexity. Sections~\ref{sec:EM} and~\ref{sec:experiments} present two efficient heuristics for computing sub-optimal solutions for $\CBA$ and discuss about their performances and limitations. We conclude with Section~\ref{sec:conclusion} discussing possible applications of our results and interesting ideas for future work. 

\section{Markov Chains and Bisimilarity Pseudometric} \label{sec:MC}
In this section we introduce the notation and recall the definitions of (discrete-time) \emph{Markov chains} (MCs), \emph{probabilistic bisimilarity} of Larsen and Skou~\cite{LarsenS91}, and the \emph{probabilistic bisimilarity pseudometric} of Desharnais et al.~\cite{DesharnaisGJP04}.

For $R \subseteq X \times X$ an equivalence relation, $X /_R$ denotes its quotient set and $[x]_R$ denotes the $R$-equivalence class of $x \in X$. 

We denote by $\D(X)$ the set of discrete probability distributions on $X$, i.e., functions $\mu \colon X \to [0,1]$, such that $\mu(X) = 1$, where $\mu(E) = \sum_{x \in E} \mu(x)$ for $E \subseteq X$. For $x \in X$, the Dirac distribution concentrated at $x$ is the function $1_x \colon X \to [0,1]$ defined by $1_x(y) = 1$ if $x = y$, $0$ otherwise. 

For $a_1, \dots, a_n \in [0,1]$ such that $\sum_{i = 0}^n a_i = 1$, we denote by $\sum_{i = 1}^n a_i \cdot \mu_i $ the convex combination of $\mu_1, \dots, \mu_n \in \D(X)$, defined as
\begin{equation*}
\textstyle (\sum_{i = 1}^n a_i \cdot \mu_i)(x) = \sum_{i = 1}^n a_i \cdot \mu_i(x) \,.
\end{equation*}
The support of a probability distribution $\mu \in \D(X)$ is defined by 
\begin{equation*}
\mathit{support}(\mu) = \set{x \in X}{\mu(x) > 0} \,.
\end{equation*}
For example, given $x,y \in X$, $\mathit{support}(\frac{1}{3} \cdot 1_x + \frac{2}{3} \cdot 1_y) = \{x,y\}$.

In what follows we fix a countable set $L$ of labels. 

\begin{definition}[{Markov Chain}] \label{def:MC}
A \emph{Markov chain} is a tuple $\M = (M, \tau, \ell)$ consisting 
of a finite nonempty \emph{set of states} $M$, a \emph{transition distribution function}
$\tau \colon M \to \D(M)$, and a \emph{labelling function} $\ell \colon M \to L$.
\end{definition}
Intuitively, if $\M$ is in state $m$ it moves to state $m'$ with probability $\tau(m)(m')$. Labels represent atomic properties that hold in certain states. The set of labels of $\M$ is denoted by $L(\M) = \set{\ell(m)}{ m \in M }$.
Hereafter, we use $\M = (M, \tau, \ell)$ and $\N = (N, \theta, \alpha)$ to range over MCs and we refer to their constituents implicitly. 

\begin{definition}[Probabilistic Bisimulation~\cite{LarsenS91}] \label{def:pbisim}
An equivalence relation 
$R \subseteq M \times M$ is a \emph{probabilistic bisimulation} on $\mathcal{M}$ if
whenever $m \mathrel{R} n$, then 
\begin{enumerate}[topsep=0.5ex, noitemsep]
  \item $\ell(m) = \ell(n)$, and
  \item for all $C \in M/_R$, $\tau(m)(C) = \tau(n)(C)$.
\end{enumerate}
Two states $m,n \in M$ are \emph{probabilistic bisimilar w.r.t.\ $\M$}, written $m \sim_\M n$ if they are related by some probabilistic bisimulation on $\M$. In fact, probabilistic bisimilarity is the greatest probabilistic bisimulation. 
\end{definition}
%Intuitively, two states are bisimilar if they have the same label and equal probability of moving to any bisimilarity class.

Any bisimulation $R$ on $\M$ induces a quotient construction, the \emph{$R$-quotient of $\M$}, denoted 
$\M/_R = (M/_R, \tau/_R, \ell/_R)$, having $R$-equivalence classes as states, transition function 
$\tau/_R([m]_R)([n]_R) = \sum_{u \in [n]_R} \tau(m)(u)$, and labelling function $\ell/_R([m]_R) = \ell(m)$.
%This quotient construction plays a fundamental r\^{o}le in the state space reduction problem.
An MC $\M$ is said \emph{minimal} if it is isomorphic to its quotient w.r.t.\ 
probabilistic bisimilarity.

A $1$-bounded \emph{pseudometric} on $X$ is a function $d \colon X \times X \to [0,1]$ such that, for any $x,y,z \in X$, $d(x,x) = 0$, $d(x,y) = d(y,x)$, and $d(x,y) + d(y,z) \geq d(x,z)$. 
$1$-bounded pseudometrics on $X$ forms a complete lattice under the point-wise partial order $d \sqsubseteq d'$ iff, for all $x,y \in X$, $d(x,y) \leq d'(x,y)$.

A pseudometric is said to lift an equivalence relation if it enjoys the property that two points are at distance zero iff they are related by the equivalence.
A lifting for the probabilistic bisimilarity is provided by the \emph{bisimilarity distance} of Desharnais et al.~\cite{DesharnaisGJP04}. 
Its definition is based on the \emph{Kantorovich (pseudo)metric} on probability distributions over a finite set $X$, defined as 
\begin{equation}
\K[d]{\mu,\nu} = \min \set{\textstyle \sum_{x,y \in X} d(x,y) \cdot \omega(x,y)}{ \omega \in \coupling{\mu}{\nu}} \,,
\label{eq:Kantorivichdef}
\end{equation}
where $d$ is a (pseudo)metric on $X$ and $\coupling{\mu}{\nu}$ denotes the set of \emph{couplings} for $(\mu, \nu)$, i.e., distributions $\omega \in \D(X \times X)$ such that, 
\begin{align*}
&\textstyle \sum_{x \in X} \omega(x,y) = \nu(y) && \text{for all $y \in X$} \\
&\textstyle \sum_{y \in X} \omega(x,y) = \mu(x) && \text{for all $x \in X$.}
\end{align*}
The above condition can be equivalently stated as for all $E \subseteq X$, $\omega(E \times X) = \mu(E)$ and $\omega(X \times E) = \nu(E)$, and $\mu$ (resp.\ $\nu$) will be called left (resp.\ right) marginal of $\omega$.
\begin{remark}[Kantorovich as a Transportation Problem]
The Kantorovich metric has an intuitive interpretation as the solution of an optimization problem
usually referred to as (homogeneous) \emph{mass transportation problem}. Assume you are given 
a pile of sand and a hole we have to completely fill up with the sand. Obviously, the pile and the hole must
have the same volume. Both the pile and the hole are modeled by probability measures $\mu,\nu \in \D(X)$, with $\mu(E)$ giving a measure of how much sand is located in the pile in location $E \subseteq X$ and,
$\nu(F)$ how much sand can be piled in $F \subseteq X$. Moving the sand around should be done by
minimizing the traveling distance. In this respect a coupling can be interpreted as a transportation plan (or schedule).

A convenient way for visualizing a coupling $\omega \in \D(X \times X)$ for $(\mu,\nu)$ is by means of the so called \emph{transportation table}, with coordinates given by the supports of the measures 
$\mu$ and $\nu$, and cells containing the value $\omega(x,y)$ at coordinate $(x,y)$ (for 
convenience, when $\omega(x,y) = 0$ the cell is left blank). In this representation the condition of $\omega$ of being a coupling can be easily checked by summing up the values of the cell in each row (resp., column) and checking that it equals the value of the corresponding marginal.

To make explicit the cost of the transportation from
$x$ to $y$, the cell $(x,y)$ is further decorated with the distance $d(x,y)$, displayed in the top-left corner of the cell.

As an example, consider the MC $\M = (M, \tau, \ell)$ in Figure~\ref{fig:aggregation} and the transition probabilities 
$\tau(m_1) = \frac{1}{3} \cdot 1_{m_3} + \frac{2}{3} \cdot 1_{m_4}$ and $\tau(m_2) = \frac{1}{2} \cdot 1_{m_4} + \frac{1}{2} \cdot 1_{m_5}$. The coupling $\omega \in \coupling{\tau(m_1)}{\tau(m_2)}$ defined as
\begin{equation*}
\omega = \frac{1}{2} \cdot 1_{(m_4,m_4)} + \frac{1}{3} \cdot 1_{(m_3,m_5)} + \frac{1}{6} \cdot 1_{(m_4,m_5)}
\end{equation*}
is equivalently represented as a transportation table as follows: 
\begin{center}
\begin{tikzpicture}[baseline=(current bounding box.center), labels]
\def\h{1cm}
\def\v{1cm}
 \draw[step=1cm,gray,thick] (1,-1) grid (3,-3);
 %\draw[step=0.5cm,gray,very thin] (0,0) grid (3,-4);
\draw 
  (0.5,-1.5) node[label,fill=red!30] (n0) {$m_3$}
  ($(n0)+(0,-1)$) node[label,fill=blue!30] (n1) {$m_4$}
  ($(n0)!0.5!(n1)+(-1.2,0)$) node[label,fill=red!30] (n2) {$m_1$}
  
  (1.5,-0.5) node[label,fill=red!30] (m1) {$m_4$}
  ($(m1)+(1,0)$) node[label,fill=green!70!gray] (m4) {$m_5$}
  ($(m1)!0.5!(m4)+(0,1.2)$) node[label,fill=red!30] (m0) {$m_2$}

  (1.15,-1.15) node[font=\small, color=gray] {$1$}
  (2.5,-1.5) node {$\frac{1}{3}$}
  (2.15,-1.15) node[font=\small, color=gray] {$1$}
  (1.5,-2.5) node {$\frac{1}{2}$}
  (1.15,-2.15) node[font=\small, color=gray] {$0$}
  (2.5,-2.5) node {$\frac{1}{6}$}
  (2.15,-2.15) node[font=\small, color=gray] {$1$}
  ;
  
  \path[-latex, font=\small]
  (m0) edge node[left] {$\frac{1}{2}$} (m1)
  	edge node[right] {$\frac{1}{2}$} (m4)
  (n2) edge node[above] {$\frac{1}{3}$} (n0)
  	edge node[below] {$\frac{2}{3}$} (n1)
  ;
\end{tikzpicture}
\end{center}
Here we consider the distance $d \colon M \times M \to [0,1]$ defined as $d(m,n) = 0$ if $m=n$, $1$ otherwise. Note that the above is an optimal coupling in the sense of \eqref{eq:Kantorivichdef}.
\end{remark}

\begin{definition}[Bisimilarity Distance]
Let $\lambda \in (0,1]$. The $\lambda$-discounted \emph{bi\-sim\-i\-larity pseudometric} on $\M$, denoted by $\dist$, is the least fixed-point of the following functional operator on $1$-bounded pseudometrics over $M$ (ordered point-wise)
\begin{equation*}
    \Psi_{\lambda}(d)(m,n) = 
    \begin{cases}
    	1 &\text{if $\ell(m) \neq \ell(n)$} \\
	\lambda \cdot \mathcal{K}(d)(\tau(m),\tau(n)) & \text{otherwise} \,.
    \end{cases}
\end{equation*}
\end{definition}
The operator $\Psi_{\lambda}$ is monotonic with respect to $\sqsubseteq$, hence, by Knaster-Tarski's fixed-point theorem, $\dist$ is well defined.

Intuitively, if two states have different labels $\dist$ considers them as ``incomparable'' (i.e., at distance $1$), otherwise their distance is given by the Kantorovich distance w.r.t.\ $\dist$ between their  
transition distributions.
The \emph{discount factor} $\lambda \in (0,1]$ controls the significance of the future steps in the measurement of the distance; if $\lambda = 1$, the distance is said \emph{undiscounted}.

The  distance $\dist$ has also a characterization based on the notion of coupling structure.
\begin{definition}[Coupling Structure] \label{def:couplstr}
A function $\C \colon M \times M \to \D(M \times M)$ is a \emph{coupling structure} for $\M$ if  
for all $m,n \in M$, $\C(m,n) \in \coupling{\tau(m)}{\tau(n)}$.
\end{definition} 
Intuitively, a coupling structure can be thought of as an MC on the cartesian product $M \times M$, 
obtained as the probabilistic combination of two copies of $\M$. 

Given a coupling structure $\C$ for $\M$ and $\lambda \in (0,1]$, let $\discr{\C}$ be the least fixed-point of the following operator on $[0,1]$-valued functions $d \colon M \times M \to [0,1]$ (ordered point-wise)
\begin{equation*}
    \Gamma^{\C}_{\lambda}(d)(m,n) = 
    \begin{cases}
    	1 &\text{if $\ell(m) \neq \ell(n)$} \\
	\lambda \sum_{u,v \in M} d(u,v) \cdot \C(m,n)(u,v) & \text{otherwise} \,.
    \end{cases}
\end{equation*}
%\textcolor{red}{It is easy to see that, by  fixed-point theorem, $\discr{\C}$ is well defined.}
The function $\discr{\C}$ is called \emph{$\lambda$-discounted discrepancy} of $\C$, and the value
$\discr{\C}(m,n)$ is the \mbox{$\lambda$-discounted} probability of hitting from $(m,n)$ a pair of states with different labels in $\C$. 
\begin{theorem}[Minimal coupling criterion~\cite{ChenBW12}] \label{th:mincoupling}
For arbitrary MCs $\M$ and discount factors $\lambda \in (0,1]$,
$\dist = \min \set{\discr{\C}}{ \text{$\C$ coupling structure for $\M$} }$.
\end{theorem}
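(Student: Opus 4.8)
The plan is to establish the two inequalities underlying the claimed equality separately: first that $\dist \sqsubseteq \discr{\C}$ holds for \emph{every} coupling structure $\C$, so that $\dist$ is a lower bound for the whole family $\set{\discr{\C}}{\text{$\C$ coupling structure}}$; and second that this lower bound is actually attained, by exhibiting a distinguished coupling structure $\C^{*}$ with $\discr{\C^{*}} = \dist$. Together these give $\dist = \min_{\C} \discr{\C}$ and simultaneously witness that the minimum exists.

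For the first (easy) inequality, the key observation is that for any fixed $1$-bounded function $d$ the two operators compare pointwise as $\Psi_{\lambda}(d) \sqsubseteq \Gamma^{\C}_{\lambda}(d)$. Indeed, on pairs with distinct labels both operators return $1$, while on pairs $(m,n)$ with $\ell(m)=\ell(n)$ we have $\Psi_{\lambda}(d)(m,n) = \lambda \cdot \mathcal{K}(d)(\tau(m),\tau(n))$, which by definition \eqref{eq:Kantorivichdef} is the \emph{minimum} of $\lambda \sum_{u,v} d(u,v)\,\omega(u,v)$ over all couplings $\omega \in \coupling{\tau(m)}{\tau(n)}$; since $\C(m,n)$ is one such coupling, this minimum is at most $\Gamma^{\C}_{\lambda}(d)(m,n)$. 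Instantiating this at $d = \discr{\C}$, which is a fixed point of $\Gamma^{\C}_{\lambda}$, yields $\Psi_{\lambda}(\discr{\C}) \sqsubseteq \Gamma^{\C}_{\lambda}(\discr{\C}) = \discr{\C}$, so $\discr{\C}$ is a pre-fixed point of $\Psi_{\lambda}$. As $\dist$ is the least (pre-)fixed point of $\Psi_{\lambda}$ by Knaster--Tarski, I conclude $\dist \sqsubseteq \discr{\C}$.

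For the second inequality I would build $\C^{*}$ by making an optimal choice at every pair: for each $(m,n)$ pick $\C^{*}(m,n) \in \coupling{\tau(m)}{\tau(n)}$ to be a coupling attaining the Kantorovich optimum with respect to $\dist$, i.e.\ $\sum_{u,v} \dist(u,v)\,\C^{*}(m,n)(u,v) = \mathcal{K}(\dist)(\tau(m),\tau(n))$. Such an optimal coupling exists because $\coupling{\tau(m)}{\tau(n)}$ is a nonempty compact polytope and the objective is linear, so the infimum in \eqref{eq:Kantorivichdef} is attained. One then checks that $\dist$ is a fixed point of $\Gamma^{\C^{*}}_{\lambda}$: on label-mismatched pairs both sides equal $1$ (using that $\dist$, as a fixed point of $\Psi_{\lambda}$, equals $1$ there), and on label-matched pairs the defining sum evaluates by construction to $\lambda \cdot \mathcal{K}(\dist)(\tau(m),\tau(n)) = \Psi_{\lambda}(\dist)(m,n) = \dist(m,n)$. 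Since $\discr{\C^{*}}$ is the \emph{least} fixed point of $\Gamma^{\C^{*}}_{\lambda}$ and $\dist$ is \emph{a} fixed point, $\discr{\C^{*}} \sqsubseteq \dist$; combined with the first part this forces $\discr{\C^{*}} = \dist$.

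The routine parts are the case analyses and the appeals to Knaster--Tarski; the one genuinely load-bearing step is the existence of the optimal coupling $\C^{*}(m,n)$, which rests on compactness of the transportation polytope together with continuity of the linear cost functional. Because $M$ is finite this is immediate (the minimum of a continuous function on a compact set), so no measurable-selection subtlety arises; the only point to be careful about is that the \emph{same} $\dist$ is used both as the metric inside the Kantorovich functional and as the candidate fixed point, which is exactly what makes $\Psi_{\lambda}$ and $\Gamma^{\C^{*}}_{\lambda}$ agree at $\dist$.
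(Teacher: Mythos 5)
Your proof is correct, but there is nothing in the paper to compare it against: Theorem~\ref{th:mincoupling} is stated as an imported result from~\cite{ChenBW12}, and the paper gives no proof of it. Your two-inequality argument --- (i) $\Psi_{\lambda}(d) \sqsubseteq \Gamma^{\C}_{\lambda}(d)$ pointwise, because the Kantorovich functional in \eqref{eq:Kantorivichdef} minimizes over all couplings while $\Gamma^{\C}_{\lambda}$ uses the particular coupling $\C(m,n)$, so every $\discr{\C}$ is a pre-fixed point of $\Psi_{\lambda}$ and hence above $\dist$; and (ii) assembling $\C^{*}$ from pointwise Kantorovich-optimal couplings (which exist since each transportation polytope $\coupling{\tau(m)}{\tau(n)}$ is a nonempty compact polytope and the cost is linear) makes $\dist$ a fixed point of $\Gamma^{\C^{*}}_{\lambda}$, hence $\discr{\C^{*}} \sqsubseteq \dist$ --- is exactly the standard proof from that reference, and it simultaneously establishes that the minimum is attained. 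One technicality deserves a flag: the paper defines $\Psi_{\lambda}$ as an operator on the complete lattice of $1$-bounded \emph{pseudometrics}, whereas $\discr{\C}$ is in general not a pseudometric (it need not be symmetric, since $\C(m,n)$ and $\C(n,m)$ are unrelated choices, and $\discr{\C}(m,m)$ can be positive when $\C(m,m)$ puts mass off the diagonal), so your step (i) does not literally take place in that lattice. The routine repair, which is the setting actually used in~\cite{ChenBW12}, is to read $\Psi_{\lambda}$ as a monotone operator on the complete lattice of all $1$-bounded functions $M \times M \to [0,1]$; its least fixed point there coincides with $\dist$, because the iterates from the constant-zero function are pseudometrics (the Kantorovich lifting preserves pseudometrics, and pointwise suprema of chains of pseudometrics are pseudometrics), after which your Knaster--Tarski appeal is exactly right. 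This is a one-line adjustment, not a gap in the mathematics.
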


As originally noted in~\cite[Lemma 10]{ChenBW12}, the (undiscounted) discrepancy can be used to bound the variational distance between trace distributions. The following lemma generalizes this result for arbitrary discount values. In the lemma we use $\Pr[\M,m]{A}$ to denote the probability that a run of the Markov chain $\M$ starting in state $m$ is in the set $A \subseteq L^\omega$. For a formal definition of $\Pr[\M,m]{A}$ and a definition of measurable subset of the set $L^\omega$ of infinite sequences over $L$, we refer the reader to, e.g.,~\cite[Chapter 10]{BaierK08}.

For $\lambda \in (0,1]$ we use $\M_\lambda$ to denote the MC $(M \uplus \{\bot \},\tau_\lambda, \ell_\lambda)$ obtained by adding to the MC $\M$ a `sink' state $\bot$ to which all states in $\M$ go with probability $(1-\lambda)$, that is, for all $m \in M$, $\tau_\lambda(m) = (1 - \lambda) 1_\bot + \lambda \tau(m)$, and $\tau_\lambda(\bot) = 1_\bot$. The `sink' state $\bot$ has label different from all other states, that is, for all $m \in M$, $\ell_\lambda(m) = \ell(m)$, and $\ell_\lambda(\bot) \neq \ell_\lambda(m)$.

\begin{lemma}\label{lem:disc-undisc}
Let $\C$ be a coupling structure for the MC $\M$. Then, for any measurable set $A \subseteq L^\omega$, discount factor $\lambda \in (0,1]$, and $m,n \in M$, $| \Pr[\M_\lambda,m]{A} - \Pr[\M_\lambda,n]{A} | \leq \discr{\C}(m,n)$. 
\end{lemma}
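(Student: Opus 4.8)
The plan is to realize $\discr{\C}(m,n)$ as the probability that a single \emph{coupled} execution of two copies of $\M_\lambda$ eventually disagrees on a label, and then to invoke the coupling inequality. First I would lift $\C$ to a coupling structure $\C_\lambda$ for the sink-augmented chain $\M_\lambda$ by setting, for $m,n \in M$,
\begin{equation*}
\C_\lambda(m,n) = (1-\lambda)\cdot 1_{(\bot,\bot)} + \lambda \cdot \C(m,n),
\end{equation*}
together with $\C_\lambda(\bot,\bot) = 1_{(\bot,\bot)}$ (the remaining mixed pairs are unreachable from $M \times M$ and may be filled with the product coupling). A direct marginal computation gives $\C_\lambda(m,n) \in \coupling{\tau_\lambda(m)}{\tau_\lambda(n)}$, so $\C_\lambda$ is a coupling structure. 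The crucial structural feature is that, started from a pair in $M \times M$, the two coordinates move to $\bot$ \emph{simultaneously} (with probability $1-\lambda$) or jointly follow $\C$ inside $M \times M$ (with probability $\lambda$); hence along every trajectory the two coordinates are always either both in $M$ or both equal to $\bot$.

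Next I would read $\C_\lambda$ as a Markov chain on $(M \uplus \{\bot\})^2$ and let $\Omega_{m,n}$ be the probability measure it induces on infinite trajectories started in $(m,n)$. Projecting onto the first (resp.\ second) coordinate yields, by the marginal property of $\C_\lambda$ and a routine induction on cylinder sets, exactly the run measure of $\M_\lambda$ from $m$ (resp.\ from $n$); composing with the labelling $\ell_\lambda$ therefore makes $\Omega_{m,n}$ a coupling of the trace distributions $\Pr[\M_\lambda,m]{\cdot}$ and $\Pr[\M_\lambda,n]{\cdot}$ on $L^\omega$. Let $E$ be the event that the two induced label sequences differ. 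By the structural feature above, labels can disagree only while both coordinates sit in $M$, so $E$ coincides with the event that the trajectory visits a ``discrepancy pair'' $(u,v) \in M \times M$ with $\ell(u) \neq \ell(v)$. The coupling inequality then gives, for every measurable $A \subseteq L^\omega$,
\begin{equation*}
| \Pr[\M_\lambda,m]{A} - \Pr[\M_\lambda,n]{A} | \leq \Omega_{m,n}(E),
\end{equation*}
since on the complement of $E$ the two traces are equal and hence contribute identically to $A$.

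It remains to identify $\Omega_{m,n}(E)$ with $\discr{\C}(m,n)$. Writing $h(m,n) = \Omega_{m,n}(E)$, a first-transition analysis of the absorbing chain $\C_\lambda$ shows $h(m,n) = 1$ whenever $\ell(m) \neq \ell(n)$, that $h(\bot,\bot) = 0$, and that $h(m,n) = \lambda \sum_{u,v \in M} \C(m,n)(u,v)\, h(u,v)$ when $\ell(m) = \ell(n)$. This is precisely the system defining $\Gamma^{\C}_\lambda$; since hitting probabilities are the \emph{least} solution of their first-transition equations, the restriction of $h$ to $M \times M$ is the least fixed point of $\Gamma^{\C}_\lambda$, i.e.\ $h = \discr{\C}$. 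Substituting this into the displayed inequality closes the argument.

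I expect the main obstacle to be this last identification: one must argue that the hitting probability $\Omega_{m,n}(E)$ is the \emph{least} fixed point of $\Gamma^{\C}_\lambda$ and not merely \emph{a} fixed point (the discrepancy arises exactly on the recurrent, label-preserving pairs that never reach $\bot$, where for $\lambda = 1$ several fixed points coexist), together with the measure-theoretic bookkeeping certifying that $\Omega_{m,n}$ genuinely projects onto the two run measures on the cylinder $\sigma$-algebra of $L^\omega$. Both points are routine but are where the proof must be careful.
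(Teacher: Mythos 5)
Your proof is correct, and it starts from exactly the paper's key construction: the lifted coupling structure $\C_\lambda$ for the sink-augmented chain $\M_\lambda$ (your choice of product couplings on the unreachable mixed pairs is immaterial, and in fact cleaner than the paper's explicit formulas for those pairs). The two arguments diverge after that point. The paper treats the probabilistic content as a black box: it cites~\cite[Lemma 10]{ChenBW12} applied to $\C_\lambda$, obtaining $| \Pr[\M_\lambda,m]{A} - \Pr[\M_\lambda,n]{A} | \leq \discr[1]{\C_\lambda}(m,n)$, and then finishes with a one-sided fixed-point comparison: it extends $\discr{\C}$ to the augmented state space by $d(\bot,\bot) = 0$ and $d(m,\bot) = d(\bot,n) = 1$, verifies that $d$ is a fixed point of $\Gamma^{\C_\lambda}_{1}$, and concludes $\discr[1]{\C_\lambda} \sqsubseteq d$ because $\discr[1]{\C_\lambda}$ is the \emph{least} fixed point. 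You instead unpack that cited lemma: you re-derive the coupling inequality from first principles (the path measure of $\C_\lambda$ projects onto the two run measures of $\M_\lambda$, so the probability that the coupled label sequences disagree bounds every $| \Pr[\M_\lambda,m]{A} - \Pr[\M_\lambda,n]{A} |$), and you then identify that disagreement probability \emph{exactly} with $\discr{\C}(m,n)$, invoking the standard fact that hitting probabilities are the least solution of their first-transition equations. What each route buys: the paper's is shorter because the measure-theoretic work is delegated to the citation; yours is self-contained and proves slightly more, namely the equality between $\discr{\C}$ and a hitting probability of the coupled chain, where the paper only needs and only proves the inequality $\discr[1]{\C_\lambda} \sqsubseteq \discr{\C}$ on $M \times M$ (this equality is, incidentally, the informal reading of the discrepancy stated after its definition in Section~\ref{sec:MC}). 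The obstacle you flag --- leastness when $\lambda = 1$, where $\Gamma^{\C}_{1}$ may admit several fixed points --- is indeed the crux in both proofs; the paper discharges it via Knaster-Tarski leastness of $\discr[1]{\C_\lambda}$, and your appeal to minimality of hitting probabilities is the probabilistic counterpart of the same fact, so your argument goes through.
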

\begin{proof}
Let define $\C_\lambda \colon (M \uplus \{\bot \})^2 \to \D((M \uplus \{\bot \})^2)$, for arbitrary $m,n \in M$, as
\begin{align*}
&\C_\lambda(m,n) = (1-\lambda) \cdot 1_{(\bot,\bot)} + \lambda \cdot \C(m,n)
&& \C_\lambda(\bot,\bot) = 1_{(\bot,\bot)} \\
&\C_\lambda(m,\bot) = \textstyle \sum_{u \in M} \tau(m)(u) \cdot 1_{(m,\bot)} 
&& \C_\lambda(\bot,n) = \textstyle \sum_{v \in M} \tau(n)(v) \cdot 1_{(\bot,n)} \,.
\end{align*}
One can easily verify that $\C_\lambda$ is a coupling structure for $\M_\lambda$.
By~\cite[Lemma 10]{ChenBW12},
\begin{equation}
| \Pr[\M_\lambda,m]{A} - \Pr[\M_\lambda,n]{A} | \leq \discr[1]{\C_\lambda}(m,n) \,.
\label{eq:tvupper}
\end{equation}
We conclude the proof by showing that for any $m,n \in M$, $\discr[1]{\C_\lambda}(m,n) \leq  \discr{\C}(m,n)$. 

Let $d \colon (M \uplus \{\bot\}) \to [0,1]$ be defined, for $m, n \in M$, as $d(m,n) = \discr{\C}(m,n)$, $d(m,\bot) = d(\bot,n) = 1$, and $d(\bot, \bot) = 0$. We show that $d$ is a fixed point for $\Gamma^{\C_\lambda}_{1}$, that is $\Gamma^{\C_\lambda}_{1}(d)(m,n) = d(m,n)$ for all $m, n \in M \uplus \{\bot\}$. We show only the cases when $m$ and $n$ have the same label ---the other cases are immediate by definition of $\Gamma^{\C_\lambda}_{1}$ and $d$. 
\begin{trivlist}
\item Case $m = \bot$ and $n = \bot$. 
\begin{align*}
\Gamma^{\C_\lambda}_{1}(d)(\bot,\bot) 
&= \textstyle \sum_{u,v \in M \uplus \{\bot\}} d(u,v) \cdot \C_\lambda (\bot,\bot)(u,v) 
	\tag{def.\ $\Gamma^{\C_\lambda}_{1}$} \\
&= d(\bot,\bot) \,. \tag{$\C_\lambda (\bot,\bot) = 1_{(\bot,\bot)}$}
\end{align*}
\item Case $m,n \in M$ and $\ell(m) = \ell(n)$. 
\begin{align*}
\Gamma^{\C_\lambda}_{1}(d)(m,n) 
&= \textstyle \sum_{u,v \in M \uplus \{\bot\}} d(u,v) \cdot \C_\lambda (m,n)(u,v) 
	\tag{def.\ $\Gamma^{\C_\lambda}_{1}$} \\
&= \textstyle \sum_{u,v \in M} d(u,v) \cdot \C_\lambda (m,n)(u,v)  \tag{def.\ $\C_\lambda$ and $d(\bot,\bot) = 0$}\\
&= \lambda \textstyle \sum_{u,v \in M} d(u,v) \cdot \C(m,n)(u,v)  \tag{def.\ $\C_\lambda$}\\
&= \lambda \textstyle \sum_{u,v \in M}\discr{\C}(u,v) \cdot \C(m,n)(u,v)  \tag{def.\ $d$}\\
&= \Gamma^{\C}_{\lambda}(\discr{\C})(m,n) \tag{def.\ $\Gamma^{\C}_{\lambda}$} \\
&= \discr{\C}(m,n) \tag{def.\ $\discr{\C}$} \\
&= d(m,n) \,. \tag{def.\ $d$ and $m,n \in M$}
\end{align*}
\end{trivlist} 
By definition, $\discr[1]{\C_\lambda}$ is the least fixed point of $\Gamma^{\C_\lambda}_{1}$, therefore, for all $m,n \in M$, 
\begin{equation*}
\discr[1]{\C_\lambda}(m,n) \leq d(m,n) = \discr{\C}(m,n) \,.
\end{equation*}
By the above inequality and \eqref{eq:tvupper} we conclude $| \Pr[\M_\lambda,m]{A} - \Pr[\M_\lambda,n]{A} | \leq \discr{\C}(m,n)$.
\qed
\end{proof}
The following is a generalization of~\cite[Corollary 10]{ChenBW12} for arbitrary discount factors.
\begin{corollary} \label{cor:disctvupperbound}
For any measurable set $A \subseteq L^\omega$, discount factor $\lambda \in (0,1]$, and $m,n \in M$, $$| \Pr[\M_\lambda,m]{A} - \Pr[\M_\lambda,n]{A} | \leq \dist(m,n) \,.$$
\end{corollary}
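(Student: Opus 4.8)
The plan is to read off the statement as a one-line consequence of the two results established just above: Lemma~\ref{lem:disc-undisc} supplies the inequality for an \emph{arbitrary} coupling structure, and Theorem~\ref{th:mincoupling} identifies the bisimilarity distance with the best such bound. Concretely, Lemma~\ref{lem:disc-undisc} already gives, for every coupling structure $\C$ for $\M$ and all $m,n \in M$, that $| \Pr[\M_\lambda,m]{A} - \Pr[\M_\lambda,n]{A} | \leq \discr{\C}(m,n)$. Since the left-hand side does not depend on $\C$, the first step I would take is simply to observe that the inequality continues to hold when the right-hand side is minimized over all coupling structures for $\M$.

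The second step is to invoke the minimal coupling criterion. Theorem~\ref{th:mincoupling} states that $\dist = \min \set{\discr{\C}}{\text{$\C$ coupling structure for $\M$}}$, where the minimum is taken point-wise and, being a minimum rather than merely an infimum, is \emph{attained}: there is a single coupling structure $\C^\star$ with $\discr{\C^\star}(m,n) = \dist(m,n)$ for all $m,n \in M$ simultaneously. Instantiating Lemma~\ref{lem:disc-undisc} at this $\C = \C^\star$ then yields $| \Pr[\M_\lambda,m]{A} - \Pr[\M_\lambda,n]{A} | \leq \discr{\C^\star}(m,n) = \dist(m,n)$, which is exactly the claim.

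There is essentially no hard step here; the only point that deserves a moment of care is the transition from ``the bound holds for every $\C$'' to ``the bound holds with $\dist$ on the right''. One must make sure that a \emph{single} optimal coupling structure realizes the point-wise minimum across all pairs $(m,n)$, rather than needing a different optimizer for each pair. This is guaranteed precisely because Theorem~\ref{th:mincoupling} is phrased as an attained point-wise minimum, so no further argument (such as a compactness or selection lemma) is required: the optimal $\C^\star$ provided by the theorem is exactly what the proof needs.
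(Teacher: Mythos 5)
Your proposal is correct and follows exactly the paper's own argument: the paper proves this corollary as an ``immediate consequence of Theorem~\ref{th:mincoupling} and Lemma~\ref{lem:disc-undisc},'' which is precisely your two-step combination of the per-coupling bound with the minimal coupling criterion. Your extra remark about a single coupling structure realizing the point-wise minimum is a fine observation, though even weaker forms of Theorem~\ref{th:mincoupling} (an infimum attained pair-by-pair, or not attained at all) would suffice, since the left-hand side is independent of $\C$.
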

\begin{proof}
Immediate consequence of Theorem~\ref{th:mincoupling} and Lemma~\ref{lem:disc-undisc}. \qed
\end{proof}

So far we have considered a single Markov chain and described a pseudometric space over its states. The above definitions can be naturally extended to reason about the distance between two MCs by considering the distance induced over their \emph{disjoint union}.

Given two MCs $\M = (M,\tau,\ell)$ and $\N = (N,\theta,\alpha)$ with $M \cap N = \emptyset$, their disjoint union, denoted as $\M \oplus \N$, is the Markov chain having state space $M \cup N$, probability transition function $\vartheta \in (M \cup N) \to \D(M \cup N)$ and labelling function $l \colon (M \cup N) \to L$ respectively defined as
\begin{align*}
&\vartheta(x) = \begin{cases}
\tau(x) & \text{if $x \in M$} \\
\theta(x) & \text{if $x \in N$}
\end{cases}
&& 
l(x) = \begin{cases}
\ell(x) & \text{if $x \in M$} \\
\alpha(x) & \text{if $x \in N$.}
\end{cases}
\end{align*}

Usually, MCs are associated with an initial state to be thought of as their initial configurations. 
In the rest of the paper when we talk about the distance between two MCs, written $\dist(\M,\N)$, we
implicitly refer to the distance between their initial states computed over the disjoint union of their MCs\footnote{Here we implicitly assume that different MCs have disjoint sets of states. This can be done without loss of generality since the behaviors that we observe are equal up-to isomorphism. Therefore, when we write e.g., $\dist(\M,\M)$, we actually compare $\M$ with another MC $\M'$ isomorphic to $\M$ but whose state space is disjoint from that of $\M$.}. Analogously, we may simply write $\Pr[\M]{A}$ instead of $\Pr[\M,m]{A}$ when $m$ is the initial state of $\M$.

\section{The Closest Bounded Approximant Problem}
\label{sec:OBA}

In this section we introduce the \emph{Closest Bounded Approximant} problem w.r.t. $\dist$ ($\CBA$), and give a characterization of it as a bilinear optimization problem. 

\begin{definition}[Closest Bounded Approximant]
Let $k \in \naturals$ and $\lambda \in (0,1]$. The \emph{closest bounded approximant problem} w.r.t. $\dist$ for an MC $\M$ is the problem of finding an MC $\N$ with at most $k$ states minimizing $\dist(\M,\N)$.
%\begin{description}[topsep=0.5ex, font=\sc]
%  \item[Instance:] An MC $\M$ and positive integer $k$.
%  \item[Output:] An MC $\N$ with at most $k$ states minimizing $\dist(\M,\N)$.
%\end{description}
\end{definition}

Clearly, when $k$ is greater than or equal to the number of bisimilarity classes of $\M$, 
an optimal solution of $\CBA$ is the bisimilarity quotient. 
Therefore, without loss of generality, we will assume $1 \leq k < |M|$ and $\M$ to be minimal. Note that, under these assumptions $\M$ must have at least two nodes with different labels.

Let $\MC{}$ denote the set of MCs with at most $k$ states and $\MC{A}$  
its restriction to those using only labels in $A \subseteq L$.
Using this notation, the optimization problem $\CBA$ on the instance $\instance{\M,k}$ can be reformulated as 
finding an MC $\N^*$ such that
\begin{equation}
\dist(\M,\N^*) = \min \set{\dist(\M,\N)}{ \N \in \MC{} } \,.
\label{eq:AM}
\end{equation} 
In general, it is not obvious that  for arbitrary instances $\instance{\M,k}$ a minimum in \eqref{eq:AM} exists. At the end of the section, we will show that such a minimum always exists (Corollary~\ref{cor:existssolAM}).

A useful property of $\CBA$ is that an optimal solution can be found among the MCs using labels from the given MC. 
\begin{lemma}[Meaningful labels] \label{lem:usefullabels}
Let $\M$ be an MC. Then, for any $\N' \in \MC{}$ there exists $\N \in \MC{L(\M)}$ 
such that $\dist(\M,\N) \leq \dist(\M,\N')$. % and $L(\N) \subseteq L(\M)$.
\end{lemma}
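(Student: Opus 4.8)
The plan is to prove the sharper, constructive claim that relabeling away every label of $\N'$ that lies outside $L(\M)$ can only decrease the distance to $\M$. Fix any $a_0 \in L(\M)$ (possible since $M \neq \emptyset$), and let $\N$ be obtained from $\N'$ by keeping the state space, the initial state and the transition function unchanged, and replacing the label of every state whose $\N'$-label is not in $L(\M)$ by $a_0$, while leaving the remaining labels untouched. Then $\N \in \MC{L(\M)}$: it has the same states as $\N'$, hence at most $k$, and every label now lies in $L(\M)$. It therefore suffices to show $\dist(\M,\N) \leq \dist(\M,\N')$.

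I would compare the two disjoint unions $\M \oplus \N'$ and $\M \oplus \N$ on their common state space $S = M \cup N$: they share the same transition function $\vartheta$ and differ only in their labelling functions, say $l'$ and $l$. The key observation is that relabeling to a single fixed label \emph{coarsens} the label-agreement relation, i.e.\ $l'(x) = l'(y)$ implies $l(x) = l(y)$ for all $x,y \in S$. Indeed, two states agreeing under $l'$ either both carry a label of $L(\M)$ (left untouched) or both carry the same foreign label (both sent to $a_0$); either way they still agree under $l$, and no agreement is ever destroyed.

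From this I would derive a pointwise domination of the associated functionals $\Psi'_\lambda$ and $\Psi_\lambda$: for every $1$-bounded $d$ on $S$, $\Psi_\lambda(d) \sqsubseteq \Psi'_\lambda(d)$. Fixing $x,y \in S$, if $l'(x) \neq l'(y)$ then $\Psi'_\lambda(d)(x,y) = 1 \geq \Psi_\lambda(d)(x,y)$ since the operator is $1$-bounded; and if $l'(x) = l'(y)$ then $l(x) = l(y)$ by the coarsening, so both functionals return the same value $\lambda \cdot \mathcal{K}(d)(\vartheta(x),\vartheta(y))$ because $\vartheta$ is shared. Writing $\delta'$ and $\delta$ for the bisimilarity pseudometrics $\dist$ of $\M \oplus \N'$ and $\M \oplus \N$ --- the least fixed points of $\Psi'_\lambda$ and $\Psi_\lambda$ --- the inequality gives $\Psi_\lambda(\delta') \sqsubseteq \Psi'_\lambda(\delta') = \delta'$, so $\delta'$ is a pre-fixed point of $\Psi_\lambda$. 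By Knaster-Tarski, $\delta$ is the least such pre-fixed point, hence $\delta \sqsubseteq \delta'$; evaluating at the pair of initial states yields $\dist(\M,\N) \leq \dist(\M,\N')$.

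I expect the main obstacle to be conceptual rather than computational: one cannot reason pointwise about the single pair of initial states, since relabeling perturbs the whole fixed point at once, so the comparison must be lifted to the operator level and then transferred back through the least-pre-fixed-point principle. The one place the argument genuinely bites is that the mismatch value $1$ is the \emph{maximum} value the operator can take, which is precisely what lets the extra agreements introduced by the relabeling only push distances down. A related detail worth flagging is that foreign labels must all be collapsed consistently (here, to one common $a_0$): an arbitrary non-constant reassignment could split a foreign-label class and thereby destroy an agreement, breaking the monotonicity.
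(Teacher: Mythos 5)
Your proof is correct and follows essentially the same route as the paper's: collapse all labels outside $L(\M)$ to a single fixed label of $\M$ (the paper uses $\ell(m_0)$, you use an arbitrary $a_0 \in L(\M)$), observe that this only coarsens label agreement, and then show that the bisimilarity pseudometric of $\M \oplus \N'$ is a pre-fixed point of the operator associated with $\M \oplus \N$, so Knaster--Tarski gives the pointwise domination of the least fixed points. The only cosmetic difference is that you prove the operator inequality $\Psi_\lambda(d) \sqsubseteq \Psi'_\lambda(d)$ for all $d$ and then specialize at $\delta'$, whereas the paper verifies the pre-fixed-point property at $\dist^{\mathcal{B}}$ directly.
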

\begin{proof}
Let $\N' = (N', \theta', \alpha')$.
If $L(\N') \subseteq L(\M)$, take $\N = \N'$. Otherwise, define $\N = (N, \theta, \alpha)$ as follows:
$N = N'$, $\theta = \theta'$, and $\alpha(n) = \alpha'(n)$ if $\alpha'(n) \in L(\M)$, otherwise $\alpha(n) = \ell(m_0)$, where $m_0$ is the initial state of $\M$. The initial state of $\N$ is the one of $\N'$.
Clearly, $\N \in \MC{}$ and $L(\N) \subseteq L(\M)$.

Let $\mathcal{A} = \M \oplus \N$ and $\mathcal{B} = \M \oplus \N'$, %be the disjoint union of $\M$ with $\N$ and $\N'$ respectively.
we prove $\dist(\M,\N) \leq \dist(\M,\N')$ by showing a stronger statement: $\dist^\mathcal{A} \sqsubseteq \dist^{\mathcal{B}}$. By Knaster-Tarski's fixed-point theorem, it suffices to show 
$\Psi_{\lambda}^{\mathcal{A}}(\dist^{\mathcal{B}}) \sqsubseteq \dist^{\mathcal{B}}$.
Let $u,v \in M \uplus N$. When $u$ and $v$ have different labels in $\mathcal{B}$, 
then, $\Psi_{\lambda}^{\mathcal{A}}(\dist^{\mathcal{B}})(u,v) \leq 1 = \dist^{\mathcal{B}}(u,v)$ follows by definition of $\Psi_{\lambda}$ and the fact that $\dist^{\mathcal{B}} = \Psi_{\lambda}^{\mathcal{B}}(\dist^{\mathcal{B}})$.
Assume that $u$ and $v$ have the same label in $\mathcal{B}$. Then, by construction of $\N$ (i.e, by definition of $\alpha'$), $u$ and $v$  have the same label in $\mathcal{A}$.
Since $\N$ and $\N'$ have the same transition distribution function, we obtain $\Psi_{\lambda}^{\mathcal{A}}(\dist^{\mathcal{B}})(u,v) = \dist^{\mathcal{B}}(u,v)$. \qed
\end{proof}

In the following, fix $\instance{\M,k}$ as instance of $\CBA$, let $m_0 \in M$ be the initial state of $\M$.
By Lemma~\ref{lem:usefullabels}, Theorem~\ref{th:mincoupling} and Knaster-Tarski fixed-point theorem
\begin{align}
  \inf &\set{\dist(\M,\N)}{ \N \in \MC{} } = {} \\
  &= \inf \set{\discr{\C}(\M,\N)}{ \text{$\N \in \MC{L(\M)}$ and $\C \in \coupling{\M}{\N}$} } 
  	\label{eq:CBAdiscr} \\
	%\tag{by Lem.\,\ref{lem:usefullabels} \& Thm.\,\ref{th:mincoupling}} \\
  &= \inf \set{d(\M,\N)}{ \text{$\N \in \MC{L(\M)}$, $\C \in \coupling{\M}{\N}$, 
  	and $\Gamma^\C_\lambda(d) \sqsubseteq d$} } \,,
  	%\tag{by Knaster-Tarski}
\end{align}
where $\coupling{\M}{\N}$ denotes the set of all coupling structures for the disjoint union of $\M$ and $\N$.
This simple change in perspective yields a translation of the problem of computing 
the optimal value of $\CBA$ to the bilinear program in Figure~\ref{fig:BMI}. 

\begin{figure}[t]
\hrule \vspace{-1ex}
\begin{align}
\text{mimimize}\;\;& d_{m_0,n_0}&  \notag \\
\text{such that}\;\;
&\lambda \textstyle\sum_{(u,v) \in M \times N} c^{m,n}_{u,v} \cdot d_{u,v} \leq d_{m,n} 
	&& \text{$m \in M$, $n \in N$} \label{eq:prefix} \\%[0.5ex]
&1 - \alpha_{n,l} \leq d_{m,n} \leq 1 && \text{$n \in N$, $l \in L(\M)$, $\ell(m) \neq l$} 
	\label{eq:bound} \\
&\alpha_{n,l} \cdot \alpha_{n,l'} = 0 && \text{$n \in N$, $l,l' \in L(\M)$, $l \neq l'$} \label{eq:labelneq1}\\
& \textstyle \sum_{l \in L(\M)} \alpha_{n,l} = 1 && \text{$n \in N$}  
\label{eq:labelneq3} \\%[0.5ex]
&\textstyle\sum_{v \in N} c^{m,n}_{u,v} = \tau(m)(u)  && \text{$m,u \in M$, $n \in N$} 
\label{eq:leftmarginal} \\ 
&\textstyle\sum_{u \in M} c^{m,n}_{u,v} = \theta_{n,v} && \text{$m \in M$, $n,v \in N$}  
\label{eq:rightmarginal} \\
&c^{m,n}_{u,v} \geq 0 && \text{$m,u \in M$, $n,v \in N$} \label{eq:nonnegative}
\end{align}
\hrule
\caption{Characterization of $\CBA$ as a bilinear optimization problem.} \label{fig:BMI}
\end{figure}

%\begin{figure}[t]
%\hrule \vspace{-1ex}
%\begin{align*}
%\text{mimimize}\;\;& d_{m_0,n_0}&  \\
%\text{such that}\;\;
%&d_{m,n} = 1 && \ell(m) \neq \alpha(n) \\
%&\lambda \textstyle\sum_{(u,v) \in M \times N} c^{m,n}_{u,v} \cdot d_{u,v} \leq d_{m,n} 
%	&& \ell(m) = \alpha(n) \\[0.5ex]
%&\textstyle\sum_{v \in N} c^{m,n}_{u,v} = \tau(m)(u)  && \text{$m,u \in M$, $n \in N$} \\ 
%&\textstyle\sum_{u \in M} c^{m,n}_{u,v} = \theta_{n,v} && \text{$m \in M$, $n,v \in N$} \\
%&c^{m,n}_{u,v} \geq 0 && \text{$m,u \in M$, $n,v \in N$}
%\end{align*}
%\hrule
%\caption{Characterization of $\CBA$ as a bilinear optimization problem.} \label{fig:BMIfixedLabels}
%\end{figure}

In our encoding, $N = \{n_0, \dots, n_{k-1}\}$ are the states of an arbitrary Markov chain with $k$ states $\N = (N,\theta,\alpha) \in \MC{}$.
and $n_0$ is the initial one. The variable $\theta_{n,v}$ is used to encode the transition probability $\theta(n)(v)$. Hence, a feasible solution satisfying (\ref{eq:leftmarginal}--\ref{eq:nonnegative}) will have the variable $c^{m,n}_{u,v}$ representing the value $\C(m,n)(u,v)$ for a coupling structure $\C \in \coupling{\M}{\N}$. 
An assignment for the variables $\alpha_{n,l}$ satisfying (\ref{eq:labelneq1}--\ref{eq:labelneq3}) encodes (uniquely) a labeling function $\alpha \colon N \to L(\M)$ satisfying the following property:
\begin{align}
\text{for all } n \in N, l \in L(\M)
&&
\alpha_{n,l} = 1 \quad \text{ iff } \quad \alpha(n) = l  \,.
\label{eq:alpha}
\end{align}
The constraint \eqref{eq:labelneq1} models the fact that each node $n \in N$ is assigned at most to one label $l \in L(\M)$, and the constraint \eqref{eq:labelneq3} ensures that each node is assigned to at least one label in $L(\M)$.
Conversely, any labeling $\alpha \colon N \to L(\M)$ admits an assignment of the variables $\alpha_{n,l}$ satisfying (\ref{eq:labelneq1}--\ref{eq:labelneq3}) and \eqref{eq:alpha}.
Finally, an assignment for the variables $d_{m,n}$ satisfying the constraints (\ref{eq:prefix}--\ref{eq:bound}) represents a prefix point of $\Gamma^\C_\lambda$. Note that \eqref{eq:bound} guarantees that $d_{m,n} = 1$ whenever $\alpha(n) \neq \ell(m)$ ---indeed, by \eqref{eq:labelneq1}, $\alpha_{n,l} = 0$ iff $\alpha(n) \neq \ell(m)$.

Let $F_\lambda\instance{\M,k}$ denote the bilinear optimization problem in Figure~\ref{fig:BMI}. Directly from the
arguments stated above we obtain the following result.
\begin{theorem} \label{thm:optvalue}
$\inf \set{\dist(\M,\N)}{ \N \,{\in}\, \MC{} }$ is the optimal value of $F_\lambda\instance{\M,k}$.
\end{theorem}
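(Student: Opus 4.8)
The plan is to read the statement as a faithful, value-preserving translation between feasible points of the program $F_\lambda\instance{\M,k}$ and the triples $(\N,\C,d)$ appearing in the infimum reached by the chain of equalities displayed just before the statement. Since that chain already establishes
\[
\inf\set{\dist(\M,\N)}{\N\in\MC{}} = \inf\set{d(\M,\N)}{\N\in\MC{L(\M)},\ \C\in\coupling{\M}{\N},\ \Gamma^\C_\lambda(d)\sqsubseteq d},
\]
it suffices to prove that the right-hand side equals the optimal value of $F_\lambda\instance{\M,k}$. I would first dispose of two harmless normalizations: by Lemma~\ref{lem:usefullabels} we may restrict to $\N\in\MC{L(\M)}$, and by padding an approximant with unreachable self-looping states carrying any label of $L(\M)$ (which leaves $\dist(\M,\N)$ unchanged) we may assume the approximant has exactly $k$ states $N=\{n_0,\dots,n_{k-1}\}$, matching the encoding. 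I would then prove the equality by two inequalities, each obtained from an explicit correspondence.

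For ``$\leq$'', take any $\N=(N,\theta,\alpha)\in\MC{L(\M)}$. By Theorem~\ref{th:mincoupling} choose a coupling structure $\C$ for $\M\oplus\N$ with $\discr{\C}(m_0,n_0)=\dist(\M,\N)$ and set $d=\discr{\C}$, a fixed point of $\Gamma^\C_\lambda$. I read off a point of $F_\lambda$ by putting $\alpha_{n,l}=1$ iff $\alpha(n)=l$, $\theta_{n,v}=\theta(n)(v)$, $c^{m,n}_{u,v}=\C(m,n)(u,v)$, and $d_{m,n}=\discr{\C}(m,n)$. Then \eqref{eq:labelneq1}--\eqref{eq:labelneq3} hold because $\alpha$ is a function into $L(\M)$; \eqref{eq:leftmarginal}--\eqref{eq:nonnegative} hold because $\C(m,n)\in\coupling{\tau(m)}{\theta(n)}$; and \eqref{eq:prefix}--\eqref{eq:bound} hold because $\discr{\C}$ is a fixed point of $\Gamma^\C_\lambda$ (so \eqref{eq:prefix} holds with equality on same-labelled pairs, and \eqref{eq:bound} only requires $d_{m,n}=1$ where $\ell(m)\neq\alpha(n)$, which is exactly where $\discr{\C}(m,n)=1$). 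This feasible point has objective $d_{m_0,n_0}=\dist(\M,\N)$, so the optimal value of $F_\lambda$ is at most $\dist(\M,\N)$, and taking the infimum over $\N$ gives the inequality.

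For ``$\geq$'', start from an arbitrary feasible point of $F_\lambda$. Constraints \eqref{eq:labelneq1}--\eqref{eq:labelneq3} define a labelling $\alpha\colon N\to L(\M)$ via \eqref{eq:alpha}. Summing \eqref{eq:rightmarginal} over $v$ and using \eqref{eq:leftmarginal} gives $\sum_{v}\theta_{n,v}=\sum_{u,v}c^{m,n}_{u,v}=\sum_{u}\tau(m)(u)=1$, while \eqref{eq:nonnegative} gives $\theta_{n,v}\geq 0$; hence $\theta(n)=\sum_v\theta_{n,v}\cdot 1_v$ is a genuine distribution, yielding $\N=(N,\theta,\alpha)\in\MC{L(\M)}$. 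Setting $\C(m,n)(u,v)=c^{m,n}_{u,v}$ on cross pairs, constraints \eqref{eq:leftmarginal}--\eqref{eq:nonnegative} make $\C(m,n)\in\coupling{\tau(m)}{\theta(n)}$; since $\tau(m)$ and $\theta(n)$ are supported on $M$ and $N$ respectively, each $\C(m,n)$ is supported on $M\times N$, so extending $d$ by $d(u,v)=1$ on same-side pairs (and $\C$ arbitrarily there) turns $\{d_{m,n}\}$ into a function satisfying $\Gamma^\C_\lambda(d)\sqsubseteq d$ on all of $M\cup N$, using \eqref{eq:prefix} for the recursive clause and \eqref{eq:bound} with \eqref{eq:alpha} to force $d_{m,n}=1$ on differently-labelled cross pairs. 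Then $\discr{\C}$, being the least fixed point of $\Gamma^\C_\lambda$, satisfies $\discr{\C}\sqsubseteq d$, whence $\dist(\M,\N)\leq\discr{\C}(m_0,n_0)\leq d_{m_0,n_0}$ by Theorem~\ref{th:mincoupling}. Taking the infimum over feasible points yields the reverse inequality.

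The genuinely delicate point, rather than any single estimate, is the bookkeeping that confines the coupling-structure variables to cross pairs $M\times N$: this is what lets the finitely many variables $c^{m,n}_{u,v}$ capture coupling structures for $\M\oplus\N$ as far as $\discr{\C}(m_0,n_0)$ is concerned, and what lets a cross-pair prefix point be completed to a global one by the constant $1$ on same-side pairs. The remaining things to verify carefully are the recovery of a stochastic $\theta$ from the marginal constraints and the precise effect of \eqref{eq:bound} together with \eqref{eq:alpha} in pinning $d_{m,n}=1$ exactly on differently-labelled cross pairs; both are routine once the correspondence is in place.
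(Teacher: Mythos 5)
Your proof is correct and takes essentially the same approach as the paper: the paper's own proof of Theorem~\ref{thm:optvalue} is just the chain of equalities preceding the statement together with the informal reading of the constraints in Figure~\ref{fig:BMI}, and your two-sided correspondence (an optimal coupling structure and its discrepancy yielding a feasible point with objective $\dist(\M,\N)$ in one direction; recovering $(\N,\C,d)$ from a feasible point, extending $d$ by $1$ off the cross pairs $M\times N$, and invoking the least-prefix-point property of $\discr{\C}$ in the other) is exactly that argument made explicit. The details you supply beyond the paper's sketch --- padding approximants to exactly $k$ states, checking that the marginal constraints make $\theta$ stochastic, and reading \eqref{eq:bound} together with \eqref{eq:alpha} as pinning $d_{m,n}=1$ precisely on differently-labelled cross pairs, which is what the paper intends --- are the right ones and introduce no gap.
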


\begin{corollary} \label{cor:existssolAM}
Any instance of $\CBA$ admits an optimal solution. 
\end{corollary}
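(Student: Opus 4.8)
The plan is to deduce the corollary from Theorem~\ref{thm:optvalue} by upgrading the \emph{infimum} there to a \emph{minimum}: I will show that the bilinear program $F_\lambda\instance{\M,k}$ attains its optimal value at a feasible point, and then read off an optimal approximant from such a minimizer. First I would view the feasible region $\mathcal{F}$ of $F_\lambda\instance{\M,k}$ as a subset of the finite-dimensional Euclidean space indexed by the variables $d_{m,n}$, $\alpha_{n,l}$, $\theta_{n,v}$, and $c^{m,n}_{u,v}$. Every coordinate is bounded: the discrepancy variables $d_{m,n}$ lie in $[0,1]$ by \eqref{eq:bound}, the coupling variables $c^{m,n}_{u,v}$ lie in $[0,1]$ by \eqref{eq:nonnegative} together with the marginal constraints (\ref{eq:leftmarginal}--\ref{eq:rightmarginal}), the induced transition variables $\theta_{n,v}$ are thereby probabilities, and the label indicators $\alpha_{n,l}$ are forced into $\{0,1\}$ by (\ref{eq:labelneq1}--\ref{eq:labelneq3}). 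Hence $\mathcal{F}$ is bounded.

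Next I would establish that $\mathcal{F}$ is nonempty and closed, so that it is compact. Nonemptiness is immediate: pick any MC $\N \in \MC{L(\M)}$ with $k$ states, any coupling structure $\C \in \coupling{\M}{\N}$ (the independent coupling always exists), and set $d = \discr{\C}$; since $\discr{\C}$ is a fixed point of $\Gamma^\C_\lambda$, it satisfies (\ref{eq:prefix}--\ref{eq:bound}), yielding a feasible point. Closedness holds because each constraint in Figure~\ref{fig:BMI} is a non-strict polynomial (indeed bilinear) equality or inequality, so $\mathcal{F}$ is a finite intersection of closed sets. As the objective $d_{m_0,n_0}$ is continuous (in fact linear), Weierstrass' extreme value theorem provides a minimizer $\mathbf{x}^* = (d^*,\alpha^*,\theta^*,c^*) \in \mathcal{F}$ whose objective value equals the optimal value of $F_\lambda\instance{\M,k}$, which by Theorem~\ref{thm:optvalue} equals $\inf \set{\dist(\M,\N)}{\N \in \MC{}}$.

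Finally I would convert the minimizer into an optimal solution of $\CBA$. From $\alpha^*$ and $\theta^*$ I read off an MC $\N^* \in \MC{}$, and $c^*$ determines a coupling structure $\C^* \in \coupling{\M}{\N^*}$, exactly as in the correspondence used to prove Theorem~\ref{thm:optvalue}. Constraints (\ref{eq:prefix}--\ref{eq:bound}) say precisely that $d^*$ is a prefix point of $\Gamma^{\C^*}_\lambda$, so by Knaster--Tarski $\discr{\C^*} \sqsubseteq d^*$, and by the minimal coupling criterion (Theorem~\ref{th:mincoupling}) we get $\dist(\M,\N^*) \leq \discr{\C^*}(m_0,n_0) \leq d^*_{m_0,n_0}$. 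Conversely, $\N^* \in \MC{}$ forces $\dist(\M,\N^*) \geq \inf \set{\dist(\M,\N)}{\N \in \MC{}} = d^*_{m_0,n_0}$. The two inequalities sandwich $\dist(\M,\N^*) = d^*_{m_0,n_0}$, the infimum, so $\N^*$ is an optimal approximant. I expect the compactness argument to be routine; the genuinely delicate point is this last step, namely confirming that a minimizer of the program—where $d$ is only constrained to be a prefix point and need not equal the discrepancy—still produces an MC whose \emph{actual} bisimilarity distance coincides with the computed objective value. The sandwich of inequalities above is what resolves this subtlety.
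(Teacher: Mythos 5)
Your proposal is correct and follows essentially the same route as the paper: both establish that the feasible region of $F_\lambda\instance{\M,k}$ is compact (closed, as an intersection of non-strict polynomial constraints, and bounded in $[0,1]^h$), invoke continuity of the linear objective to get attainment, and then conclude via Theorem~\ref{thm:optvalue}. The only difference is that you spell out the details the paper leaves implicit in ``the thesis follows by Theorem~\ref{thm:optvalue}''---namely nonemptiness of the feasible set and the Knaster--Tarski/minimal-coupling sandwich converting a minimizer back into an optimal MC---which is a faithful elaboration rather than a different argument.
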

\begin{proof}
We show that $\dist(\M,\N^*) = \inf \set{\dist(\M,\N)}{ \N \in \MC{} }$ for some $\N^* \in \MC{}$.
Let $h$ be the number of variables in $F_\lambda\instance{\M,k}$. The constraints (\ref{eq:bound}--\ref{eq:nonnegative}) describe a compact subset of $\mathbb{R}^h$ ---it is an intersection of closed sets bounded by $[0,1]^h$. The objective function of $F_\lambda\instance{\M,k}$ is linear, hence the infimum is attained by a feasible solution. The thesis follows by Theorem~\ref{thm:optvalue}. \qed
\end{proof}

The next example shows that even by starting with an MC with rational transition probabilities, optimal solutions
for $\CBA$ may have irrational transition probabilities. 
\begin{example} \label{ex:minimalirrational}
Consider the MC $\M$ depicted below, with initial state $m_0$ and labels represented by distinct colors.
We claim that the MC $\N_{xy}$ depicted below, with initial state $n_0$ and parameters 
$x = \frac{1}{30} \left(10+\sqrt{163}\right)$, $y = \frac{21}{100}$, is an optimal solution of $\CBA[1]$ on input $\instance{\M,3}$.
\def\h{1.5cm}
\def\v{1.2cm}
\begin{align*}
%% Original Markov chains
\M = 
\tikz[labels, baseline={(current bounding box.center)}, 
n1/.style={fill=red!30}, n2/.style={fill=blue!30}, n3/.style={fill=green!70!gray}]{ 
  \draw (0,0) node[label, initial, n1] (m0) {$m_0$}; 
  \draw ($(m0)+(right:\h)$) node[label, n1] (m1) {$m_1$};
  \draw ($(m1)+(right:\h)$) node[label, n1] (m2) {$m_2$};
  \draw ($(m2)+(right:\h)$) node[label, n2] (m3) {$m_3$};
  \draw ($(m0)!0.5!(m2)+(down:\v)$) node[label, n3] (m4) {$m_4$};
  \path[-latex, font=\small]
    (m0) edge node[above] {$\frac{79}{100}$} (m1)
    	    edge[bend right=8] node[left, yshift=-1mm] {$\frac{21}{100}$} (m4)
    (m1) edge node[above] {$\frac{79}{100}$} (m2)
            edge node[left] {$\frac{21}{100}$} (m4)
    (m2) edge node[above] {$\frac{79}{100}$} (m3)
            edge[bend left=8] node[right, yshift=-1mm] {$\frac{21}{100}$} (m4)
    (m3) edge[loop below] node[very near end, left] {$1$} (m3)
    (m4) edge[loop left] node[left, yshift=-1.5mm] {$1$} (m4)
    ;
}
&&
%% Minimal Markov chains
\N_{xy} = 
\tikz[labels, baseline={(current bounding box.center)}, 
n1/.style={fill=red!30}, n2/.style={fill=blue!30}, n3/.style={fill=green!70!gray}]{ 
  \draw (0,0) node[label, initial, n1] (n0) {$n_0$}; 
  \draw ($(n0)+(right:1.5*\h)$) node[label, n2] (n1) {$n_1$};
  \draw ($(n0)+(down:\v)$) node[label, n3] (n2) {$n_2$};
  \path[-latex, font=\small]
    (n0) edge node[left] {$y$} (n2)
           edge node[above] {$1-x-y$} (n1)
           edge[loop left] node[left] {$x$} (n0)
    (n1) edge[loop below] node[very near end, left] {$1$} (n1)
    (n2) edge[loop left] node[left] {$1$} (n2)
    ;
}
\end{align*}
Since the distance $\dist[1](\M,\N_{xy}) = \frac{436}{675}-\frac{163 \sqrt{163}}{13500} \approx 0.49$ is
irrational, by~\cite[Proposition 13]{ChenBW12}, any optimal solution must have some irrational transition probability.

Now we prove that the above is indeed an optimal solution for $\CBA[1]$ on input $\instance{\M,3}$. Assume by contradiction that there exists $\N^* \in \MC[3]{}$ s.t.\ $\dist[1](\M,\N^*) <  \dist[1](\M,\N_{xy})$. Without loss of generality, by Lemma~\ref{lem:usefullabels} we can assume that $L(\N^*) \subseteq L(\M)$. 

If $L(\N^*) = L(\M)$, then $\N^*$ must be an MC of the form $\N_{zw}$ for some of $z,w \in [0,1]$ such that $z + w \leq 1$. Thus,
\begin{align}
\dist[1](\M,\N^*) &\geq \min \set{ \dist[1](\M,\N_{zw})}{z,w \geq 0,\, z + w \leq 1} \label{eq:example} \\
&= \min \set{ \discr[1]{\C}(m_0,n_0) }{\C \in \coupling{\M}{\N_{zw}},\, z,w \geq 0,\, z + w \leq 1} \,.
\tag{Thm.~\ref{th:mincoupling}}
\end{align}
Consider an arbitrary coupling structure $\C \in \coupling{\M}{\N_{zw}}$ for some $z,w \geq 0$ such that 
$z + w \leq 1$. By definition of $\discr[1]{\C}$, for any $i \in \{0,1,2\}$,
\begin{equation}
\discr[1]{\C}(m_i,n_0) = A_i + \textstyle\sum_{j = 0}^1\discr[1]{\C}(m_{i+1},n_j) \cdot \C(m_i, n_0)(m_{i+1},n_j)  \,,
\label{eq:discrexample}
\end{equation}
where $A_i = \C(m_i, n_0)(m_{i+1},n_2) + \C(m_i, n_0)(m_4,n_0) + \C(m_i, n_0)(m_4,n_1)$. 
The constraints on the marginals require that, for all $i \in \{0, 1, 2\}$,
\begin{align*}
w &= \C(m_i, n_0)(m_{i+1},n_2) + \C(m_i, n_0)(m_4,n_2) \,, \text{ and} \\
\frac{21}{100} &= \C(m_i, n_0)(m_4,n_0) + \C(m_i, n_0)(m_4,n_1) + \C(m_i, n_0)(m_4,n_2) \,.
\end{align*}

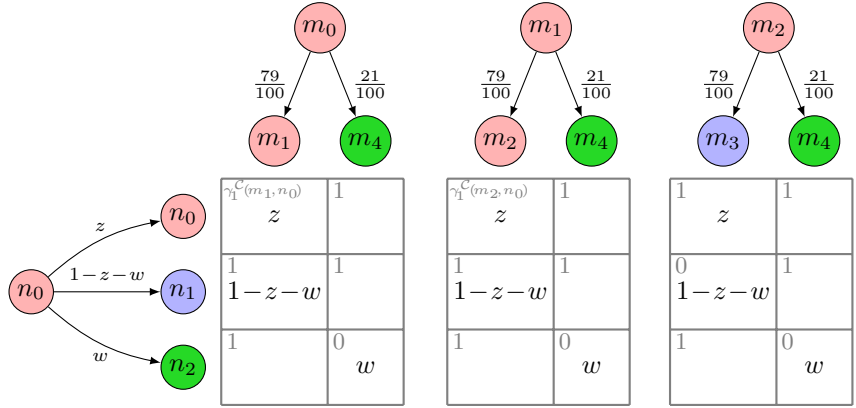
\begin{figure}
\centering
\begin{tikzpicture}[baseline=(current bounding box.center), labels]
\def\h{1cm}
\def\v{1cm}

\draw 
  (0.5,-1.5) node[label,fill=red!30] (n0) {$n_0$}
  ($(n0)+(0,-1)$) node[label,fill=blue!30] (n1) {$n_1$}
  ($(n1)+(0,-1)$) node[label,fill=green!70!gray] (n2) {$n_2$}
  ($(n0)!0.5!(n2)+(left:2)$) node[label,fill=red!30] (n01) {$n_0$}
  ;
  
\coordinate (g1) at ($(n0)!0.5!(1.5,-0.5)$);
\coordinate (g2) at ($(g1)+(right:1.4)$);
\coordinate (g3) at ($(g2)+(right:1)$);
  
\draw
  ($($(g1)!0.5!(g2)$)+(up:0.5)$) node[label,fill=red!30] (m1) {$m_1$}
  ($($(g2)!0.5!(g3)$)+(up:0.5)$) node[label,fill=green!70!gray] (m4) {$m_4$}
  ($(m1)!0.5!(m4)+(0,1.5)$) node[label,fill=red!30] (m0) {$m_0$}
  ;

\draw  
  ($(g1)!0.5!($(g2)+(down:1)$)$) node {$z$}
  ($(g1)+(0.55,-0.15)$) node[font=\tiny, color=gray] {$\discr[\!1]{\C}\!(\!m_1\!,\!n_0\!)$}
  ($(g2)+(0.15,-0.15)$) node[font=\small, color=gray] {$1$}
  ($($(g1)+(down:1)$)!0.5!($(g2)+(down:2)$)$) node {$1\!-\!z\!-\!w$}
  ($(g1)+(0.15,-1.15)$) node[font=\small, color=gray] {$1$}
  ($(g2)+(0.15,-1.15)$) node[font=\small, color=gray] {$1$}
  ($(g1)+(0.15,-2.15)$) node[font=\small, color=gray] {$1$}
  ($($(g2)+(down:2)$)!0.5!($(g3)+(down:3)$)$) node {$w$}
  ($(g2)+(0.15,-2.15)$) node[font=\small, color=gray] {$0$}
  ;
  
  \draw[gray,thick] (g1) -- ($(g1)+(down:3)$)
  	(g2) -- ($(g2)+(down:3)$)
	(g3) -- ($(g3)+(down:3)$)
	(g1) -- (g3)
	($(g1)+(down:1)$) -- ($(g3)+(down:1)$)
	($(g1)+(down:2)$) -- ($(g3)+(down:2)$)
	($(g1)+(down:3)$) -- ($(g3)+(down:3)$)
	;
  
  \path[-latex, font=\small]
  (m0) edge node[left] {$\frac{79}{100}$} (m1)
  	edge node[right] {$\frac{21}{100}$} (m4)
  ;
  \path[-latex, font=\scriptsize]
  (n01) edge[bend left=15] node[above] {$z$} (n0)
  	edge node[above] {$1\!-\!z\!-\!w$} (n1)
	edge[bend right=15] node[below] {$w$} (n2)
  ;
\end{tikzpicture}
\quad
\begin{tikzpicture}[baseline=(current bounding box.center), labels]
  
\coordinate (g1) at (1,-1);
\coordinate (g2) at ($(g1)+(right:1.4)$);
\coordinate (g3) at ($(g2)+(right:1)$);
  
\draw
  ($($(g1)!0.5!(g2)$)+(up:0.5)$) node[label,fill=red!30] (m1) {$m_2$}
  ($($(g2)!0.5!(g3)$)+(up:0.5)$) node[label,fill=green!70!gray] (m4) {$m_4$}
  ($(m1)!0.5!(m4)+(0,1.5)$) node[label,fill=red!30] (m0) {$m_1$}
  ;

\draw  
  ($(g1)!0.5!($(g2)+(down:1)$)$) node {$z$}
  ($(g1)+(0.55,-0.15)$) node[font=\tiny, color=gray] {$\discr[\!1]{\C}\!(\!m_2\!,\!n_0\!)$}
  ($(g2)+(0.15,-0.15)$) node[font=\small, color=gray] {$1$}
  ($($(g1)+(down:1)$)!0.5!($(g2)+(down:2)$)$) node {$1\!-\!z\!-\!w$}
  ($(g1)+(0.15,-1.15)$) node[font=\small, color=gray] {$1$}
  ($(g2)+(0.15,-1.15)$) node[font=\small, color=gray] {$1$}
  ($(g1)+(0.15,-2.15)$) node[font=\small, color=gray] {$1$}
  ($($(g2)+(down:2)$)!0.5!($(g3)+(down:3)$)$) node {$w$}
  ($(g2)+(0.15,-2.15)$) node[font=\small, color=gray] {$0$}
  ;
  
  \draw[gray,thick] (g1) -- ($(g1)+(down:3)$)
  	(g2) -- ($(g2)+(down:3)$)
	(g3) -- ($(g3)+(down:3)$)
	(g1) -- (g3)
	($(g1)+(down:1)$) -- ($(g3)+(down:1)$)
	($(g1)+(down:2)$) -- ($(g3)+(down:2)$)
	($(g1)+(down:3)$) -- ($(g3)+(down:3)$)
	;
  
  \path[-latex, font=\small]
  (m0) edge node[left] {$\frac{79}{100}$} (m1)
  	edge node[right] {$\frac{21}{100}$} (m4)
  ;
\end{tikzpicture}
\quad
\begin{tikzpicture}[baseline=(current bounding box.center), labels]
  
\coordinate (g1) at (1,-1);
\coordinate (g2) at ($(g1)+(right:1.4)$);
\coordinate (g3) at ($(g2)+(right:1)$);
  
\draw
  ($($(g1)!0.5!(g2)$)+(up:0.5)$) node[label,fill=blue!30] (m1) {$m_3$}
  ($($(g2)!0.5!(g3)$)+(up:0.5)$) node[label,fill=green!70!gray] (m4) {$m_4$}
  ($(m1)!0.5!(m4)+(0,1.5)$) node[label,fill=red!30] (m0) {$m_2$}
  ;

\draw  
  ($(g1)!0.5!($(g2)+(down:1)$)$) node {$z$}
  ($(g1)+(0.15,-0.15)$) node[font=\small, color=gray] {$1$}
  ($(g2)+(0.15,-0.15)$) node[font=\small, color=gray] {$1$}
  ($($(g1)+(down:1)$)!0.5!($(g2)+(down:2)$)$) node {$1\!-\!z\!-\!w$}
  ($(g1)+(0.15,-1.15)$) node[font=\small, color=gray] {$0$}
  ($(g2)+(0.15,-1.15)$) node[font=\small, color=gray] {$1$}
  ($(g1)+(0.15,-2.15)$) node[font=\small, color=gray] {$1$}
  ($($(g2)+(down:2)$)!0.5!($(g3)+(down:3)$)$) node {$w$}
  ($(g2)+(0.15,-2.15)$) node[font=\small, color=gray] {$0$}
  ;
  
  \draw[gray,thick] (g1) -- ($(g1)+(down:3)$)
  	(g2) -- ($(g2)+(down:3)$)
	(g3) -- ($(g3)+(down:3)$)
	(g1) -- (g3)
	($(g1)+(down:1)$) -- ($(g3)+(down:1)$)
	($(g1)+(down:2)$) -- ($(g3)+(down:2)$)
	($(g1)+(down:3)$) -- ($(g3)+(down:3)$)
	;
  
  \path[-latex, font=\small]
  (m0) edge node[left] {$\frac{79}{100}$} (m1)
  	edge node[right] {$\frac{21}{100}$} (m4)
  ;
\end{tikzpicture}

\caption{Family of coupling structures for $\M$ and $\N_{z,w}$ with $z \in [0,\frac{79}{100}]$ and $w = \frac{21}{100}$(\cf~Example~\ref{ex:minimalirrational}).}
\label{fig:example1}
\end{figure}

Consequently, for minimizing $\discr[1]{\C}(m_i,n_0)$, we shall choose the values of $\C(m_i, n_0)$ in a way that makes $A_i = 0$ (hence $\C(m_i, n_0)(m_4,n_2) = \frac{21}{100}$). 
Therefore, we can restrict our attention to the family of coupling structures 
depicted in Figure~\ref{ex:minimalirrational} %with $z \in [0,\frac{79}{100}]$ and $w = \frac{21}{100}$ which results in
\begin{equation}
\dist[1](\M,\N^*) \geq \min \set{ \discr[1]{\C}(m_0,n_0) }{\C \in \coupling{\M}{\N_{zw}},\, 0 \leq z \leq \frac{79}{100}, \,w = \frac{21}{100} } \,. \label{eq:fixw}
\end{equation} 
We compute $\discr[1]{\C}(m_0,n_0)$ backwards
\begin{align*}
\discr[1]{\C}(m_4, n_2) &= \discr[1]{\C}(m_3, n_1) = 0 \\
\discr[1]{\C}(m_2, n_0) &= z \\
\discr[1]{\C}(m_1, n_0) &= z \cdot \discr[1]{\C}(m_2, n_0) + 1 - z - w \\
\discr[1]{\C}(m_0, n_0) &= z \cdot \discr[1]{\C}(m_1, n_0) + 1 - z - w
\end{align*}
thus, the inequality \eqref{eq:fixw} simplifies to
\begin{equation*}
\dist[1](\M,\N^*) \geq \min \left\{z^3 - z^2 - \frac{21}{100} z + \frac{79}{100} ~ \left| ~ 0 \leq z \leq \frac{79}{100} \right. \right\} %= \dist[1](\M,\N_{xy})
\end{equation*}
The minimum value of the above is achieved at $z = \frac{1}{30} \left(10+\sqrt{163}\right)$. This contradicts the initial assumption $\dist[1](\M,\N^*) < \dist[1](\M,\N_{xy})$ excluding the hypothesis $L(\N^*) = L(\M)$.

Let's consider the case $L(\N^*) \subsetneq L(\M)$ and analyse three possible sub-cases.

If $\ell(m_0) \notin L(\N^*)$, then no state in $\N$ has the same label as the initial state of $\M$. Therefore, by definition of $\dist[1]$ we have $\dist[1](\M,\N^*) = 1$, which is greater than $\dist[1](\M,\N_{xy})$.
 
If $\ell(m_3) \notin L(\N^*)$ we have 
\begin{align*}
\dist[1](\M,\N^*) 
&\geq |\Pr[\M]{\ell(m_3)L^\omega} - \Pr[\N^*]{\ell(m_3)L^\omega}| 
	\tag{Corollary~\ref{cor:disctvupperbound}} \\ 
&= \Pr[\M]{L^*\ell(m_3)L^\omega} \tag{$\Pr[\N^*]{\ell(m_3)L^\omega} = 0$} \\
&= \left(\frac{79}{100}\right)^3  \tag{def.\ \Pr[\M]{}} \\
&> \dist[1](\M,\N_{xy}) \,.
\end{align*}

Analogously, if $\ell(m_4) \notin L(\N^*)$, 
\begin{equation*}
\dist[1](\M,\N^*) \geq \Pr[\M]{L^*\ell(m_4)L^\omega} = \frac{21}{100}  \sum_{i = 0}^{2} \left(\frac{79}{100}\right)^i > \dist[1](\M,\N_{xy}) \,.
\end{equation*}
Therefore $\N_{xy}$ is an optimal solution. 
\qed
\end{example}

\section{The Bounded Approximant Threshold Problem}
\label{sec:comeplexityAM}

The \emph{Bounded Approximant problem} w.r.t.\ $\dist$ ($\BA$) is the threshold decision problem
of $\CBA$, that, given MC $\M$, integer $k \geq 1$, and rational $\epsilon \geq 0$, asks whether 
there exists $\N \in \MC{}$ such that \mbox{$\dist(\M,\N) \leq \epsilon$}.

%As before, $\instance{\M,k,\epsilon}$ will represent a generic instance of AMT.
\smallskip
From the characterization of $\CBA$ as a bilinear optimization problem (Theorem~\ref{thm:optvalue}) we immediately
get the following complexity upper-bound for $\BA$.
\begin{theorem} \label{th:AMT-PSPACE}
For any $\lambda \in (0,1]$, $\BA$ is in PSPACE.
\end{theorem}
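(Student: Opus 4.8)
The plan is to reduce $\BA$ to the decision problem for the existential fragment of the first-order theory of the reals, which is decidable in PSPACE by Canny's procedure. First I would observe that, by Theorem~\ref{thm:optvalue} together with Corollary~\ref{cor:existssolAM}, the optimal value of $\CBA$ on $\instance{\M,k}$ coincides with the optimal value of the bilinear program $F_\lambda\instance{\M,k}$, and that this optimum is \emph{attained} by a feasible solution. Consequently the instance $\instance{\M,k,\epsilon}$ of $\BA$ is a ``yes''-instance if and only if $F_\lambda\instance{\M,k}$ admits a feasible solution whose objective value $d_{m_0,n_0}$ does not exceed $\epsilon$.

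Next I would encode this feasibility question as an existential sentence over $\mathbb{R}$. The variables are the finitely many reals $d_{u,v}$, $c^{m,n}_{u,v}$, $\theta_{n,v}$, and $\alpha_{n,l}$ occurring in Figure~\ref{fig:BMI}; their number is polynomial in $|M|$, $k$, and $|L(\M)|$, hence polynomial in the size of the input. The constraints (\ref{eq:prefix})--(\ref{eq:nonnegative}) are polynomial (in fact bilinear) equalities and inequalities with rational coefficients, and adjoining the single extra inequality $d_{m_0,n_0} \leq \epsilon$ (again rational, since $\epsilon \in \rationals$) yields a quantifier-free formula $\varphi$ of polynomial size. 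The sought sentence is then $\exists \bar{x}.\, \varphi(\bar{x})$.

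To justify the equivalence in both directions I would invoke the chain of identities established just before Figure~\ref{fig:BMI}, in particular \eqref{eq:CBAdiscr}. If $\varphi$ is satisfiable, the assignment determines an MC $\N \in \MC{}$, a coupling structure $\C \in \coupling{\M}{\N}$, and a prefix point $d$ of $\Gamma^\C_\lambda$; by Knaster--Tarski $\discr{\C} \sqsubseteq d$, so $\discr{\C}(m_0,n_0) \leq d_{m_0,n_0} \leq \epsilon$, and Theorem~\ref{th:mincoupling} gives $\dist(\M,\N) \leq \epsilon$. Conversely, an optimal $\N$ (which exists by Corollary~\ref{cor:existssolAM}) with $\dist(\M,\N) \leq \epsilon$, together with a minimal coupling structure and its discrepancy $\discr{\C}$ (a fixed point, hence a prefix point), furnishes a satisfying assignment of $\varphi$. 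Finally I would appeal to the PSPACE decision procedure for the existential theory of the reals: since the reduction from $\BA$ to this theory is computable in polynomial time and the formula is of polynomial size, membership in PSPACE follows.

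I expect the main obstacle to be the bookkeeping in the correctness argument, namely verifying that a satisfying assignment of $\varphi$ really induces a genuine coupling structure and a bona fide prefix point (so that the Knaster--Tarski bound applies), and in the reverse direction that an attained optimum can be witnessed by an \emph{actual} feasible point of the program rather than merely approached in the limit---this is precisely where Corollary~\ref{cor:existssolAM} is essential. The complexity step itself is routine once the reduction is in place, provided one cites the correct PSPACE bound for the existential theory of the reals and notes that the formula size is polynomial in the input.
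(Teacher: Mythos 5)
Your proposal is correct and follows essentially the same route as the paper: both encode the instance $\instance{\M,k,\epsilon}$ as a feasibility question for the bilinear program of Figure~\ref{fig:BMI} augmented with the constraint $d_{m_0,n_0} \leq \epsilon$, and then invoke Canny's PSPACE decision procedure for the existential theory of the reals. The only difference is expository: you spell out the two-directional correctness argument (via Knaster--Tarski, Theorem~\ref{th:mincoupling}, and the attainment result of Corollary~\ref{cor:existssolAM}) that the paper compresses into a single appeal to Theorem~\ref{thm:optvalue}.
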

\begin{proof}
By Theorem~\ref{thm:optvalue}, deciding an instance $\instance{\M,k,\epsilon}$ of $\BA$ can be encoded as a decision problem for the existential theory of the reals, namely, checking the feasibility of the constraints (\ref{eq:bound}--\ref{eq:nonnegative}) in conjunction with $d_{m_0,n_0} \leq \epsilon$. The encoding is polynomial in the size of $\instance{\M,k,\epsilon}$, thus it can be solved in PSPACE (\cf\ Canny~\cite{Canny88}). \qed
\end{proof}

In the rest of the section we provide a complexity lower-bound for $\BA$, by showing that $\BA$ is 
NP-hard via a reduction from \VC.
Recall that, a vertex cover of an undirected graph $G$ is a subset $C$ of vertices such that every edge in $G$ has at least one endpoint in $C$. Given a graph $G$ and a positive integer $h$, 
the \VC\ problem asks if $G$ has a cover of size at most $h$.

The following lemma provides a lower-bound on the $\lambda$-discounted bisimilarity distance between $\M$ and any $\N \in \MC{}$.

\begin{lemma}\label{lem:lowerbound}
For all $m \in M$ and $n \in N$, $\dist(m,n) \geq \lambda \cdot \tau(m)(\set{u \in M}{\ell(u) \notin L(\N)})$.
\end{lemma}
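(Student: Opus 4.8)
The plan is to invoke the minimal coupling criterion (Theorem~\ref{th:mincoupling}), which realizes $\dist$ as the pointwise minimum of the discrepancies $\discr{\C}$ ranging over all coupling structures $\C$ for $\M \oplus \N$. Since $\dist(m,n)$ equals $\discr{\C}(m,n)$ for the minimizing coupling, it suffices to prove the inequality $\discr{\C}(m,n) \geq \lambda \cdot \tau(m)(B)$ for \emph{every} coupling structure $\C$, where $B = \set{u \in M}{\ell(u) \notin L(\N)}$; the claim then follows by taking the minimum over $\C$.

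So I would fix an arbitrary coupling structure $\C$ and argue by cases on the labels of $m$ and $n$ in $\M \oplus \N$. If $\ell(m) \neq \alpha(n)$, then $\discr{\C}(m,n) = 1 \geq \lambda \cdot \tau(m)(B)$ holds trivially, since the right-hand side is bounded by $\lambda \leq 1$. In the case $\ell(m) = \alpha(n)$, I would unfold the defining fixed-point equation $\discr{\C}(m,n) = \lambda \sum_{u,v} \discr{\C}(u,v) \cdot \C(m,n)(u,v)$. The structural observation driving the proof is that $\C(m,n) \in \coupling{\tau(m)}{\theta(n)}$ has left marginal $\tau(m)$ supported on $M$ and right marginal $\theta(n)$ supported on $N$, so its mass is concentrated on $M \times N$. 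Moreover, for any pair $(u,v)$ with $u \in B$ and $v \in N$ we have $\ell(u) \notin L(\N)$ while $\alpha(v) \in L(\N)$, hence $u$ and $v$ carry different labels and $\discr{\C}(u,v) = 1$ by the defining equation of the discrepancy.

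Restricting the sum to these label-mismatched pairs and discarding the remaining nonnegative terms then yields $\discr{\C}(m,n) \geq \lambda \sum_{u \in B,\, v \in N} \C(m,n)(u,v) = \lambda \cdot \C(m,n)(B \times N)$, and the left-marginal constraint, combined with the fact that $\C(m,n)$ places no mass outside $M \times N$, identifies $\C(m,n)(B \times N)$ with $\tau(m)(B)$, finishing the per-coupling estimate. The argument is largely bookkeeping once the right framing is chosen; the only points demanding care are the observation that each coupling $\C(m,n)$ lives on $M \times N$, so that the $B$-marginal can be read off directly, and the fact that the discrepancy is pinned to $1$ precisely on pairs whose labels disagree. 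I do not anticipate a genuine obstacle beyond these routine verifications.
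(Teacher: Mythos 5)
Your proof is correct and follows essentially the same route as the paper's: a case split on whether $\ell(m) = \alpha(n)$, then restricting the coupling-weighted sum to pairs in $B \times N$, where the distance is pinned to $1$ by label mismatch, and finally reading off $\tau(m)(B)$ from the left-marginal constraint. The only (immaterial) difference is the entry point: the paper unfolds the fixed-point equation $\dist = \Psi_\lambda(\dist)$ directly, using a single optimal Kantorovich coupling $\omega \in \coupling{\tau(m)}{\theta(n)}$, whereas you quantify over all coupling structures and then invoke Theorem~\ref{th:mincoupling}; both are sound and the computational core is identical.
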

\begin{proof}%[Proof of Lemma~\ref{lem:lowerbound}]
The thesis holds trivially when $\ell(m) \neq \alpha(n)$, since $\dist(m,n) = 1$. 

Let $\ell(m) = \alpha(n)$, and $M' = \set{u \in M}{\ell(u) \notin L(\N)}$, then the following hold
\begin{align*}
\dist(m,n) &= \textstyle \lambda \sum_{u \in M} \sum_{v \in N} \dist(u,v) \cdot \omega(u,v) \tag{for some $\omega \in \coupling{\tau(m)}{\theta(n)}$} \\
&\geq \textstyle \lambda \sum_{u \in M'} \sum_{v \in N} \dist(u,v) \cdot \omega(u,v) 
\tag{by $M' \subseteq M$} \\
&= \textstyle \lambda \sum_{u \in M'} \sum_{v \in N} \omega(u,v) 
\tag{$\dist(u,v) = 1$ for all $u \in M'$ and $n \in N$} \\
&= \lambda \cdot \tau(m)(M') \tag{by $\omega \in \coupling{\tau(m)}{\theta(n)}$}
\end{align*}
\qed
\end{proof}

We are now ready to present the main result of this session. 
%%%%%%%%%%%%%%%%%%%%%%%
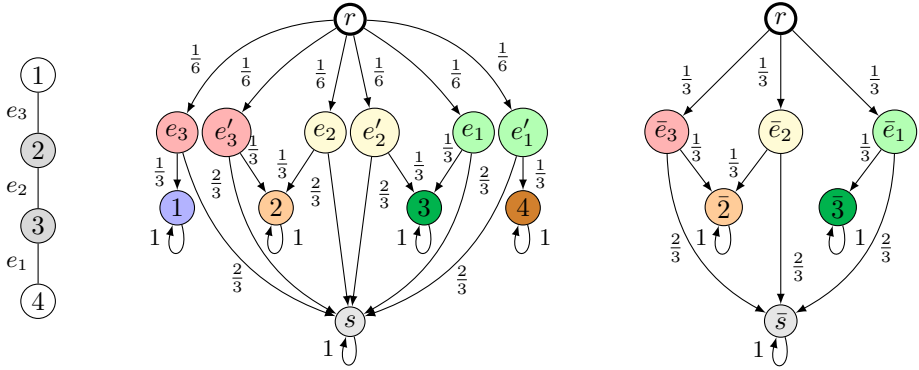
\begin{figure}[t]
\centering
\hfill
\begin{tikzpicture}[baseline=(current bounding box.center), labels]
\def\v{1cm}
\draw 
  (0,0) node[label] (n1) {$1$}
  ($(n1)+(down:\v)$) node[cover] (n2) {$2$}
  ($(n2)+(down:\v)$) node[cover] (n3) {$3$}
  ($(n3)+(down:\v)$) node[label] (n4) {$4$}
  ;
\path[font=\small]
  (n1) edge node[left] {$e_3$} (n2)
  (n2) edge node[left] {$e_2$} (n3)
  (n3) edge node[left] {$e_1$} (n4)
;
\end{tikzpicture}
\qquad\;\;
\begin{tikzpicture}[baseline=(current bounding box.center), labels]
\def\h{0.65cm}
\def\v{1cm}
\draw 
  
  (0,0) node[label,fill=red!30] (a) {$e_3$}
  ($(a)+(\h,0)$) node[label,fill=red!30] (a1) {$e'_3$}
  
  ($(a)+(6*\h,0)$) node[label,fill=green!30] (c) {$e_1$}
  ($(c)+(\h,0)$) node[label,fill=green!30] (c1) {$e'_1$}
  
  ($(a)!0.5!(c)$) node[label,fill=yellow!20] (b) {$e_2$}
  ($(b)+(\h,0)$) node[label,fill=yellow!20] (b1) {$e'_2$}
  
  ($(a)!0.5!(c1)+(0,1.5*\v)$) node[initial] (r) {$r$}
  
  ($(a)+(0,-\v)$) node[label,fill=blue!30] (n1) {$1$}
  ($(a1)!0.5!(b)+(0,-\v)$) node[label,fill=orange!40] (n2) {$2$}
  ($(b1)!0.5!(c)+(0,-\v)$) node[label,fill=green!70!blue] (n3) {$3$}
  ($(c1)+(0,-\v)$) node[label,fill=orange!65!gray] (n4) {$4$}
   
  ($(n1)!0.5!(n4)+(0,-1.5\v)$) node[label,fill=gray!20] (s) {$s$}
  ($(r)+(3*\h,-3*\v)+(right:0.5)$) coordinate (s1)
  ;

\path[-latex, font=\small]
  (r) edge[bend right] node[near end, yshift=2mm,left] {$\frac{1}{6}$} (a)
  (r) edge[bend right=10] node[near end, yshift=2mm,left] {$\frac{1}{6}$} (a1)
  (r) edge node[left] {$\frac{1}{6}$} (b)
  (r) edge node[right] {$\frac{1}{6}$} (b1)
  (r) edge[bend left=10] node[near end, yshift=2mm, right] {$\frac{1}{6}$} (c)
  (r) edge[bend left] node[near end, yshift=2mm, right] {$\frac{1}{6}$} (c1)
  
  (a) edge node[left] {$\frac{1}{3}$} (n1)
  (a1) edge node[above] {$\frac{1}{3}$} (n2)
  (b) edge node[yshift=1mm,left] {$\frac{1}{3}$} (n2)
  (b1) edge node[yshift=1mm,right] {$\frac{1}{3}$} (n3)
  (c) edge node[above] {$\frac{1}{3}$} (n3)
  (c1) edge node[right] {$\frac{1}{3}$} (n4)

  (n1) edge[loop below] node[very near end, left] {$1$} (n1)
  (n2) edge[loop below] node[very near start, right] {$1$} (n2) 
  (n3) edge[loop below] node[very near end, left] {$1$} (n3)
  (n4) edge[loop below] node[very near start, right] {$1$} (n4)
  
  (s) edge[loop below] node[very near end, left] {$1$} (s)
  
  (a) edge[bend right=27] node[midway, below] {$\frac{2}{3}$} (s)
  (a1) edge[bend right=27] node[very near start, left] {$\frac{2}{3}$} (s)
  (b) edge node[near start, left] {$\frac{2}{3}$} (s)
  (b1) edge node[near start, right] {$\frac{2}{3}$} (s)
  (c) edge[bend left=27] node[very near start, right] {$\frac{2}{3}$} (s) 
  (c1) edge[bend left=27] node[midway, below] {$\frac{2}{3}$} (s)   
  ;
\end{tikzpicture}
\qquad\;\;
\begin{tikzpicture}[baseline=(current bounding box.center), labels]
\def\h{0.6cm}
\def\v{1cm}
\draw 
  
  (0,0) node[label,label,fill=red!30] (a) {$\bar e_3$}
  ($(a)+(5*\h,0)$) node[label,label,fill=green!30] (c) {$\bar e_1$}
  ($(a)!0.5!(c)$) node[label,fill=yellow!20] (b) {$\bar e_2$}
  
  ($(a)!0.5!(b)+(0,-\v)$) node[label,fill=orange!40] (n2) {$\bar 2$}
  ($(b)!0.5!(c)+(0,-\v)$) node[label,fill=green!70!blue] (n3) {$\bar 3$}
  
  ($(n2)!0.5!(n3)+(0,-1.5\v)$) node[label,fill=gray!20] (s) {$\bar s$}
  ($(r)+(3*\h,-3*\v)+(right:0.5)$) coordinate (s1)
  
  ($(a)!0.5!(c)+(0,1.5*\v)$) node[initial] (r) {$r$}
  ;

\path[-latex, font=\small]
  (r) edge node[near end, yshift=2mm, left] {$\frac{1}{3}$} (a)
  (r) edge node[left] {$\frac{1}{3}$} (b)
  (r) edge node[near end, yshift=2mm, right] {$\frac{1}{3}$} (c)
  
  (a) edge node[above] {$\frac{1}{3}$} (n2)
  (b) edge node[yshift=1mm, left] {$\frac{1}{3}$} (n2)
  (c) edge node[above] {$\frac{1}{3}$} (n3)

  (n2) edge[loop below] node[very near end, left] {$1$} (n2)
  (n3) edge[loop below] node[very near start, right] {$1$} (n3)
  
  (s) edge[loop below] node[very near end, left] {$1$} (s)
  
  (a) edge[bend right] node[midway, left] {$\frac{2}{3}$} (s)
  (b) edge node[near end, right] {$\frac{2}{3}$} (s)
  (c) edge[bend left] node[midway, right] {$\frac{2}{3}$} (s)   
  ;
\end{tikzpicture}
\caption{(Left) An undirected graph $G$; (Center) The MC $\M_G$ associated to the graph $G$; 
(Right) The MC $\M_C$ associated to the vertex cover $C = \{2,3\}$ of $G$. (see Thm.~\ref{th:NPhardAM}).}
\label{fig:ReductionFromVertexCover}
\end{figure}
%%%%%%%%%%%%%%%%%%%%%%%

\begin{theorem} \label{th:NPhardAM}
For any $\lambda \in (0,1]$, $\BA$ is NP-hard.
\end{theorem}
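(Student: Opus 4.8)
The plan is to prove NP-hardness by a polynomial-time reduction from \VC. Given an undirected graph $G$ with edge set $E$, $|E| = m$, and a bound $h$, I build the (minimal) MC $\M_G$ sketched in Figure~\ref{fig:ReductionFromVertexCover}: an initial state $r$ that moves with uniform probability $\frac{1}{2m}$ to $2m$ \emph{edge-gadgets}, namely two copies $e,e'$ per edge $e=\{u,v\}$ sharing a common edge-label $\ell_e$, where $e$ moves with probability $\frac13$ to the uniquely labelled vertex-state $u$ and with probability $\frac23$ to an absorbing sink $s$, and $e'$ moves analogously to $v$ and $s$; all vertex-states and $s$ are absorbing. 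The reduction outputs the \BA\ instance $\instance{\M_G,k,\epsilon}$ with $k=m+h+2$ and threshold $\epsilon=\frac{\lambda^2}{6}$ (the parameters depend on the fixed $\lambda$), and is clearly polynomial.

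For completeness, from a vertex cover $C$ with $|C|\le h$ I construct the approximant $\M_C$ that collapses the two copies of each edge into a single edge-state $\bar e$ pointing with probability $\frac13$ to a chosen covering endpoint $w_e\in C\cap e$ and with probability $\frac23$ to a sink, keeping only the vertex-states for $v\in C$. Then $\M_C$ has $1+m+|C|+1\le k$ states. Applying Theorem~\ref{th:mincoupling} with the coupling that sends both copies of $e$ to $\bar e$ and the sink to the sink, for each edge exactly one copy matches its covering endpoint (distance $0$) while the other incurs $\frac{\lambda}{3}$; a short computation gives $\dist(\M_G,\M_C)=\lambda\cdot\frac{\lambda}{6}=\epsilon$, so $\instance{\M_G,k,\epsilon}$ is a yes-instance.

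For soundness I must show that any $\N$ with at most $k$ states and $\dist(\M_G,\N)\le\epsilon$ yields a cover of size $\le h$. By Lemma~\ref{lem:usefullabels} assume $L(\N)\subseteq L(\M_G)$; let $C=\set{v}{\ell_v\in L(\N)}$ and let $y_e$ be the number of states of $\N$ carrying edge-label $\ell_e$. Since the initial labels agree, $\dist(\M_G,\N)=\lambda\,\K[\dist]{\tau(r),\theta(n_0)}$, and bounding the Kantorovich term gadget-by-gadget (each copy contributes $\ge\frac{1}{2m}\min_z\dist(\cdot,z)$, with $\min_z\dist(e,z)\ge\frac{\lambda}{3}$ whenever $e$'s target lies outside $C$, by Lemma~\ref{lem:lowerbound}) yields a clean per-edge cost: matching \emph{both} copies of $e$ exactly needs $y_e\ge2$ and both endpoints of $e$ in $C$; an edge with a single endpoint in $C$ costs $\ge\frac{\lambda}{6m}$; and an edge with no endpoint in $C$ costs $\ge\frac{\lambda}{3m}$ (the sink-label is forced present by the $\frac23$-sink mass, else the distance blows up). Because the root-, sink-, edge- and vertex-labels sit on pairwise distinct states, $|N|\ge 2+\sum_e y_e+|C|$, so the budget reads $\sum_e y_e+|C|\le m+h$.

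The main obstacle is turning these local costs into the conclusion $c^\star\le h$, where $c^\star$ is the minimum vertex-cover size of $G$; this is the crux. The danger is that $\N$ deviates from the shape of $\M_C$: by spending a second edge-state on some edges it can zero out their contribution and thereby free distance budget to leave other edges uncovered. The argument must show this never pays, since double-matching an edge buys a distance saving of only $\frac{\lambda^2}{6m}$ while costing one extra edge-state \emph{and} a second endpoint in $C$, so each such trade is strictly dominated by simply covering an edge. Pushing this joint distance/budget accounting through should give $\sum_e y_e+|C|\ge m+c^\star$, whence $c^\star\le h$; extracting the cover then completes the reduction and establishes that \BA\ is NP-hard.
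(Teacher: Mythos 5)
Your overall architecture mirrors the paper's proof: a reduction from \VC\ with a root, a sink, uniquely labelled vertex states and twin edge states sharing a label, completeness via the collapsed approximant $\M_C$, and soundness via gadget-by-gadget Kantorovich lower bounds combined with state counting. The crucial difference is that you give the edge gadgets \emph{constant} transition probabilities ($\tfrac13$ to the vertex, $\tfrac23$ to the sink) and a \emph{constant} threshold $\epsilon = \tfrac{\lambda^2}{6}$, whereas the paper uses probability $\tfrac1m$ to the vertex and threshold $\tfrac{\lambda^2}{2m}$. This seemingly cosmetic change opens a genuine gap in your soundness direction. In the paper's scaling, an approximant $\N$ omitting some edge label entirely pays at least $\tfrac{\lambda}{m} > \tfrac{\lambda^2}{2m}$, so the paper may assume every edge label occurs in $\N$; that assumption is exactly what licenses the count $\sum_e y_e \geq 2b + (m-b)$, with $b$ the number of zero-cost edges. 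In your scaling, dropping an edge label costs only $\tfrac{\lambda}{m}$, which is \emph{below} your threshold $\tfrac{\lambda^2}{6}$ as soon as $m > 6/\lambda$, so $y_e = 0$ is a live option for the adversary, and each such edge refunds one unit of state budget. Your accounting paragraph considers only the double-matching trade and tacitly assumes $y_e \geq 1$ for every edge: if $z$ edges have $y_e = 0$, the inequalities you actually state (the per-edge costs by $C$-membership, $\sum_e y_e + |C| \leq m+h$, and zero cost requiring $y_e \geq 2$) yield only a cover of size at most $h + z$, not $h$; your target inequality $\sum_e y_e + |C| \geq m + c^\star$ does not follow from what you wrote. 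Your remark that the sink label is forced present (cost $\geq \tfrac{2\lambda^2}{3}$, a genuine blow-up) is correct, but the analogous case for edge labels does \emph{not} blow up, and you never address it.

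The gap is repairable, which is why it is a missing case rather than a dead end: a dropped edge costs at least $\tfrac{\lambda}{m} \geq \tfrac{\lambda^2}{m} = 6u$, where $u = \tfrac{\lambda^2}{6m}$ is the baseline per-edge cost, so the distance budget forces $a_1 + 5z \leq b$ (with $a_1$ the uncovered edges that do have an edge state), and feeding this back into the state count recovers $c^\star \leq |C| + a_1 + z \leq h$. Alternatively you could simply adopt the paper's $\tfrac1m$-scaled gadget probabilities and threshold $\tfrac{\lambda^2}{2m}$, which eliminates the $y_e=0$ case outright. Note also a second, smaller omission needed for the arithmetic $a \leq b$: an edge with \emph{both} endpoints in $C$ but only one edge state ($y_e = 1$) must be shown to cost at least $u$; knowing only that zero cost requires $y_e \geq 2$ and both endpoints in $C$ is qualitative, not quantitative. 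The paper proves this (its case B.1) essentially from the fact that $\dist(e,\bar e) + \dist(e',\bar e) \geq \lambda \cdot \tfrac13$ in your parametrization, since $\tau(e)$ and $\tau(e')$ place their $\tfrac13$ masses on differently labelled absorbing states and the Kantorovich distance is a pseudometric; you should include the corresponding estimate.
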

\begin{proof}
We provide a polynomial-time many-one reduction from \VC. 

Let $\instance{G=(V,E), h}$ be an instance of \VC\ and let $m = |E|$. 
Without loss of generality we assume $m \geq 2$. %and $k < n$.

From $G$ we construct the MC $\M_G = (M, \tau, \ell)$ having the following states:
a root state $r$ (thought of as the initial state); a sink state $s$; a state $v$ for each vertex in $V$; and two `twin' states $e$ and $e'$ for each edge $e \in E$. 
In $\M_G$ each pair of twin edge states have the same label, all other nodes have pairwise distinct labels. %The transition probabilities are defined as follows.
The sink state $s$ and all vertex states loop to themselves with probability $1$; the root state $r$ goes with uniform probability $\frac{1}{2m}$ to each edge state; for each edge $e = (v_1,v_2)$ in $G$, the state $e$ (resp.\ $e'$) in $\M_G$ goes with probability $\frac{1}{m}$ to $v_1$ (resp.\ $v_2$) and probability $1 - \frac{1}{m}$ to the sink state $s$ (\cf\ Figure~\ref{fig:ReductionFromVertexCover} for an example of the construction of $\M_G$).
 Next we show that
\begin{align*}
  \instance{G,h} \in \VC
  &&
  \text{iff}
  &&
  \textstyle
  \instance{\M_G, m+h+2, \frac{\lambda^2}{2 m}} \in \BA \,.
\end{align*}

($\Rightarrow$) Let $C$ be a $h$-vertex cover of $G$.
Construct $\M_C \in \MC[m+h+2]{}$ by taking a copy of $\M_G$, removing all vertex states in $V \setminus C$, then removing one twin edge state for each edge in $E$ making sure to keep those which are going with probability greater than zero to some vertex state in $C$ ---if both endpoints of the edge are in the cover $C$, just pick one twin edge state at random. Finally, redistribute uniformly the transition probabilities of the root state over the remaining edge states (\cf\ Figure~\ref{fig:ReductionFromVertexCover}).

Next we show that $\dist(\M_G, \M_C) \leq \frac{\lambda^2}{2m}$.
For convenience, the states in $\M_C$ will be marked with a bar to distinguish them from their counterpart in $\M_G$.
By construction of $\M_G,\M_C$, for each $e \in E$, $\dist(e,\bar{e}) + \dist(e',\bar{e}) = \frac{\lambda}{m}$ since either $e$ or $e'$ is at distance $0$ from $\bar{e}$ while the other state differs from $\bar{e}$ only for the transition to a vertex state.
Therefore,
\begin{equation*}
  \dist(\M_G, \M_C)
  = \dist(r, \bar{r}) 
  = \frac{\lambda}{2m} {\textstyle \sum_{e \in E} \big( \dist(e,\bar{e}) + \dist(e',\bar{e}) \big)}
  = \frac{\lambda^2}{2m} \,.
\end{equation*}

($\Leftarrow$) Assume that there exists $\N = (N,\theta,\alpha) \in \MC[m+h+2]{}$
s.t. $\dist(\M_G,\N) \leq \frac{\lambda^2}{2m}$. We claim that $G$ has a vertex cover of size $h$.
%Assume towards contradiction that there exists $\N = (N,\theta,\alpha) \in \MC[m+h+2]{}$ such that $\dist(\M_G,\N) \leq \frac{\lambda^2}{2m}$ but $G$ does not have a vertex cover of size $h$. 
Without loss of generality we may assume that the following hold for $\N$: 
\begin{enumerate}
% and $h = h^* -1$ where $h^*$ is the size of a minimal vertex cover of $G$
\item $L(\N) \subseteq L(\M_G)$; \label{itm:labelsN}
\item $\N$ has initial state $\bar{r}$ with the same label than $r$, i.e., $\alpha(\bar{r}) = \ell(r)$; \label{itm:initialN}
\item $\ell(s) \in L(\N)$; and \label{itm:sink}
\item $\set{\ell(e)}{e \in E} \subseteq L(\N)$. \label{itm:alledges}
\item $\N$ has minimal size, i.e., $|N|$ is minimal \label{itm:sizeN}
\end{enumerate}
%Assumption \eqref{itm:sizeN} requires $\N$ to be of maximal size. Indeed, the larger the state space of $\N$ the smaller the distance $\dist(\M_G,\N)$. 
Assumption \eqref{itm:labelsN} follows by Lemma~\ref{lem:usefullabels}. 
Assumptions \eqref{itm:initialN}, \eqref{itm:sink} and \eqref{itm:alledges} are necessary to ensure $\dist(\M_G,\N) \leq \frac{\lambda^2}{2m}$. 
Indeed, if $\alpha(\bar{r}) \neq \ell(r)$, $\dist(\M_G,\N) = 1 > \frac{\lambda^2}{2m}$. If $\ell(s) \notin L(\N)$, then we get the following contradiction
\begin{align*}
 \dist(\M_G, \N) 
 &\geq |\Pr[{(\M_G)_\lambda}]{L^*\ell(s)L^\omega} - \Pr[{(\N)_\lambda}]{L^*\ell(s)L^\omega}|
 \tag{Corollary~\ref{cor:disctvupperbound}} \\
 &= \Pr[{(\M_G)_\lambda}]{L^*\ell(s)L^\omega} \tag{$\Pr[{(\N)_\lambda}]{L^*\ell(s)L^\omega} = 0$} \\
 &= \lambda^2 \left(1 - \frac{1}{m}\right) \tag{definition of $\Pr{}$} \\
 & > \frac{\lambda^2}{2m} \tag{$m \geq 2$} \,.
\end{align*}
Finally, if $\ell(e) \notin L(\N)$, for some $e \in E$, then we get the following contradiction
\begin{align*}
 \dist(\M_G, \N) &= \dist(r, \bar{r}) \tag{$r$ and $\bar{r}$ are the initial states} \\ 
 &\geq \lambda \cdot \tau(r)(\{e, e'\}) \tag{Lemma~\ref{lem:lowerbound}} \\
 &= \frac{\lambda}{m}  \tag{definition of $\M_G$}\\
 &> \frac{\lambda^2}{2m} \tag{$\lambda \in (0,1]$} \,.
\end{align*}

By assumption \eqref{itm:sizeN} we have that the sink state $s$ in $\M_G$ requires exacly one sink state $\bar{s}$ in $\N$ with same label than $s$ and having a self loop with probability $1$. Similarly, for vertex states $v$ in $\M$, $\N$ requires at most one vertex state $\bar{v}$ with same label than $v$ and having a self loop with probability $1$. Lastly, for each edge $e \in E$, $\N$ requires at most two edge states $\bar{e}$ and $\bar{e}'$ to respectively represent the edge states $e$ and $e'$ in $\M_G$; clearly $\ell(e) = \alpha(\bar{e})$ and $\ell(e') = \ell(\bar{e}')$.

By the assumptions made above, we have that $\{\bar{r},\bar{s} \} \subseteq N$ and 
$\{ \bar{e}, \bar{e}' \} \cap N \neq \emptyset$, for each edge $e \in E$. %We claim that the set of vertex states belonging to $\N$, say $\bar{V}$ induces a vertex cover for $G$ of size $h$.
By Theorem~\ref{th:mincoupling}, there exists $\C \in \coupling{\M}{\N}$ such that $\dist = \discr{\C}$. Then,
\begin{align}
  \dist(\M_G, \N) &=  \dist(r, \bar{r}) \tag{$r$ and $\bar{r}$ are the initial states}\\
  &= \lambda \textstyle\sum_{m \in M, n \in N} \dist(m,n) \cdot \C(r,\bar{r})(m,n)  \tag{def. $\Gamma_\lambda^\C$ and $\discr{\C}$} \\
  &\geq \lambda \cdot  \min \left\{ \sum_{m \in M, n \in N} \dist(m,n) \cdot \omega(m,n) \, \left| \,  
  \begin{aligned}[c]
  & \omega \in \coupling{\tau(r)}{\mu} \\
  & \mu \in \D(N)
  \end{aligned}
  \right. \right\} \notag \\
  &= \lambda \sum_{e \in E} \min \left\{ f_e(\epsilon_e, \epsilon'_e) \, \left| \,  \epsilon_e, \epsilon'_e \in \left[0, \frac{1}{2m} \right] \right. \right\} \,,
  \label{eq:minfe}
  %&= \frac{\lambda}{2m} \textstyle \sum_{e \in E} \big( \dist(e,\bar{e}) + \dist(\tilde{e},\bar{e}) \big) 
\end{align}
where for arbitrary $e \in E$, $f_e$ is a function in $\epsilon_e$ and $\epsilon'_e$ defined as
\begin{equation*}
f_e(\epsilon_e, \epsilon'_e) = 
\left( \frac{1}{2m} - \epsilon_e \right) \cdot \dist(e,\bar{e}) +
\left( \frac{1}{2m} - \epsilon'_e \right) \cdot \dist(e',\bar{e}) +
\epsilon_e \cdot \dist(e,\bar{e}') + \epsilon'_e \cdot \dist(e',\bar{e}') \,.
\end{equation*}
The equality \eqref{eq:minfe} follows by the fact that an optimal coupling $\omega$ can be found among those with support included in $\bigcup_{e \in E} \{ (e,\bar{e}), (e',\bar{e}), (e,\bar{e}'),  (e',\bar{e}') \}$ (\cf\ Figure~\ref{fig:optcoupl} (left)). Note that the above formulation is general enough to model the case when $\bar{e} \notin N$ (resp. $\bar{e}' \notin N$) in which case $\epsilon_e = \epsilon'_e = \frac{1}{2m}$ (resp.\ $\epsilon_e = \epsilon'_e = 0$).
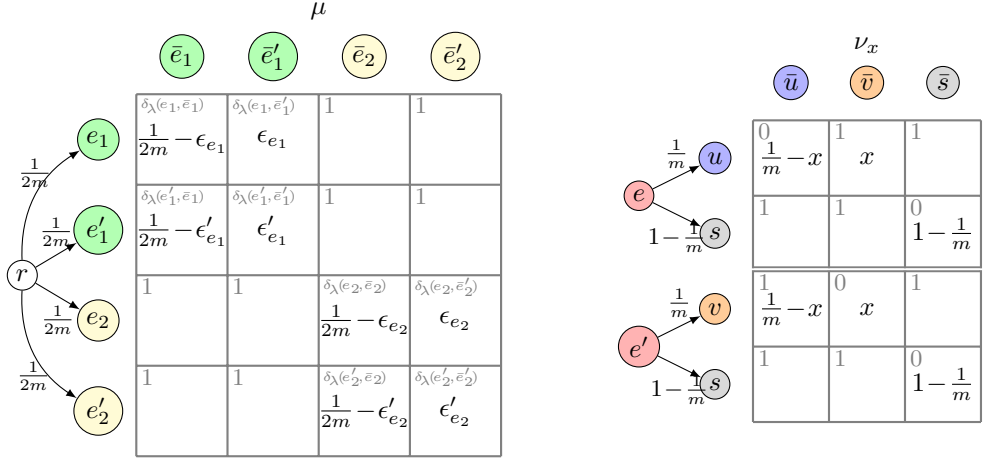
\begin{figure}
\centering
\begin{tikzpicture}[baseline=(current bounding box.center), labels]
\def\h{1.2cm}
\def\v{1.2cm}

\node at (1.2,-1.2) (corner) {};

 \draw[step=1.2cm,gray,thick] (1.2,-1.2) grid (1.2*5,-6);
 %\draw[step=0.5cm,gray,very thin] (0,0) grid (3,-5);

\draw 
  %(1.2,-1.2) node[label] (corner) {c}
  ($(corner)+(-0.5,-\h*0.5)$) node[label,fill=green!30] (e1) {$e_1$}
  ($(e1)+(0,-\h)$) node[label,fill=green!30] (e1t) {$e'_1$}
  ($(e1)+(0,-2*\h)$) node[label,fill=yellow!20] (e2) {$e_2$}
  ($(e1)+(0,-3*\h)$) node[label,fill=yellow!20] (e2t) {$e'_2$}
  ($(e1)!0.5!(e2t)+(-1,0)$) node[label] (r) {$r$}
  
  ($(corner)+(0.5*\v,0.5)$) node[label,fill=green!30] (e1b) {$\bar{e}_1$}
  ($(e1b)+(\v,0)$) node[label,fill=green!30] (e1tb) {$\bar{e}'_1$}
  ($(e1b)+(2*\v,0)$) node[label,fill=yellow!20] (e2b) {$\bar{e}_2$}
  ($(e1b)+(3*\v,0)$) node[label,fill=yellow!20] (e2tb) {$\bar{e}'_2$}
  ($(e1b)!0.5!(e2tb)+(0,0.6)$) node (rb) {$\mu$}
  
  % cell values
  ($(corner)!0.5!($(corner)+(\v,-\h)$)$) node (cellw) {$\frac{1}{2m}\!-\! \epsilon_{e_1}$}
  ($(cellw)+(0,-\h)$) node {$\frac{1}{2m}\!-\! \epsilon'_{e_1}$}
  ($(cellw)+(\v,0)$) node {$\epsilon_{e_1}$}
  ($(cellw)+(\v,-\h)$) node {$\epsilon'_{e_1}$}
  
  ($($(corner)+(2*\v,-2*\h)$)!0.5!($(corner)+(3*\v,-3*\h)$)$) node (cellw1) {$\frac{1}{2m}\!-\! \epsilon_{e_2}$}
  ($(cellw1)+(0,-\v)$) node {$\frac{1}{2m}\!-\! \epsilon'_{e_2}$}
  ($(cellw1)+(\h,0)$) node {$\epsilon_{e_2}$}
  ($(cellw1)+(\h,-\v)$) node {$\epsilon'_{e_2}$}
  
  % cell costs
  ($(corner)+(0.4*\h,-0.15)$) node[font=\tiny, color=gray] (cellwcost) {$\dist[\!\lambda]\!(\!e_1\!,\!\bar{e}_1\!)$}
  ($(cellwcost)+(\h,0)$) node[font=\tiny, color=gray]  {$\dist[\!\lambda]\!(\!e_1\!,\!\bar{e}'_1\!)$}
  ($(cellwcost)+(0,-\v)$) node[font=\tiny, color=gray]  {$\dist[\!\lambda]\!(\!e'_1\!,\!\bar{e}_1\!)$}
  ($(cellwcost)+(\h,-\v)$) node[font=\tiny, color=gray]  {$\dist[\!\lambda]\!(\!e'_1\!,\!\bar{e}'_1\!)$}
  
  ($($(corner)+(2*\h,-2*\v)$)+(0.4*\h,-0.15)$) node[font=\tiny, color=gray] (cellwcost1) {$\dist[\!\lambda]\!(\!e_2\!,\!\bar{e}_2\!)$}
  ($(cellwcost1)+(\h,0)$) node[font=\tiny, color=gray]  {$\dist[\!\lambda]\!(\!e_2\!,\!\bar{e}'_2\!)$}
  ($(cellwcost1)+(0,-\v)$) node[font=\tiny, color=gray]  {$\dist[\!\lambda]\!(\!e'_2\!,\!\bar{e}_2\!)$}
  ($(cellwcost1)+(\h,-\v)$) node[font=\tiny, color=gray]  {$\dist[\!\lambda]\!(\!e'_2\!,\!\bar{e}'_2\!)$}
  
  ($($(corner)+(2*\h,0)$)+(0.12*\h,-0.15)$) node[font=\small, color=gray] (cellwcost2) {$1$}
  ($(cellwcost2)+(\h,0)$) node[font=\small, color=gray]  {$1$}
  ($(cellwcost2)+(0,-\v)$) node[font=\small, color=gray]  {$1$}
  ($(cellwcost2)+(\h,-\v)$) node[font=\small, color=gray]  {$1$}
  
  ($($(corner)+(0,-2*\v)$)+(0.12*\h,-0.15)$) node[font=\small, color=gray] (cellwcost2) {$1$}
  ($(cellwcost2)+(\h,0)$) node[font=\small, color=gray]  {$1$}
  ($(cellwcost2)+(0,-\v)$) node[font=\small, color=gray]  {$1$}
  ($(cellwcost2)+(\h,-\v)$) node[font=\small, color=gray]  {$1$}
  ;
  
  \path[-latex, font=\small]
  (r) edge[bend left] node[above] {$\frac{1}{2m}$} (e1)
  	edge node[above] {$\frac{1}{2m}$} (e1t)
	edge node[below] {$\frac{1}{2m}$} (e2)
	edge[bend right] node[below] {$\frac{1}{2m}$} (e2t)
  ;
\end{tikzpicture}
\qquad\qquad
\begin{tikzpicture}[baseline=(current bounding box.center), labels]
\def\h{1cm}
\def\v{1cm}
 \draw[step=1cm,gray,thick] (1,-1) grid (4,-5);
 \draw[gray,thick] (1,-2.95) -- (4,-2.95);
 \draw[fill=white, draw=white] (0.5,-2.97) rectangle (4.3,-2.98);
 %\draw[step=0.5cm,gray,very thin] (0,0) grid (5,-5);
\draw 
  (-0.5,-2) node[label,fill=red!30] (e) {$e$}
  ($(e)+(\h,0.5*\v)$) node[label,fill=blue!30] (u) {$u$}
  ($(e)+(\h,-0.5*\v)$) node[label,fill=gray!30] (s0) {$s$}
  
  ($(e)+(0,-2*\v)$) node[label,fill=red!30] (et) {$e'$}
  ($(et)+(\h,0.5*\v)$) node[label,fill=orange!40] (v) {$v$}
  ($(et)+(\h,-0.5*\v)$) node[label,fill=gray!30] (s1) {$s$}  
  
  ($(u)+(\v,\v)$) node[label,fill=blue!30] (ub) {$\bar{u}$}
  ($(ub)+(\v,0)$) node[label,fill=orange!40] (vb) {$\bar{v}$}
  ($(vb)+(\v,0)$) node[label,fill=gray!30] (sb) {$\bar{s}$}
  ($(sb)!0.5!(ub)+(0,0.5)$) node (eb) {$\nu_x$}
  
  (1.5,-1.5) node {$\frac{1}{m} \!-\!x$}
  (1.15,-1.15) node[font=\small, color=gray] {$0$}
  (2.5,-1.5) node {$x$}
  (2.15,-1.15) node[font=\small, color=gray] {$1$}
  (3.15,-1.15) node[font=\small, color=gray] {$1$}
  (1.15,-2.15) node[font=\small, color=gray] {$1$}
  (2.15,-2.15) node[font=\small, color=gray] {$1$}
  (3.5,-2.5) node {$1\!-\!\frac{1}{m}$}
  (3.15,-2.15) node[font=\small, color=gray] {$0$}
  
  (1.5,-3.5) node {$\frac{1}{m} \!-\!x$}
  (1.15,-3.15) node[font=\small, color=gray] {$1$}
  (2.5,-3.5) node {$x$}
  (2.15,-3.15) node[font=\small, color=gray] {$0$}
  (3.15,-3.15) node[font=\small, color=gray] {$1$}
  (1.15,-4.15) node[font=\small, color=gray] {$1$}
  (2.15,-4.15) node[font=\small, color=gray] {$1$}
  (3.5,-4.5) node {$1\!-\!\frac{1}{m}$}
  (3.15,-4.15) node[font=\small, color=gray] {$0$}
  ;
  
  \path[-latex, font=\small]
  (e) edge node[above] {$\frac{1}{m}$} (u)
  (e) edge node[below] {$1\!-\! \frac{1}{m}$} (s0)
  
  (et) edge node[above] {$\frac{1}{m}$} (v)
  (et) edge node[below] {$1\!-\! \frac{1}{m}$} (s1)  
  
%  (eb) edge node[left] {$\frac{1}{m}\!-\! x$} (ub)
%  (eb) edge node[right] {$x$} (vb)
%  (eb) edge node[right] {$1\!-\! \frac{1}{m}$} (sb) 
  ;
\end{tikzpicture}
\caption{(Left) Couplings $\omega \in \coupling{\tau(r)}{\mu}$ parametric in $\epsilon, \epsilon' \in [0,\frac{1}{2m}]^m$, for $m =2$; 
(Right) Couplings $\omega \in \coupling{\tau(e)}{\nu_x}$ and $\tilde{\omega} \in \coupling{\tau(e')}{\nu_x}$ parametric in $x \in [0,\frac{1}{m}]$ (\cf\ Theorem~\ref{th:NPhardAM}).}
\label{fig:optcoupl}
\end{figure}

Consider an arbitrary edge $e \in E$ with endpoints $u,v \in V$. Next, we lower-bounds for $\min \left\{ f_e(\epsilon_e, \epsilon'_e) \, \left| \,  \epsilon_e, \epsilon'_e \in \left[0, \frac{1}{2m} \right] \right. \right\}$. 
The edge $e$ may be covered by $\bar{V}$ (i.e., $\bar{u}$ or $\bar{v}$ appear in $\N$, \cf\ Case~\ref{itm:coverrededge}) or not (i.e., neither $\bar{u}$ nor $\bar{v}$ appear in $\N$, \cf\ Case~\ref{itm:uncoverrededge}).
\begin{enumerate}[label={\Alph*.}]
\item \label{itm:uncoverrededge} If neither $\bar{u}$ nor $\bar{v}$ appear in $\N$. Then, by Lemma~\ref{lem:lowerbound}, for arbitrary $p \in \{e,e'\}$ and $q \in \{\bar{e}, \bar{e}'\}$, we have $\dist(p,q) \geq \frac{\lambda}{m}$.  Therefore, for arbitrary $\epsilon_e, \epsilon'_e \in [0, \frac{1}{2m}]$
\begin{equation*}
f_e(\epsilon_e, \epsilon'_e) \geq \frac{\lambda}{m} \left( \frac{1}{2m} - \epsilon_e + \frac{1}{2m} - \epsilon'_e + \epsilon_e + \epsilon'_e \right) = \frac{\lambda}{m^2} \,.
\end{equation*}

\item \label{itm:coverrededge} If $\bar{u}$ or $\bar{v}$ appear in $\N$. We distinguish two subcases: when both $\bar{e}$ and $\bar{e}'$ appear in $\N$, or only one of them.
\begin{enumerate}[label*={\arabic*.}]
\item If $\bar{e}' \notin N$ (resp.\ $\bar{e} \notin N$), then $\epsilon_e = \epsilon'_e = 0$ 
(resp.\ $\epsilon_e = \epsilon'_e = \frac{1}{2m}$). 
\begin{align*}
&f_e(\epsilon_e, \epsilon'_e) \\
&= \frac{1}{2m} \big( \dist(e,\bar{e}) + \dist(e',\bar{e}) \big) 
\tag{$\epsilon_e = \epsilon'_e = 0$} \\
&= \frac{\lambda}{2m} \sum_{p \in M, q \in N} \big( \C(e,\bar{e})(p,q) + \C(e',\bar{e})(p,q) \big) \cdot \dist(p,q) \tag{def.\ $\C$ and $\discr{\C}$}\\
&\geq \frac{\lambda}{2m} \cdot \min \left\{ \sum_{p \in M,q \in N} \big(\omega(p,q) + \tilde{\omega}(p,q)\big) \cdot \dist(p,q)
\, \left| \,
\begin{aligned}[c]
	& \omega \in \coupling{\tau(e)}{\nu}\\
	& \tilde{\omega} \in \coupling{\tau(e')}{\nu}\\
	& \nu \in \D(N)
\end{aligned}
\right. \right\} \\
&= \frac{\lambda}{2m} \cdot \min \left\{\left. x + \left( \frac{1}{m} -x \right)\right| 0\leq x \leq 1-\frac{1}{m} \right\} \tag{\cf~Figure.~\ref{fig:optcoupl} (right)}\\
&= \frac{\lambda}{2m^2} \,.
\end{align*} 
Analogously, the case $\bar{e} \notin N$ has the same lower-bound. 
\item If both $\bar{e}$ and $\bar{e}'$ are in $\N$, then $0 < \epsilon_e + \epsilon'_e < \frac{1}{m}$. We further spit in two subcases: when both $\bar{u}$ and $\bar{v}$ appear in $\N$, or only one of them.
\begin{enumerate}[label*={\arabic*.}]
\item If both $\bar{u}$ and $\bar{v}$ are in $\N$, then $f_e(\epsilon_e, \epsilon'_e) \geq 0$ because
$\dist(p,q) \geq 0$ for arbitrary $p \in M$ and $q \in N$. 
\item If $\bar{u} \notin N$ (resp.\ $\bar{v} \notin N$), then, for any $\epsilon_e, \epsilon'_e \in [0, \frac{1}{m}]$
\begin{align*}
&f_e(\epsilon_e, \epsilon'_e) \\ 
&\geq \left( \frac{1}{2m} - \epsilon_e \right) \cdot \dist(e,\bar{e}) +
\epsilon_e \cdot \dist(e,\bar{e}') \tag{$\dist(e',q) \geq 0$ for any $q \in N$} \\
&\geq \lambda \cdot \tau(e)(u) \cdot \left( \frac{1}{2m} - \epsilon_e + \epsilon_e \right) \tag{Lemma~\ref{lem:lowerbound}} \\
&= \frac{\lambda}{2m^2} \tag{$\tau(e)(u) = \frac{1}{m}$} \,.
\end{align*}
Similarly, the case $\bar{v} \notin N$ has the same lower-bound.
\end{enumerate}
\end{enumerate}
\end{enumerate}

By assigning with each edge $e \in E$ any among the cases A, B.1, B.2.1, and B.2.2 described above, we induce a selection of the states of $\N$ and, at the same time, a selection of vertices  $\bar{V} \subseteq V$ in the graph $G$, namely, $\bar{V} = \set{v \in V}{\bar{v} \in N}$. Recall that at least $m$ states in $\N$ have to be edge vertices and two other states are reserved respectively for the sink $\bar{s}$ and the initial state $\bar{r}$, therefore $|\bar{V}| \leq h$.

Assume that the $m$ edges in $G$ are assigned the above case as follows:
\begin{equation*}
  m = \overbrace{m'}^{\text{B.2.1}} + \overbrace{m''}^{\text{A}} + \overbrace{m'''}^{\text{B.1 or B.2.2}} \,.
\end{equation*}
In the above assignment $m' + m'''$ edges are covered by $\bar{V}$ whereas $m''$ edges are not covered by $\bar{V}$. Necessarily, $m'' \leq m'$. Otherwise, we have that
\begin{align*}
 \dist(\M,\N) &\geq \lambda \sum_{e \in E} \min \left\{ f_e(\epsilon_e, \epsilon'_e) \, \left| \,  \epsilon_e, \epsilon'_e \in \left[0, \frac{1}{2m} \right] \right. \right\} \tag{by Equation~\eqref{eq:minfe}}\\
 &\geq \lambda \left( m'' \cdot \frac{\lambda}{m^2}  + m''' \cdot \frac{\lambda}{2m^2}  \right) \tag{lower-bounds proved before}\\
 &=  \lambda  \left( m''  \cdot \frac{\lambda}{m^2}  + (m - m' - m'') \cdot  \frac{\lambda}{2m^2}  \right) \tag{$m''' = m - m' - m''$} \\
 &=  \frac{\lambda^2}{2m^2} \cdot (m - m' + m'') \\
 &> \frac{\lambda^2}{2m} \tag{$m'' > m'$}
 \end{align*}
which contradicts the assumption that $\dist(\M,\N) \leq \frac{\lambda^2}{2m}$.

If $m'' = 0$, then $\bar{V}$ is a vertex cover for $G$ an we are done. 
\begin{figure}[t]
\centering
\begin{tikzpicture}[baseline=(current bounding box.center), labels]
\draw 
  (0,0) node[label,label,fill=red!30] (e) {$\bar{e}$}
  (e)+(right: 1.5) node[label,label,fill=red!30] (e1) {$\bar{e}'$}
  
  (e)+(down: 1.5) node[label,fill=orange!40] (u) {$\bar u$}
  (e1)+(down: 1.5) node[label,fill=blue!20!white,dashed] (v) {$\bar v$}
  ;
\path[-latex, font=\small,dashed]
 (e) edge (u) edge (v)
 (e1) edge (u) edge (v);
\end{tikzpicture}
\quad$\Longrightarrow$\quad
\begin{tikzpicture}[baseline=(current bounding box.center), labels]
\draw 
  (0,0) node[label,label,fill=red!30] (e) {$\bar{e}$}
  %(e)+(right: 1.5) node[label,label,fill=red!20,dashed] (e1) {$\bar{e}'$}
  
  (e)+(-0.8,-1.5) node[label,fill=orange!40] (u) {$\bar u$}
  (e)+(0.8,-1.5) node[label,fill=blue!20!white,dashed] (v) {$\bar v$}
  ;
\path[-latex, font=\small,dashed]
 (e) edge (u) edge (v);
 %(e1) edge (u) edge (v);
\end{tikzpicture}
\qquad\qquad\quad
\begin{tikzpicture}[baseline=(current bounding box.center), labels]
\draw 
  (0,0) node[label,label,fill=red!30] (e) {$\bar{e}$}
  (e)+(right: 1.5) node[label,label,fill=red!20,dashed] (e1) {$\bar{e}'$}
  (e)+(down: 1.5) node (u) {}
  ;
%\path[-latex, font=\small,dashed]
% (e) edge (u);
\end{tikzpicture}
\quad$\Longrightarrow$\quad
\begin{tikzpicture}[baseline=(current bounding box.center), labels]
\draw 
  (0,0) node[label,label,fill=red!30] (e) {$\bar{e}$}
  (e)+(right: 1.5) node[label,label,fill=red!20,dashed] (e1) {$\bar{e}'$}
  (e)+(down: 1.5) node[label,fill=orange!40] (u) {$\bar u$}
  ;
\path[-latex, font=\small,dashed]
 (e) edge (u)
 (e1) edge (u);
\end{tikzpicture}
\caption{(Left) By removing one of the edge states we turn the assignment from type B.2.1 to type B.1 or B.2.1; (Right) By adding one state vertex corresponding to one endpoint of the edge we turn the assignment from type A to type B.1 or B.2.1.}
\label{fig:edgeassign}
\end{figure}
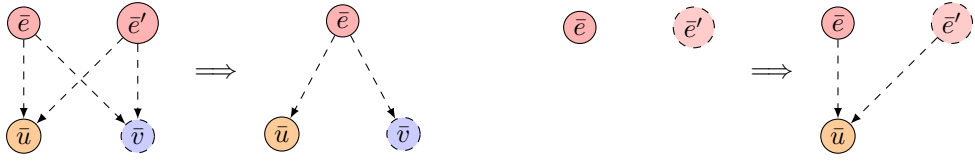
Otherwise, we claim that there is another assignment of the the edges where no edge is of type A inducing a vertex cover.
Such assignment is obtained from the previous one, leveraging from the fact that $m'' \leq m'$. We proceed by turning all $m''$ edges of type A to edges of type B.1 or B.2.2 removing exactly one of the two edge states from (at most $m''$) edges of type B.2.1 (\cf\ Fig.~\ref{fig:edgeassign}~(left)). This gives enough room to add vertex states in a way that all edges of type A are turned into edges of type B.1 or B.2.2 (\cf\ Fig.~\ref{fig:edgeassign}~(right)).

This concludes the proof of the reduction from \VC. \qed
\end{proof}

\section{Minimum Significant Approximant Bound}
\label{sec:MSAB}

Recall that, having two MCs that are at distance $1$ from each other means that there is no significant similarity between their behaviors. Accordingly, we say that an MC $\N$ is a \emph{significant approximant} for 
the MC $\M$ w.r.t. $\dist$ if, and only if, $\dist(\M,\N) < 1$.

The \emph{Minimum Significant Approximant Bound} problem w.r.t.\ $\dist$ ($\MSAB[\lambda]$)
looks for the smallest positive integer $k$ such that $\N \in \MC[k]{}$ is a significant approximant 
for a given an MC $\M$.
The decision version of this problem is called 
\emph{Significant Bounded Approximant problem} w.r.t.\ $\dist$ ($\SBA[\lambda]$), and asks whether,
for a given positive integer $k$, there exists $\N \in \MC[k]{}$ such that $\dist(\M,\N) < 1$.

When the distance $\dist$ is discounted (i.e., $\lambda < 1$), the two problems above turn out to be trivial because for any MCs $\M$, $\N$ with initial states labelled with same label, $\dist(\M,\N) \leq \lambda$, thus
the minimum size for a significant approximant is always $1$.
In contrast, we show that when the distance $\dist$ is undiscounted ($\lambda = 1$) the same two
problems are NP-complete. The NP-completeness result is obtained via a characterization of $\SBA$ as a combinatorial problem in graph theory on vertex-labelled directer graphs.

A \emph{vertex-labelled directed graph} is a directed graph $G = (V, E)$ with a \emph{vertex labelling function} associating with each vertex in $V$ a label. 
For a a Markov chain $\M = (M,\tau,\ell)$, we denote by $G(\M)$ its underlying vertex-labelled directed graph, having $M$ as set of vertices labelled by $\ell$ and directed edges $(m,m')$ if and only if $\tau(m)(m') > 0$.

\begin{definition}[Reflected paths]
A path $v_0 \dots v_n$ in a vertex-labeled directed graph $G$, is \emph{reflected} in a subgraph $G'$ of $G$ if there exists a path $v'_0 \dots v'_n$ in $G'$ such that $v_n = v'_n$ and, for all $0 \leq i \leq n$, $v_i$ and $v'_i$ have the same label.

\begin{center}
\begin{tikzpicture}[baseline=(current bounding box.center), labels]
\def\h{1.4cm}

\draw (4*\h,0) coordinate (vn);
\draw[dashed, color=gray] ($(vn)+(up:1.3cm)$) -- ($(vn)+(down:1.3cm)$);

\draw 
  (0,0) node[label,minimum width=1cm,fill=red!30] (v) {$v_0$}
  ($(v)+(right:\h)$) node[label,minimum width=1cm,fill=blue!30] (v1) {$v_1$}
  ($(v)+(right:2*\h)$) node (vi) {$\cdots$}
  ($(v)+(right:3*\h)$) node[label,minimum width=1cm,fill=orange!30] (vn1) {$v_{n-1}$}
  ($(v)+(right:4*\h)$) node[label,minimum width=1cm,fill=green!30] (vn) {$v_n$}
  ($(v)+(right:5*\h)$) node[label,minimum width=1cm,fill=orange!30] (vpn1) {$v'_{n-1}$}
  ($(v)+(right:6*\h)$) node (vpi) {$\cdots$}
  ($(v)+(right:7*\h)$) node[label,minimum width=1cm,fill=blue!30] (vp1) {$v'_1$}
  ($(v)+(right:8*\h)$) node[label,minimum width=1cm,fill=red!30] (vp) {$v'_0$}
  ;
  
  \draw [decorate,decoration={brace,amplitude=10pt,raise=6pt},yshift=0pt]
	(vn.north west) -- (vp.north east) node [midway,yshift=0.8cm] {\footnotesize in $G'$};

\path[-latex, font=\small]
     (v) edge (v1)
     (v1) edge (vi)
     (vi) edge (vn1) 
     (vn1) edge (vn)
     (vp) edge (vp1)
     (vp1) edge (vpi)
     (vpi) edge (vpn1)
     (vpn1) edge (vn)
;
\end{tikzpicture}
\end{center}

\end{definition}

We show that deciding $\instance{\M, k} \in \SBA$ is equivalent to checking whether the underlying graph of $\M$ has a bottom strongly connected component (BSCC) and a path reaching it from the initial state satisfying the following condition: the size of the BSCC plus the number of labels occurring in a prefix of such path, obtained by removing the part that is reflected in the BSCC, does not exceed $k$.
\begin{lemma} \label{lem:SAasFindingSCCinGraph}
Let $\M$ be a minimal MC with initial state $m_0$.  
Then, $\instance{\M, k} \in \SBA$ iff  $G(\M)$ has a BSCC $G' = (V,E)$ and a path $m_0 \dots m_h$, such that, for some $0 \leq p \leq h$, 
\begin{enumerate}%[label=\roman*.]
\item the path $m_p \dots m_h$ is reflected in $G'$, and 
\item $|\set{\ell(m_i)}{ i < p}| + |V| \leq k$.
\end{enumerate}
%\begin{itemize} 
%  \item there exist paths $m_0 \dots m_h$ in $G(\M)$ and $v_{i}\dots v_{h-1}m_h$ in $G'$
%  s.t.\ $\forall i \leq j < h.\, \ell(m_j) = \ell(v_j)$;
%  \item 
%  such that $m_h \in V$ and $\beta + |V| \leq k$, where 
%\begin{equation*}
%\beta = |\set{\ell(m_i)}{i < h, \nexists \text{ a path $v_{i}\dots v_{h-1}m_h$ in $G'$ s.t.\ $\forall i \leq j < h.\, \ell(m_j) = \ell(v_j)$}}| \,.
%\end{equation*}
%\end{itemize}

%\begin{itemize} 
%  \item path $m_0 \dots m_h$ in $G(\M)$ such that $m_h \in V$ and $\beta + |V| \leq k$, where 
%\begin{equation*}
%\beta = |\set{\ell(m_i)}{i < h, \nexists \text{ a path $v_{i}\dots v_{h-1}m_h$ in $G'$ s.t.\ $\forall i \leq j < h.\, \ell(m_j) = \ell(v_j)$}}| \,.
%\end{equation*}
%\end{itemize}
\end{lemma}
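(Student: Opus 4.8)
My plan is to prove both implications through Theorem~\ref{th:mincoupling}, reading the condition $\dist[1](\M,\N)<1$ as the existence of a coupling structure $\C$ for $\M\oplus\N$ under which, starting from the pair of initial states, there is a strictly positive probability of an infinite run in the induced chain on $M\times N$ that never visits a pair of differently-labelled states (a \emph{matched run}). Indeed $\ddiscr[1]{\C}(m_0,n_0)$ is exactly the probability of eventually hitting a mismatched pair, so $\ddiscr[1]{\C}(m_0,n_0)<1$ is equivalent to a positive probability of staying forever inside the set $W=\set{(m,n)}{\ell(m)=\alpha(n)}$, which in turn is equivalent to the reachability, through $W$, of a bottom SCC $H$ of the product chain with $H\subseteq W$.

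For the implication ($\Leftarrow$) I would build the witnessing approximant explicitly. Since $G'=(V,E)$ is a BSCC of $G(\M)$ it has no outgoing edges, so the restriction of $\M$ to $V$ is itself an MC; I take a faithful isomorphic copy of it as $\lvert V\rvert$ states of $\N$. For the transient part I add one state per label in $\set{\ell(m_i)}{i<p}$ and wire these states so that the label word $\ell(m_0)\cdots\ell(m_{p-1})$ is producible with positive probability and then flows into the copy of $G'$ at the start $\bar{m}'_p$ of the reflecting path $m'_p\cdots m'_h$. Collapsing equally-labelled prefix positions into a single state is harmless because I only need each required transition to carry positive probability. I then exhibit a coupling structure that drives $\M$ along $m_0\cdots m_h$ and $\N$ along the corresponding states, matching labels at every step (legal since both marginals are positive at each intended successor), and that from the synchronized pair $(m_h,\bar{m}_h)$ onward uses the identity coupling on the copy of $G'$, giving label matching with probability one thereafter. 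This produces a matched run of positive probability, hence $\dist[1](\M,\N)\le\ddiscr[1]{\C}(m_0,n_0)<1$, and $\N$ has at most $\lvert\set{\ell(m_i)}{i<p}\rvert+\lvert V\rvert\le k$ states.

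For ($\Rightarrow$) I start from an optimal $\C$ and the BSCC $H\subseteq W$ described above, and set $V=\pi_{\M}(H)$. Two structural facts are the heart of the argument. First, $V$ is a BSCC of $G(\M)$: closedness follows because for $(m,n)\in H$ the left marginal of $\C(m,n)$ is $\tau(m)$ while $\C(m,n)$ is supported in $H$, so $\mathit{support}(\tau(m))\subseteq V$; strong connectivity follows by projecting the strong connectivity of $H$. Second, $\lvert V\rvert\le\lvert\pi_{\N}(H)\rvert$: since $H$ is closed and entirely matched, $\ddiscr[1]{\C}$ vanishes on $H$, whence $\dist[1](m,n)=0$, i.e.\ $m$ and $n$ are bisimilar, for every $(m,n)\in H$; as $\M$ is minimal, two $\M$-states sharing an $\N$-partner in $H$ must coincide, so the assignment of each $n$ to its unique $H$-partner is a surjection $\pi_{\N}(H)\twoheadrightarrow V$. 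Finally I take a matched path from $(m_0,n_0)$ into $H$ and choose $p$ so that the suffix $m_p\cdots m_h$ lies in $V$ (reflected in $G'$ by itself, or transported into $G'$ through the unique-partner correspondence), while the transient prefix contributes the labels $\set{\ell(m_i)}{i<p}$; composing a section of the above surjection with the label-to-prefix-state assignment should yield an injection of $\set{\ell(m_i)}{i<p}\sqcup V$ into $N$, giving $\lvert\set{\ell(m_i)}{i<p}\rvert+\lvert V\rvert\le\lvert N\rvert\le k$.

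The main obstacle is exactly this last counting step: a transient $\N$-state may coincide with a state of $\pi_{\N}(H)$ (appearing earlier with a non-bisimilar $\M$-partner), so the two pieces of the intended injection need not have disjoint images a priori. The remedy is to exploit the freedom in the reflected-path notion, pushing $p$ as early as the labels $\ell(m_p)\cdots\ell(m_h)$ can still be realized by a path inside $G'$, so that the prefix is forced to consist precisely of labels unavailable to $G'$ and its states fall outside $\pi_{\N}(H)$; establishing that this minimal choice of $p$ keeps the two images disjoint, together with the minimality-plus-bisimilarity bound $\lvert V\rvert\le\lvert\pi_{\N}(H)\rvert$, is where the real work lies.
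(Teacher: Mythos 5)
Your ($\Leftarrow$) construction and most of the structural analysis in ($\Rightarrow$) are sound and essentially coincide with the paper's own argument: the paper likewise builds the approximant from one state per prefix label plus a copy of the BSCC (wiring each prefix state uniformly to \emph{all} states, which makes the positive-probability step immediate), and in ($\Rightarrow$) it likewise extracts a coupling $\C$ with a matched path into a bottom SCC $H$ of the product chain consisting of matched pairs, then uses distance zero, bisimilarity and minimality to relate the two projections. (A notational slip: you write $\ddiscr[1]{\C}$ for the probability of hitting a mismatch; in the paper's notation that quantity is $\discr[1]{\C}$, and $\ddiscr[1]{\C}$ is its complement.)

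The genuine gap is the one you flag yourself: the choice of $p$ and the disjointness needed for the count. Your remedy --- take $p$ minimal such that $\ell(m_p)\cdots\ell(m_h)$ is realizable inside $G'$, and argue that the prefix then consists of labels ``unavailable to $G'$'' whose $\N$-states fall outside $\pi_{\N}(H)$ --- does not work: minimality of the reflection point in no way forces prefix labels to be absent from $G'$ (a prefix label may well occur in $G'$ and simply fail to extend the reflecting path), so the disjointness of the two images remains unproved. The paper's fix is to choose $p$ by the $\N$-coordinate rather than by labels: let $p$ be the \emph{first} index with $n_p \in V_2 := \pi_{\N}(H)$ (and $p = h$ if there is none, in which case the one-state path $m_h$ is trivially reflected). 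Then $\set{n_i}{i<p} \cap V_2 = \emptyset$ holds by definition, so $|\set{\ell(m_i)}{i<p}| + |V| \leq |\set{n_i}{i<p}| + |V_2| \leq |N| \leq k$ follows directly from your surjection $\pi_{\N}(H) \twoheadrightarrow V$; and the suffix $m_p \dots m_h$ is reflected in the BSCC induced on $V$ because $V_2$ is closed (by the same marginal argument you used for $V$), so the path $n_p \dots n_h$ stays in $V_2$, and your label-preserving partner map transports it to a path ending at the partner of $n_h$, which is $m_h$ since $(m_h,n_h) \in H$. With this substitution your outline becomes a complete proof; note that once the bound is known for this $p$, any earlier reflection point satisfies it a fortiori, so your minimal choice is then harmless but plays no role.
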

\begin{proof}%[Proof of Lemma~\ref{lem:SAasFindingSCCinGraph}]
($\Rightarrow$) By hypothesis there exists $\N = (N, \theta, \alpha) \in \MC{}$ such that $\dist[1](\M,\N) < 1$. We assume that $\N$ is minimal (otherwise one can replace it with its bisimilarity quotient). By Lemma~\ref{lem:dualdiscrepancy} and Theorem~\ref{th:mincoupling}, there exists $\C \in \coupling{\M}{\N}$ such that $\beta^\C_1(\M,\N) > 0$. 
Therefore there exists a path $(m_0,n_0)\dots (m_p,n_p)$ in $G(\C)$, such that $\ell(m_i) = \alpha(n_i)$ (for $i = 0..p$) and $\gamma^\C_1(m_p, n_p) = 0$.

Note that for arbitrary $m \in M$ and $n \in N$ such that $\gamma^\C_1(m,n) = 0$, the following hold
\begin{equation*}
0 = \gamma^\C_1(m,n) = \textstyle \sum_{u \in M} \sum_{v \in N} \gamma^\C_1(u,v) \cdot \C(m,n)(u,v) \,.
\end{equation*}
Therefore, for any $u \in M$ and $v \in N$ we have that $\C(m,n)(u,v) > 0$ implies $\discr[1]{\C}(u,v) = 0$. 

Let $R \subseteq M \times N$ be the set of states reachable from $(m_p, n_p)$ in $G(\C_1)$. 
By $\gamma^\C_1(m_p, n_p) = 0$ and what have been said before we have that $(m,n) \in R$ implies that $m \sim n$. Let $G = (V,E)$ be a bottom strongly connected component of $G(\C)$ such that $V \subseteq R$.

Consider now the graphs $G_1 = (V_1, E_1)$ and $G_2 = (V_2, E_2)$ where
\begin{align}
&V_1 = \{ m \mid (m,n) \in V \} && E_1 = \{ (m,u) \mid \tau(m)(u) > 0 \text{ and } m,u \in V_1 \}, \\
&V_2 = \{ n \mid (m,n) \in V \} && E_2 = \{ (n,v) \mid \theta(n)(v) > 0  \text{ and }  n,v \in V_1 \}.
\end{align}
Since $\M$ and $\N$ are minimal and $m \sim n$ for all $(m,n) \in V$ we have that for all $(m,n), (u,v) \in V$, $C(m,n)(u,v) = \tau(m)(u) = \theta(n)(v)$. Therefore $G_1$ and $G_2$ are bottom strongly connected components of $G(\M)$ and $G(\N)$ respectively, and are isomorphic with each other.

Let now take the path $(m_0,n_0)\dots (m_{h}, n_{h})$ in $G(\C_1)$ such that $(m_{h}, n_{h}) \in V$ obtained by appending the path $(m_0,n_0)\dots (m_{p}, n_{p})$ with the path $(m_{p}, n_{p}) \dots (m_{h}, n_{h})$. Note that such a path exists since $(m_{h}, n_{h}) \in R$ and, $\ell(m_i) = \alpha(n_i)$ for all $0 \leq i \leq h$. 

There are two possible cases:
\begin{trivlist}
\item Case 1: if $n_i \notin V_2$ for all $0 \leq i < h$ we have that the 
zero-step path $m_h$ is trivially reflected in $G_1$ and 
\begin{equation*}
|\set{\ell(m_i)}{ i < h} | + |V_1| = |\set{\ell(n_i)}{ i < h} | + |V_2| \leq |\set{n_i}{ i < h} | + |V_2| \leq |N| \leq k \,.
\end{equation*}
\item Case 2: If $n_i \in V_2$ for some $0 \leq i < h$. Let $q < h$ be the smallest index such that $n_q \in V_2$. Since $C(m_q, n_q)(n_{q+1}, n_{q+1}) > 0$ implies $\theta(n_q)(n_{q+1}) > 0$ and $G_2$ is a bottom strongly connected component, we have that also $n_{q+1} \in V_2$. This shows that $n_q \dots n_h$ is a path in $G_2$. Since the isomorphism between $G_2$ and $G_1$ preserves the labels 
(indeed, any $n \in V_2$ is mapped with the unique state $m \in V_1$ such that $m \sim n$) we can see that there exists a path $v_p \dots v_{h-1}m_h$ in $G_1$ such that $\ell(m_i) = \ell(v_i)$ for all $p \leq i < h$. Therefore we have that the path $m_p \dots m_h$ is reflected in $G_1$ and 
\begin{gather*}
|\set{\ell(m_i)}{ i < p }| + |V_1| = |\set{\ell(n_i)}{ i < p }| + |V_2| \leq |\set{n_i}{ i < p }| + |V_2| \leq |N| \leq k \,.
\end{gather*}
\end{trivlist}

($\Leftarrow$) Let $M'= \{m_0, \dots, m_h\}$. Assume w.l.o.g.\ that $M' \cap V = \{m_h\}$ (otherwise one can consider a prefix of the path that enjoys the assumption). By hypothesis we have that $m_p \dots m_h$ is reflected in $G'$, therefore there exists a path $v_p \dots v_h$ in $G'$ with $v_h = m_h$ and $\ell(m_j) = \ell(v_j)$ for all $p \leq j \leq h$. %To simplify the notation later on we will also use $v_h$ to refer to $m_h$.

Let $\{l_0, \dots, l_q \} = \set{\ell(m_i)}{i < p}$ and $N' = \{n_0, \dots, n_{q}\}$. Consider the Markov chain $\N = (N, \theta, \alpha)$ where $N = N' \cup V$, and 
\begin{align*}
&\theta(n) = \begin{cases}
\tau(n) &\text{if $n \in V$} \\
\sum_{v \in N} (1/|N|) \cdot 1_v &\text{if $n \in N'$}
\end{cases}
&&\alpha(n) = \begin{cases}
\ell(n) &\text{if $n \in V$} \\
l_i &\text{if $n = n_i$ for some $0\leq i \leq q$}
\end{cases}
\end{align*}
Note that $\theta$ is well defined because the support of $\tau(v)$ is included in $V$ for all $v \in V$. 

By construction, states in $N'$ have pairwise distinct labels, therefore we can define the function $f \colon M' \to N$ as $f(m_i) = n$ if $0 \leq i < p$ and $n \in N'$ such that $\alpha(n) = \ell(m_i)$; and $f(m_i) = v_i$ if $p \leq i \leq h$.
In the following we will prove that for all $m_i \in M'$, $\dist[1](m_i,f(m_i)) < 1$. 
We proceed by induction on $r = h-i$.
\begin{trivlist}
\item \textsc{Base Case} ($i = h$): One can readily show that $\dist[1](m_h, m_h) = 0$.
\item \textsc{Inductive Step} ($i < h$): 
Let $n = f(m_{i})$ and $n' = f(m_{i+1})$ then the following hold
\begin{align}
&\dist[1](m_{i}, f(m_{i})) = \mathcal{K}(\dist[1])(\tau(m_i),\theta(n)) \tag{$\ell(m_{i}) = \alpha(n)$} \\
& = \min \textstyle \set{ \sum_{u \in M} \sum_{v \in N} \omega(u,v) \cdot \dist[1](u, v) }{ \omega \in \coupling{\tau(m_{i})}{\theta(n)}} \tag{def. $\mathcal{K}$ \cf~\eqref{eq:Kantorivichdef}} \\
& \leq \textstyle \sum_{u \in M} \sum_{v \in N} \tau(m_{i})(u) \cdot \theta(n)(v) \cdot \dist[1](u, v) 
 \tag{*}\label{eq:parsync} \\ %\tag{$ \tau(m_{i}) \odot \theta(n) \in \coupling{\tau(m_{i})}{\theta(n)}$}
&\leq \tau(m_i)(m_{i+1})\cdot \theta(n)(n') \cdot \dist[1](m_{i+1}, n') + (1 - \tau(m_i)(m_{i+1})\cdot \theta(n)(n')) \tag{$\dist[1] \sqsubseteq \mathbf{1}$ } \\
&< 1 \tag{$\dist[1](m_{i+1}, f(m_{i+1})) < 1$ and $\tau(m_i)(m_{i+1})\cdot \theta(n)(n') > 0$ }
\end{align}
Inequality~\eqref{eq:parsync} is due to the fact that given $\mu, \nu \in \D(X)$, their `parallel synchronisation' $\mu \odot \nu \in \D(X \times X)$ defined as $(\mu \odot \nu)(x,y) = \mu(x) \cdot \nu(y)$ is a coupling for $(\mu, \nu)$. %, that is, $(\mu \odot \nu) \in \coupling{\mu}{\nu}$.
\end{trivlist}

Therefore $\dist[1](\M,\N) = \dist[1](m_0, f(m_0)) < 1$. By construction $|N| \leq k$, therefore $\instance{\M, k} \in \SBA$. \qed
\end{proof}

%%%%%%%%%%%%%%%%%%%%%%%
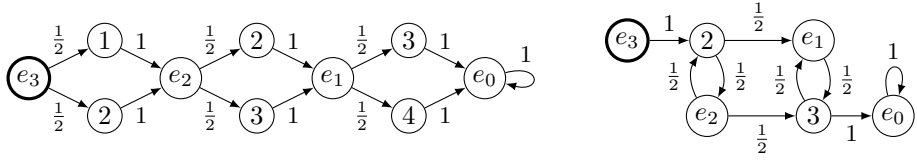
\begin{figure}[t]
\centering
\begin{tikzpicture}[baseline=(current bounding box.center), labels]
\def\h{2cm}
\def\v{0.5cm}
\draw 
  (0,0) node[initial] (a) {$e_3$}
  ($(a)+(right:\h)$) node[label] (b) {$e_2$}
  ($(b)+(right:\h)$) node[label] (c) {$e_1$}
  ($(c)+(right:\h)$) node[label] (s) {$e_0$}
  
  ($(a)!0.5!(b)+(up:\v)$) node[label] (a1) {$1$}
  ($(a)!0.5!(b)+(down:\v)$) node[label] (a2) {$2$}
  
  ($(b)!0.5!(c)+(up:\v)$) node[label] (b1) {$2$}
  ($(b)!0.5!(c)+(down:\v)$) node[label] (b2) {$3$}
  
  ($(c)!0.5!(s)+(up:\v)$) node[label] (c1) {$3$}
  ($(c)!0.5!(s)+(down:\v)$) node[label] (c2) {$4$}
  ;
\path[-latex, font=\small]
  (a) edge node[near start, above] {$\frac{1}{2}$} (a1)
        edge node[near start, below] {$\frac{1}{2}$} (a2)
  (b) edge node[near start, above] {$\frac{1}{2}$} (b1)
        edge node[near start, below] {$\frac{1}{2}$} (b2)
  (c) edge node[near start, above] {$\frac{1}{2}$} (c1)
        edge node[near start, below] {$\frac{1}{2}$} (c2)
        
  (a1) edge node[above] {$1$} (b)
  (a2) edge node[below] {$1$} (b)
  (b1) edge node[above] {$1$} (c)
  (b2) edge node[below] {$1$} (c)
  (c1) edge node[above] {$1$} (s)
  (c2) edge node[below] {$1$} (s)
  
  (s) edge[loop right] node[near start, above] {$1$} (s)
;
\end{tikzpicture}
\qquad
\begin{tikzpicture}[baseline=(current bounding box.center), labels]
\def\h{3.5cm}
\def\v{0.5cm}
\draw 
  (0,0) coordinate (init)
  ($(a)+(right:\h)$) coordinate (end)
  
  ($(init)+(up:\v)$) node[initial] (a) {$e_3$}
  ($(init)!0.3!(end)+(up:\v)$) node[label] (2) {$2$}
  ($(init)!0.3!(end)+(down:\v)$) node[label] (b) {$e_2$}
  ($(init)!0.7!(end)+(up:\v)$) node[label] (c) {$e_1$}
  ($(init)!0.7!(end)+(down:\v)$) node[label] (3) {$3$}
  ($(end)+(down:\v)$) node[label] (s) {$e_0$}
  ;
\path[-latex, font=\small]
  (a) edge node[above] {$1$} (2)
  (2) edge[bend left] node[right] {$\frac{1}{2}$} (b)
        edge node[above] {$\frac{1}{2}$} (c)
  (b) edge[bend left] node[left] {$\frac{1}{2}$} (2)
  (b) edge node[below] {$\frac{1}{2}$} (3)
  (3) edge[bend left] node[left] {$\frac{1}{2}$} (c)
        edge node[below] {$1$} (s)
  (c) edge[bend left] node[right] {$\frac{1}{2}$} (3)
  (s) edge[loop above] node[above] {$1$} (s)
  ;
\end{tikzpicture}
\caption{(Left) The MC $\M_G$ associated to the graph $G$ in Figure~\ref{fig:ReductionFromVertexCover} and
(right) an MC $\N$ associated to the vertex cover $C = \{2,3\}$ of $G$ such that $\dist[1](\M_G,\N) < 1$ 
(cf.\ Theorem~\ref{th:NP-completeSA}).}
\label{fig:ReductionFromMinVertexCover}
\end{figure}
%%%%%%%%%%%%%%%%%%%%%%%

\begin{theorem} \label{th:NP-completeSA}
$\SBA$ is NP-complete. 
\end{theorem}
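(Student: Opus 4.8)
The plan is to establish the two halves of NP-completeness separately, leaning throughout on the combinatorial reformulation provided by Lemma~\ref{lem:SAasFindingSCCinGraph}, which reduces the question $\instance{\M,k}\in\SBA$ to the existence of a BSCC $G'=(V,E)$ of $G(\M)$ and a path $m_0\dots m_h$ whose suffix from some index $p$ is reflected in $G'$ and satisfies $|\set{\ell(m_i)}{i<p}|+|V|\le k$. Since that lemma assumes $\M$ minimal, I would first observe that minimality may be assumed without loss of generality: the bisimilarity quotient is computable in polynomial time and lies at distance $0$ from $\M$, so by the triangle inequality it yields exactly the same $\SBA$ instances.

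For membership in NP I would use this characterization as the certificate. A witness consists of a vertex set $V$, a path $m_0\dots m_h$, and an index $p$. Checking that $V$ induces a BSCC of $G(\M)$ is polynomial. The prefix $m_0\dots m_p$ may be taken simple: contracting a cycle only removes vertices, hence can only shrink the label set $\set{\ell(m_i)}{i<p}$ while leaving the endpoint $m_p$ fixed, so a simple prefix of length $\le|M|$ exists whenever a satisfying path does. The reflected suffix is certified by a pair of synchronised paths in $G(\M)$ and $G'$ (pairing states of equal label and ending at a common vertex of $V$), which can be taken of length $\le|M|\cdot|V|$. Thus a satisfying configuration, when it exists, admits a polynomially bounded witness that is verifiable in polynomial time, giving that $\SBA$ is in NP.

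For NP-hardness I would give a polynomial many-one reduction from \VC, using the layered chain $\M_G$ of Figure~\ref{fig:ReductionFromMinVertexCover}. Given $\instance{G=(V,E),h}$ with $E=\{e_1,\dots,e_m\}$, $\M_G$ has a sink $e_0$, one edge state $e_i$ per edge, and, for each $e_i=(u,v)$, two vertex states labelled $u$ and $v$; the state $e_i$ moves with probability $\frac12$ to each of its two vertex states, every vertex state of $e_i$ moves to $e_{i-1}$ with probability $1$, and $e_0$ self-loops. The initial state is $e_m$; all edge states (including $e_0$) receive pairwise distinct fresh labels, while each vertex state carries the label of the vertex it represents. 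One checks $\M_G$ is minimal. The unique BSCC of $G(\M_G)$ is $\{e_0\}$, so $|V|=1$, and since no state other than $e_0$ carries its label, the reflected suffix can only be $e_0$ itself; hence the counted prefix is the whole path up to $e_0$. This path is forced to alternate edge and vertex states, visiting every edge state $e_m,\dots,e_1$ and, at each, one of its two endpoint vertex states; writing $C$ for the set of distinct vertices so chosen, its label set consists of the $m$ distinct edge labels together with the vertices in $C$, and $C$ is precisely a vertex cover of $G$. The counting condition becomes $m+|C|+1\le k$, so setting $k=m+h+1$ yields $\instance{G,h}\in\VC$ iff $\instance{\M_G,m+h+1}\in\SBA$, completing the reduction.

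The step I expect to require the most care is the NP-membership argument, specifically the claim that a witnessing path can always be taken of polynomial length. Because deciding the minimum number of distinct labels on a path is itself reminiscent of NP-hard label-minimisation problems, the point is not to compute the optimal prefix but to argue that if any satisfying path exists then a polynomially bounded one does; the cycle-contraction bound for the prefix and the synchronised-product bound for the reflected suffix are what make this precise. A secondary point to verify in the hardness direction is that $\M_G$ is minimal and that no alternative BSCC or choice of $p$ can lower the counted label set below $m+|C|$.
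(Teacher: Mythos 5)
Your proposal is correct and takes essentially the same approach as the paper: NP-membership via the characterization of Lemma~\ref{lem:SAasFindingSCCinGraph} (where you usefully spell out the polynomial witness bounds that the paper leaves implicit behind its brief appeal to Tarjan's algorithm for BSCCs), and NP-hardness via the identical layered gadget $\M_G$ of Figure~\ref{fig:ReductionFromMinVertexCover} with the same threshold $k = m+h+1$ and the same analysis (unique BSCC $\{e_0\}$, forced prefix, chosen endpoints forming a vertex cover).
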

\begin{proof}
%{\color{green}
%We first show that $\SBA$ is in NP. For an instance $\instance{\M,k}$, we construct the underlying 
%directed graph $G(\M)$ of $\M$. With Tarjan's algorithm we compute its strongly connected
%components (SCCs). Then, we nondeterministically pick an SCC $G'$ and a node $m_h$ in it. 
%By a DFS traversal we can generate a path $p$ from $m_0$ to $m_h$ with pairwise distinct nodes. In polynomial-time in the size of $\instance{\M,k}$ we check the conditions of Lemma~\ref{lem:SAasFindingSCCinGraph} on $G'$ and the path $p$.
%}
The membership in NP is easily proved by using the characterization in Lemma~\ref{lem:SAasFindingSCCinGraph} and exploiting Tarjan's algorithm for generating bottom SCCs.
As for the NP-hardness, we provide a polynomial-time many-one reduction from \VC. 
Let $G = (V,E)$ be a graph with $E = \{e_1, \dots, e_n \}$. 
We construct the MC $\M_G$ as follows. The set of states is given by the set of edges $E$ along with
two states $e^1_i$ and $e^2_i$, for each edge $e_i \in E$, representing the two endpoints of $e_i$ and 
an extra sink state $e_0$. The initial state is $e_n$. The transition probabilities are given as follows.
The sink state $e_0$ loops with probability $1$ to itself. Each edge $e_i \in E$ 
goes with probability $\frac{1}{2}$ to $e^1_i$ and $e^2_i$, respectively. 
For $1 \leq i \leq n$, the states $e^1_i$ and $e^2_i$ go with probability $1$ to the state $e_{i-1}$.
The edge states and the sink state
are labelled by pairwise distinct labels, while the endpoints states $e^1_i$ and $e^2_i$ are labelled by
the node in $V$ they represent.  
An example of construction for $\M_G$ is shown in Figure~\ref{fig:ReductionFromMinVertexCover}.

Next we show the following equivalence:
\begin{align}
   \instance{G,h} \in \VC
  &&
  \text{iff}
  &&
  \instance{\M_G,h + n + 1} \in \SBA
  \label{eq:SAreduction}
\end{align}
By construction, $\M_G$ is minimal and its underlying graph $H$ has a unique bottom strongly connected component, namely the self-loop in $e_0$. 
Each path $p = e_n \leadsto e_0$ in $H$ passes through all edge states, and 
the set of labels of the endpoint states in $p$ is a vertex cover of $G$. 
Since $e_0, \dots, e_n$ have pairwise distinct labels, we have that $G$ has a vertex 
cover of size at most $h$ iff there exists a path in $H$ from $e_n$ to $e_0$
that has at most $n+1+h$ different labels. Thus, \eqref{eq:SAreduction} follows by 
Lemma~\ref{lem:SAasFindingSCCinGraph}.
\end{proof}

%\begin{corollary} \label{th:NP-hardMSA}
%$\MSAB$ is NP-complete. 
%\end{corollary}
%\begin{proof}
%By Theorem~\ref{th:NP-completeSA}, also $\MSAB$ is NP-hard. $\MSAB \in \text{NP}$ since it is solved by checking $\instance{\M,i} \in \SBA$ for $i = 1$ to $|M|$ and returning the smallest $i$ for which it holds.
%By Theorem~\ref{th:NP-completeSA}, this is nondeterministically decided in polynomial-time in the size of $\M$.
%\end{proof}

\section{An Expectation Maximization-like Heuristic}
\label{sec:EM}
In this section we describe an approximation algorithm for computing near-optimal solutions of $\CBA$
for an arbitrary instance $\instance{\M,k}$.

Given an initial approximant $\N_0 \in \MC{}$, the algorithm produces a sequence of MCs $\N_0, \N_1, \ldots$ in $\MC{}$ having successively decreased distance from $\M$.
We defer until later a discussion of how the initial MC $\N_0$ is chosen.
The procedure is described in Algorithm~\ref{alg:update}. 

\begin{algorithm}[t]
    \algsetup{linenodelimiter=.}
    \caption{Approximate Minimization -- Expectation Maximization-like heuristic}
    \begin{algorithmic}[1]  
    \REQUIRE $\M = (M,\tau,\ell)$, $\N_0 = (N, \theta_0, \alpha)$, and $h \in \naturals$.
    \STATE $i \gets 0$
    \REPEAT 
    \STATE $i \gets i+1$
    \STATE compute $\C \in \coupling{\M}{\N_{i-1}}$ such that $\dist(\M,\N_{i-1}) = \discr{\C}(\M,\N_{i-1})$ \label{line:optC}
    %\STATE compute $\E{N^{n}_{v}}$ w.r.t\ $\C$ for all $n,v \in N$
    \STATE $\theta_{i} \gets \textsc{UpdateTransition}(\theta_{i-1},\C)$ \label{line:update}
    \STATE $\N_i \gets (N, \theta_i, \alpha)$ \label{line:updateN}
    \UNTIL{$\dist(\M,\N_{i}) > \dist(\M,\N_{i-1})$ or $i \geq h$}
    \RETURN $\N_{i-1}$
    \end{algorithmic}
    \label{alg:update}
\end{algorithm}

The intuitive idea of the algorithm is to iteratively update the initial MC by assigning relatively greater probability to transitions that are most representative of the behavior of the MC $\M$ w.r.t.\ $\dist$. The procedure stops when the last iteration has not yield an improved approximant w.r.t.\ the preceding one. The input also includes a parameter $h \in \naturals$ that bounds the number of iterations. Furthermore, to simplify the exposition and avoid computational issues, we assume that $\dist(\M,\N_0) < 1$, $\M$ is minimal\footnote{In case $\M$ is not minimal, one can efficiently replaced it with its bisimilarity quotient~\cite{Baier96,DerisaviHS03}.}, $|M| > |N|$ and both $\tau$ and $\theta_0$ are rational transition functions.

The rest of the section explains two heuristics used in the \textsc{UpdateTransition} function invoked at line~\ref{line:update}. This function shall return the transition probabilities for the successive approximant (see line~\ref{line:updateN}). 
The two heuristics are both based upon an analysis of the coupling structure $\C \in \coupling{\M}{\N_{i-1}}$ constructed at line~\ref{line:optC}.

%\smallskip
Define $\ddiscr{\C}$ to be the least fixed-point of the following functional operator on 
\mbox{$1$-bounded} real-valued functions $d \colon M \times N \to [0,1]$ (ordered point-wise):
\begin{equation}
    B^{\C}_{\lambda}(d)(m,n) = 
    \begin{cases}
    	1 &\text{if $\discr{\C}(m,n) = 0$} \\
    	0 &\text{if $\ell(m) \neq \alpha(n)$} \\
	(1 - \lambda) + \lambda \sum_{u \in M, v \in N} d(u,v) \cdot \C(m,n)(u,v) & \text{otherwise} \,.
    \end{cases}
    \label{eq:B}
\end{equation}
By Theorem~\ref{th:mincoupling}, the relation $R_\C = \set{(m,n)}{\discr{\C}(m,n) = 0}$ is easily shown to be
a bisimulation, specifically, the greatest bisimulation induced by $\C$. 
%{The $\lambda$-discounted probability is actual probability when looking at the coupling structure $\C$ as an MC over $(M \times N) \uplus \{\bot\}$ where $\bot$ is a sink state looping on itself and any other state $(m,n) \in M \times N$ moves according with the probability distribution $(1-\lambda) \bold{1}_\bot + \lambda \C(m,n)$.}

Let $\C_\lambda$ be the MC obtained by augmenting $\C$ with an `sink' state $\bot$ to 
which any other state moves with probability $(1-\lambda)$. Intuitively, the value $\ddiscr{\C}(m,n)$
can be interpreted as the reachability probability in $\C_\lambda$ of either hitting the sink state or a pair of bisimilar states in $R_\C$ along a path formed only by pairs of states with identical labels 
starting from $(m,n)$.

\begin{lemma} \label{lem:dualdiscrepancy}
For all $m \in M$ and $n \in N$, $\beta^\C_\lambda(m,n) = 1 - \discr{\C}(m, n)$.
\end{lemma}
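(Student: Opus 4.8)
The plan is to prove the sharper statement that $d^\star := \mathbf{1} - \discr{\C}$, the pointwise complement of the discrepancy, is the \emph{least} fixed point of $B^{\C}_{\lambda}$, which by definition of $\ddiscr{\C}$ is exactly the claim. First I would check that $d^\star$ is a fixed point, by a three-case analysis matching \eqref{eq:B}. When $\discr{\C}(m,n)=0$ we have $d^\star(m,n)=1$ and the first clause returns $1$; when $\ell(m)\neq\alpha(n)$ the fixed-point equation for $\discr{\C}$ forces $\discr{\C}(m,n)=1$, so $d^\star(m,n)=0$ and the second clause returns $0$; in the remaining case ($\ell(m)=\alpha(n)$ and $\discr{\C}(m,n)\neq 0$) I expand the third clause on $d^\star=\mathbf{1}-\discr{\C}$, use that each $\C(m,n)$ is a probability distribution (so $\sum_{u,v}\C(m,n)(u,v)=1$), and substitute the identity $\discr{\C}(m,n)=\lambda\sum_{u,v}\discr{\C}(u,v)\,\C(m,n)(u,v)$ to obtain $B^{\C}_{\lambda}(d^\star)(m,n)=1-\discr{\C}(m,n)=d^\star(m,n)$. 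Since $\ddiscr{\C}$ is the least fixed point, this already yields $\ddiscr{\C}\sqsubseteq \mathbf{1}-\discr{\C}$.

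The main obstacle is the reverse inequality $\mathbf{1}-\discr{\C}\sqsubseteq\ddiscr{\C}$, because a least-fixed-point argument only delivers \emph{upper} bounds on $\ddiscr{\C}$ (and symmetrically on $\discr{\C}$); both easy directions merely re-prove $\discr{\C}+\ddiscr{\C}\sqsubseteq\mathbf{1}$. To secure the matching lower bound I would analyse the sum $s:=\discr{\C}+\ddiscr{\C}$ directly. Partition $M\times N$ into $D=\set{(m,n)}{\ell(m)\neq\alpha(n)}$, the bisimilar set $R_\C=\set{(m,n)}{\discr{\C}(m,n)=0}$, and the remaining ``transient'' set $T$. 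Adding the defining equations of $\discr{\C}$ and $\ddiscr{\C}$ gives $s=1$ on $D\cup R_\C$ and $s(m,n)=(1-\lambda)+\lambda\sum_{u,v}s(u,v)\,\C(m,n)(u,v)$ on $T$, and the constant $\mathbf{1}$ satisfies this same system (again because $\C(m,n)$ sums to $1$). Restricted to $T$ this is a linear system $(I-\lambda\,\C_{TT})\,s|_T=b$, where $\C_{TT}$ collects the $\C$-transitions internal to $T$. It therefore suffices to show $I-\lambda\,\C_{TT}$ is invertible, i.e.\ that the spectral radius $\rho(\lambda\,\C_{TT})<1$; then $s|_T$ is uniquely determined, equals $\mathbf{1}$, and $s\equiv\mathbf{1}$ gives $\ddiscr{\C}=\mathbf{1}-\discr{\C}$.

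Finally I would establish the invertibility uniformly in $\lambda$. For $\lambda<1$ it is immediate, since $\C_{TT}$ is substochastic and $\rho(\lambda\,\C_{TT})\leq\lambda<1$. The delicate case is $\lambda=1$, where I must argue that $T$ is transient for $\C$, i.e.\ that no bottom SCC of the underlying graph of $\C$ lies inside $T$. This is exactly where the definition of $T$ is used: every $(m,n)\in T$ has $\discr{\C}(m,n)>0$, hence a positive-probability $\C$-path to a pair in $D$. If a BSCC $W$ were contained in $T$, then $\discr{\C}$ restricted to $W$ would satisfy $\discr{\C}(m,n)=\sum_{(u,v)\in W}\discr{\C}(u,v)\,\C(m,n)(u,v)$, making it harmonic for the irreducible stochastic matrix $\C|_W$; being the least fixed point, it would be forced to $0$ on $W$, contradicting $\discr{\C}>0$ on $T\supseteq W$. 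Hence $T$ is transient, $\C_{TT}^{\,n}\to 0$, $\rho(\C_{TT})<1$, and the argument closes.
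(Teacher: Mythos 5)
Your proof is correct, and it takes a genuinely different route from the paper's. The paper introduces an auxiliary operator $G^{\C}_{\lambda}$ (essentially $\Gamma^{\C}_{\lambda}$ with the value $0$ hard-wired on $R_\C$), shows by a maximum-gap argument that this operator has a \emph{unique} fixed point, and then verifies that both $\discr{\C}$ and $\mathbf{1}-\ddiscr{\C}$ are fixed points of it; uniqueness yields the identity in both directions at once. You instead get the inequality $\ddiscr{\C}\sqsubseteq\mathbf{1}-\discr{\C}$ cheaply from leastness, and you secure the reverse inequality by linear algebra: $s=\discr{\C}+\ddiscr{\C}$ and $\mathbf{1}$ solve the same affine system on the block $T$, which has a unique solution since $\rho(\lambda\,\C_{TT})<1$. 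The two arguments turn on the same combinatorial fact, applied to different sets: at $\lambda=1$, any set of equal-labelled pairs that is closed under the transitions of $\C$ forces $\discr{\C}$ to vanish on it. The paper applies this to the set of pairs attaining the maximal gap between two fixed points of $G^{\C}_{\lambda}$; you apply it to a hypothetical BSCC $W\subseteq T$. The paper's route stays purely order-theoretic (Knaster--Tarski plus a sup-norm estimate), while yours buys a transparent probabilistic reading ($T$ is transient, hence the substochastic block $\C_{TT}$ has spectral radius below one) at the cost of importing standard Perron--Frobenius and absorbing-chain facts. One step you compress --- ``being the least fixed point, it would be forced to $0$ on $W$'' --- deserves one more line: redefine $\discr{\C}$ to be $0$ on $W$ and unchanged elsewhere; because $W$ is closed under the transitions of $\C$ and all its pairs carry matching labels, the modified function is a prefix point of $\Gamma^{\C}_{1}$, so Knaster--Tarski gives $\discr{\C}\sqsubseteq 0$ on $W$, the desired contradiction. (The paper glosses the corresponding step in its own uniqueness argument just as quickly, so this is a matter of polish, not a gap.)
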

\begin{proof}%[Proof of Lemma~\ref{lem:dualdiscrepancy}]
We prove the equivalent statement $\discr{\C} = \mathbf{1} - \beta^\C_\lambda$. 
Consider the following operator
\begin{equation*}
    G^{\C}_{\lambda}(d)(m,n) = 
    \begin{cases}
    	0 &\text{if $\discr{\C}(m, n) = 0$} \\
    	1 &\text{if $\ell(m) \neq \ell(n)$} \\
	\lambda \sum_{u \in M, v \in N} d(u,v) \cdot \C(m,n)(u,v) & \text{otherwise} \,.
    \end{cases}
\end{equation*}
One can easily show that $G^{\C}_{\lambda}$ is monotonic, thus by Knaster-Tarski's fixed-point
theorem it admits least and greatest fixed points, say $d$ and $d'$ respectively.  
We prove that $G^{\C}_{\lambda}$ has a unique fixed showing that $d = d'$. 
We proceed by contradiction. Assume that $d \sqsubset d'$. 

Consider the set $M = \set{(m,n)}{d'(m,n) - d(m,n) = \norm{d' - d}}$. By definition of $M$ and $G^{\C}_{\lambda}$ we have that $(m,n) \in M$ implies that $\discr{\C}(m, n) \neq 0$ and $\ell(m) = \ell(n)$.
Consider $(m,n) \in M$ then we have
\begin{align*}
\norm{d' - d} &= d'(m,n) - d(m,n) \tag{def.\ $M$} \\
&= \textstyle \lambda \sum_{u \in M, v \in N} (d'(u,v) - d(u,v)) \cdot \C(m,n)(u,v)  \tag{def.\ $G^{\C}_{\lambda}$}\\
&\leq \textstyle \lambda \sum_{u \in M, v \in N} \norm{d' - d} \cdot \C(m,n)(u,v) 
	\tag{def.\ $\norm{d' - d}$} \\
&= \lambda \cdot \norm{d' - d} \tag{$\C(m,n) \in \D(M \times N)$} \,.
\end{align*}
There are two possible cases. If $\lambda < 1$ then the above inequality implies that $\norm{d' - d} = 0$ which contradicts $d \neq d'$. If $\lambda = 1$, the above inequality implies that the support of $\C(m,n)$ is included in $M$. By the generality of $m,n$, we have that $(m,n) \in M$ implies $\mathit{support}(\C(m,n)) \subseteq M$, but this implies that $\discr{\C}(m, n) = 0$, leading to a contradiction. 

By definition of $ \Gamma^{\C}_{\lambda}$, it's immediate to see that $\discr{\C} = G^{\C}_{\lambda}(\discr{\C})$. We will complete the proof by showing that also $\mathbf{1} - \beta^\C_\lambda$ is a fixed point for $G^{\C}_{\lambda}$. Let $m \in M$ and $n \in N$, if $\discr{\C}(m, n) = 0$ or $\ell(m) \neq \alpha(n)$ it trivially holds that $1 - \beta^\C_\lambda(m,n) = G^{\C}_{\lambda}(\mathbf{1} - \beta^\C_\lambda)(m,n)$. If $\discr{\C}(m, n) > 0$ and $\ell(m) = \alpha(n)$, then the following equalities hold
\begin{align*}
G^{\C}_{\lambda}(\mathbf{1} - \beta^\C_\lambda)(m,n) 
&= \textstyle \lambda \sum_{u \in M, v \in N} \big(1 - \beta^\C_\lambda(u,v) \big) \C(m,n)(u,v) \tag{by def.\ $G^{\C}_{\lambda}$} \\
&= \textstyle \lambda - \lambda \sum_{u \in M, v \in N} \beta^\C_\lambda(u,v) \cdot \C(m,n)(u,v) \tag{$\C(m,n) \in \D(M \times N)$} \\
& = 1 - \beta^\C_\lambda(m,n) \tag{$\beta^\C_\lambda(m,n) = B^{\C}_{\lambda}(\beta^\C_\lambda)(m,n)$}
\end{align*} \qed
\end{proof}

\noindent From equation \eqref{eq:CBAdiscr} and Lemma~\ref{lem:dualdiscrepancy}, we can turn the problem $\CBA$ as
\begin{equation}
  \argmax \set{ \beta^\C_\lambda(\M, \N) }{ \N \in \MC{L(\M)},\, \C \in \coupling{\M}{\N} } \,.
  \label{eq:dualCBA}
\end{equation} 
Equation~\eqref{eq:dualCBA} says that a solution of $\CBA$ is the right marginal of a coupling structure $\C$
such that $\C_\lambda$ maximizes the probability of generating paths with prefix in ${\cong}^*(R_\C \cup \bot)$ starting from the pair $(m_0,n_0)$ of initial states\footnote{We borrowed notation from regular expressions, such as language union, concatenation, and Kleene star, to express the set of finite paths ${\cong}^*(R_\C \cup \bot)$ as a language over the alphabet $(M \times N) \cup \bot$.}, where ${\cong} = \set{(m,n) \notin R_\C }{\ell(m) = \alpha(n)}$.

In the rest of the section we assume $\N_{i-1} \in \MC{}$ to be the current approximant with associated 
coupling structure $\C \in \coupling{\M}{\N_{i-1}}$ as in line 4 in Algorithm~\ref{alg:update}.

\subparagraph*{The ``Averaged Marginal'' Heuristic.}
The first heuristic is inspired by the Expectation Maximization (EM) algorithm described in~\cite{BenediktLW13}.
The idea is to count the expected number of occurrences of the transitions in $\C$ in the set of paths ${\cong}^*R_\C$ and, in accordance with~\eqref{eq:dualCBA}, updating $\C$ by increasing the probability of the transitions that were contributing the most. 

For each $m,u \in M$ and $n,v \in N$ let $Z^{m,n}_{u,v} \colon ((M \times N)\cup \bot)^\omega \to \naturals$ be the random variable that counts the number of occurrences of the edge $((m,n)(u,v))$ in a prefix in ${\cong}^*(R_\C \cup \bot)$ of the given path.
We denote by $\E[\C]{Z^{m,n}_{u,v}}$ the expected value of $Z^{m,n}_{u,v}$ with respect to the probability distribution induced by $\C_\lambda$. Using these values we define the optimization problem $\text{EM}\instance{\N,\C}$:
%\begin{figure}[t]
%\hrule \vspace{-1ex}
\begin{align}
\text{maximize}\;\;& \textstyle\sum_{m,u \in M} \sum_{n,v \in N} \E[\C]{Z^{m,n}_{u,v}} \cdot \ln (c^{m,n}_{u,v}) & \notag \\
\text{such that}\;\;
&\textstyle\sum_{v \in N} c^{m,n}_{u,v} = \tau(m)(u)  && \text{$m,u \in M$, $n \in N$}  \label{eq:droplater} \\ 
&\textstyle\sum_{u \in M} c^{m,n}_{u,v} = \theta_{n,v} && \text{$m \in M$, $n,v \in N$} \label{eq:dropmarginal} \\
& c^{m,n}_{u,v} \geq 0 &&\text{$m,u \in M$, $n,v \in N$} \notag
\end{align}
%\hrule
%\caption{Expectation Maximization} \label{fig:EM}
%\end{figure}
%
A solution of $\text{EM}\instance{\N,\C}$ describes a Markov chain $\N' = (N, \theta', \alpha)$ and a coupling structure $\C \in \coupling{\M}{\N'}$ where, for arbitrary $m \in M$ and $n \in N$
\begin{align*}
&\theta'(n) = \textstyle\sum_{v \in N} \theta_{n,v} \cdot 1_v \,, &&
\C'(m,n) = \textstyle\sum_{(u,v) \in M \times N} c^{m,n}_{u,v} \cdot 1_{(u,v)} \,.
\end{align*} 
The above can be used to improve a pair $\instance{\N,\C}$ in the sense of~\eqref{eq:dualCBA}. 
\begin{theorem}\label{thm:EM}
If $\ddiscr{\C}(\M,\N) > 0$, then an optimal solution for $\text{EM}\instance{\N,\C}$ describes a Markov chain $\N' \in \MC{}$ and a coupling structure $\C' \in \coupling{\M}{\N'}$ satisfying the inequality $\ddiscr{\C'}(\M,\N') \geq \ddiscr{\C}(\M,\N)$.
\end{theorem}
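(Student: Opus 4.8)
The plan is to recognize $\text{EM}\instance{\N,\C}$ as the maximization step of an Expectation--Maximization scheme and to establish the claimed monotonicity by the classical minorize--maximize (variational) argument. The first step is to give $\ddiscr{\C}(\M,\N)$ the path-measure reading suggested by the interpretation stated after~\eqref{eq:dualCBA}. Writing $\mathcal{P} = {\cong}^{*}(R_\C \cup \bot)$ for the set of finite \emph{good} paths that start in $(m_0,n_0)$ and reach $R_\C \cup \{\bot\}$ through same-label pairs, one has $\ddiscr{\C}(\M,\N) = \sum_{\pi\in\mathcal{P}}\Pr[\C_\lambda]{\pi}$; since $\Pr[\C_\lambda]{\pi}$ factorizes as a product of the edge probabilities $\lambda\, c^{m,n}_{u,v}$ together with one trailing $(1-\lambda)$ (or unit) factor, its logarithm equals $\sum_{m,n,u,v} Z^{m,n}_{u,v}(\pi)\,\ln c^{m,n}_{u,v}$ up to an additive constant that does not depend on the $c$'s. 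Summing against $\Pr[\C_\lambda]{\pi}$ and invoking the definition of $Z^{m,n}_{u,v}$ then identifies the objective $J(\C')$ of $\text{EM}\instance{\N,\C}$ (up to a fixed additive constant) with the expected complete-data log-likelihood $\sum_{\pi\in\mathcal{P}}\Pr[\C_\lambda]{\pi}\,\ln\Pr[\C'_\lambda]{\pi}$.

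Next I would run the standard minorize--maximize inequality. Set $\beta := \ddiscr{\C}(\M,\N) > 0$ and let $q(\pi) = \Pr[\C_\lambda]{\pi}/\beta$, which is a genuine distribution on $\mathcal{P}$ precisely because $\beta > 0$. By concavity of $\ln$ (Jensen), for any competing coupling structure $\C'$,
\begin{equation*}
\ln\Big(\textstyle\sum_{\pi\in\mathcal{P}}\Pr[\C'_\lambda]{\pi}\Big) \ge \sum_{\pi\in\mathcal{P}} q(\pi)\,\ln\frac{\Pr[\C'_\lambda]{\pi}}{q(\pi)} = \sum_{\pi\in\mathcal{P}} q(\pi)\,\ln\Pr[\C'_\lambda]{\pi} + H(q)\,,
\end{equation*}
where $H(q)=-\sum_\pi q(\pi)\ln q(\pi)$; at $\C'=\C$ this bound collapses to the equality $\ln\beta = \sum_\pi q(\pi)\ln\Pr[\C_\lambda]{\pi}+H(q)$. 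Subtracting the two instances, the entropy $H(q)$ and the $c$-independent constants cancel, leaving $\ln\big(\sum_{\pi\in\mathcal{P}}\Pr[\C'_\lambda]{\pi}\big)-\ln\beta \ge \tfrac{1}{\beta}\big(J(\C')-J(\C)\big)$. Since $\C$ itself is feasible for $\text{EM}\instance{\N,\C}$ --- its values $\C(m,n)(u,v)$ satisfy \eqref{eq:droplater}--\eqref{eq:dropmarginal} with $\theta_{n,v}=\theta(n)(v)$ --- optimality of $\C'$ gives $J(\C')\ge J(\C)$ and hence $\sum_{\pi\in\mathcal{P}}\Pr[\C'_\lambda]{\pi}\ge\beta=\ddiscr{\C}(\M,\N)$.

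It remains to connect this path sum to the quantity we actually care about, namely to prove $\ddiscr{\C'}(\M,\N') \ge \sum_{\pi\in\mathcal{P}}\Pr[\C'_\lambda]{\pi}$, and this is the step I expect to be the main obstacle --- indeed the delicate point that the revised proof must address. The subtlety is that $\mathcal{P}$ is defined through the \emph{old} bisimulation $R_\C$ and same-label relation $\cong$, whereas $\ddiscr{\C'}(\M,\N')=1-\discr{\C'}(\M,\N')$ is governed by the \emph{new} set $R_{\C'}$. Paths ending in $\bot$ are unconditionally good for every coupling, so the only issue is paths ending in $R_\C$: here I would use that the EM objective is insensitive to the values $c^{m,n}_{u,v}$ with $(m,n)\in R_\C$, since $\E[\C]{Z^{m,n}_{u,v}}=0$ there (any counted edge has its source in $\cong$, hence outside $R_\C$, because the good prefix is absorbed on entering $R_\C$). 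One would then like to assume without loss of generality that $\C'$ coincides with $\C$ on $R_\C$, keeping $R_\C$ a bisimulation for $\C'$, forcing $R_\C\subseteq R_{\C'}$ and thus $\ddiscr{\C'}=1$ on $R_\C$; monotonicity of the least fixed point $\ddiscr{\C'}$ would then yield the desired domination. The crux --- and where care is genuinely required --- is reconciling this with the marginal-consistency constraint \eqref{eq:dropmarginal}, which ties the values on $R_\C$ to those outside it through the shared right marginal $\theta'(n)$ (using minimality of $\M$, at most one $m$ pairs with a given $n$ inside $R_\C$, which is the structural fact I would exploit here). Once this transfer inequality is in place, chaining it with the bound of the previous paragraph gives $\ddiscr{\C'}(\M,\N')\ge\ddiscr{\C}(\M,\N)$, and $\N'\in\MC{}$ with $\C'\in\coupling{\M}{\N'}$ is immediate from feasibility, completing the proof.
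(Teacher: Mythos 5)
Your proposal follows the paper's proof essentially step for step through its main body: the path-measure reading $\ddiscr{\C}(\M,\N) = \Pr[{\C}]{G}$ with $G = {\cong}^*(R_\C \cup \bot)$, the identification (up to additive constants independent of the new coupling) of the objective of $\text{EM}\instance{\N,\C}$ with the expected complete-data log-likelihood $Q' = \sum_{\pi \in G} \Pr[{\C}]{\pi} \cdot \ln \Pr[{\C'}]{\pi}$, the Jensen (minorize--maximize) bound, which the paper states as
\begin{equation*}
\ddiscr{\C}(m_0,n_0) \cdot \big( \ln \Pr[{\C'}]{G} - \ln \ddiscr{\C}(m_0,n_0) \big) \geq Q' - Q \,,
\end{equation*}
and the closing observation that $\C$ itself is feasible for $\text{EM}\instance{\N,\C}$, so an optimal $\C'$ has $Q' \geq Q$ and hence $\Pr[{\C'}]{G} \geq \ddiscr{\C}(\M,\N)$. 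All of this is exactly the paper's argument, and your rendering of it is correct; the hypothesis $\ddiscr{\C}(\M,\N) > 0$ enters precisely where you use it, to make $q(\pi) = \Pr[{\C}]{\pi}/\ddiscr{\C}(\M,\N)$ a genuine distribution on $G$.

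Where you diverge is the final step, and the divergence cuts the opposite way from what you assume: the ``transfer inequality'' $\ddiscr{\C'}(\M,\N') \geq \Pr[{\C'}]{G}$ that you isolate as the main obstacle is not proved in the paper at all. The paper's proof opens with the identity $\ln \ddiscr{\C'}(m_0,n_0) - \ln \ddiscr{\C}(m_0,n_0) = \ln \big( \Pr[{\C'}]{G} / \Pr[{\C}]{G} \big)$, i.e., it silently reuses the event $G$ --- defined from the \emph{old} relation $R_\C$ and the old $\cong$ --- as if it also computed $\ddiscr{\C'}$, whose definition runs through $R_{\C'}$. Your concern about good paths ending in $R_\C \setminus R_{\C'}$ is exactly the case this identification ignores, and your diagnosis of why the repair is not automatic is accurate: $\E[\C]{Z^{m,n}_{u,v}} = 0$ for $(m,n) \in R_\C$, so the objective cannot see what $\C'$ does on those pairs, while the shared marginal variables $\theta_{n,v}$ in \eqref{eq:dropmarginal} still tie those entries to the rest of the solution (your suggested fix is, incidentally, consonant with the heuristics stated after the theorem, which freeze $\theta_i(n)$ on states $n$ that are $R_\C$-related to some $m$). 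Since you leave that repair as a sketch, your proof is incomplete at this point --- but it is incomplete precisely where the paper's own proof is silent, and everything you do establish coincides with what the paper establishes. One small correction: the flawed-and-revised proof mentioned in the paper's footnote is that of Theorem~\ref{th:NPhardAM}, not of this theorem, so you should not expect the published text to have already closed the gap you found.
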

\begin{proof}%[Proof of Theorem~5]
For any measurable set $A \subseteq (M \times N)^\omega$, and $\C \in \coupling{\M}{\N}$, we denote by $\Pr[{\C_\lambda}]{A}$ the probability that a run of the chain $\C_\lambda$ belongs to $A$. To shorten the notation, for $B \subseteq (M \times N)^*$ (resp.\ $\pi \in (M \times N)^*$) we write $\Pr[{\C}]{B}$ (resp.\ $\Pr[{\C}]{\pi}$) to indicate $\Pr[{\C_\lambda}]{B(M \times N)^\omega}$ (resp.\ $\Pr[{\C_\lambda}]{\pi(M \times N)^\omega}$).

Recall that, $\ddiscr{\C}(m_0,n_0) = \Pr[{\C}]{G}$ where $G = ({\cong}^*(R_\C \cup \bot))$ that is the probability that $\C_\lambda$ generates a path with prefix in ${\cong}^*R_\C$ or ${\cong}^*\bot$ starting from $(m_0,n_0)$. 

Consider the following inequalities 
\begin{align*}
&\ln \ddiscr{\C'}(m_0,n_0) - \ln \ddiscr{\C}(m_0,n_0) \\
&= \ln \frac{\Pr[{\C'}]{G}}{\Pr[{\C}]{G}} \\
&= \ln \frac{\sum_{\pi \in G}\Pr[\C']{G \mid \pi} \cdot \Pr[\C']{\pi} }{\Pr[\C]{G}} \\
&= \ln \sum_{\pi \in G} \frac{\Pr[\C']{G \mid \pi} \cdot \Pr[\C']{\pi}}{\Pr[\C]{G}} 
\cdot \frac{\Pr[\C]{\pi \mid G}}{\Pr[\C]{\pi \mid G}} \\
&\geq \sum_{\pi \in G} \Pr[\C]{\pi \mid G} \cdot \ln \frac{\Pr[\C']{G \mid \pi} \cdot \Pr[\C']{\pi}}{\Pr[\C]{G} \cdot \Pr[\C]{\pi \mid G}} 
\tag{by Jensen's inequality} \\%% since $\sum_{\pi \in G} \Pr[\C]{\pi | G} = 1$ 
&= \frac{1}{\ddiscr{\C}(m_0,n_0)} \sum_{\pi \in G} \Pr[\C]{\pi} \cdot \ln \frac{\Pr[\C']{\pi}}{\Pr[\C]{\pi}} \\
&= \frac{1}{\ddiscr{\C}(m_0,n_0)} (Q' - Q)
\end{align*}
where $Q' = \sum_{\pi \in G} \Pr[\C]{\pi} \cdot \ln \Pr[\C']{\pi}$ and $Q = \sum_{\pi \in G} \Pr[\C]{\pi} \cdot \ln \Pr[\C]{\pi}$. Rearranging we have 
\begin{equation}
\ddiscr{\C}(m_0,n_0) \cdot \big(\ln \ddiscr{\C'}(m_0,n_0) - \ln \ddiscr{\C}(m_0,n_0) \big) \geq Q' - Q \,.
\label{eq:rearrange}
\end{equation}
The logarithm is an strictly increasing function, we have that 
\begin{equation*}
\ln \ddiscr{\C'}(m_0,n_0) - \ln \ddiscr{\C}(m_0,n_0) \geq 0 
\qquad \text{iff} \qquad  \ddiscr{\C'}(m_0,n_0) \geq \ddiscr{\C}(m_0,n_0) \,,
\end{equation*}
 therefore by \eqref{eq:rearrange} we conclude that $Q' \geq Q$ implies $\ddiscr{\C'}(m_0,n_0) \geq \ddiscr{\C}(m_0,n_0)$.

Thus, inequality~\ref{eq:rearrange} suggests that the best choice of $\C'$ is that which maximizes
$Q'$ as a function of $\C'$. Expanding the definition of $Q'$ we obtain
\begin{equation*}
Q' = \sum_{\pi \in G} \Pr[\C]{\pi} \cdot \ln \Pr[\C']{\pi} 
= \sum_{\pi \in G} \Pr[\C]{\pi} \Big( \ln \iota(\pi_0) + \sum_{i = 0}^{|\pi|-1} \ln{\C'_\lambda(\pi_{i})(\pi_{i+1})}\Big) \, ,
\end{equation*}
where $\iota$ is defined as $\iota(x) = 1$ if $x = (m_0,n_0)$; $0$ otherwise. 

Recall that $Z^{m,n}_{u,v}$ is the random variable that counts the number of occurrences of the edge $((m,n)(u,v))$ in a prefix in $G$ for a given path. Then $Q'$ can be rewritten as 
\begin{equation*}
\sum_{\pi \in G} \Pr[\C]{\pi} \Big( \ln(\iota(\pi_0)) + \sum_{m,u \in M} \sum_{n,v \in N} Z^{m,n}_{u,v}(\pi) \ln \C'(m,n)(u,v) \Big) \, . %\label{eq:objectfunc}
\end{equation*}
Therefore the coupling structure $\C'$ that maximizes the above is obtained as
\begin{align*}
%&\argmax_{c} \sum_{\pi \in G} \Pr[\C]{\pi} \Big( \ln(\iota(\pi_0)) + \sum_{m,u \in M} \sum_{n,v \in N} Z^{m,n}_{u,v}(\pi) \ln \lambda \cdot c^{m,n}_{u,v} \Big)\\
%&= 
&\argmax_{\C'} \sum_{\pi \in G} \Pr[\C]{\pi} \sum_{m,u \in M} \sum_{n,v \in N} Z^{m,n}_{u,v}(\pi) \ln \C'({m,n})({u,v})  \tag{eliminating constants} \\
&\argmax_{\C'} \sum_{m,u \in M} \sum_{n,v \in N}  \E{Z^{m,n}_{u,v} \mid \C} \cdot \ln \C'({m,n})({u,v})
\end{align*}
Since $\C'$ has to range among coupling structures of the form $\C' \in \coupling{\M}{\N'}$ for some chain $\N'$ with the same states as $\N$ we conclude that an optimal solution of $\text{EM}\instance{\N,\C}$ describes a coupling $\C'$ such that $Q' \geq Q$.
%\begin{align*}
%\text{maximize}\;\;& \textstyle\sum_{m,u \in M} \sum_{n,v \in N} \E[\C]{Z^{m,n}_{u,v}} \cdot \ln (c^{m,n}_{u,v}) & \\
%\text{such that}\;\;
%&\textstyle\sum_{v \in N} c^{m,n}_{u,v} = \tau(m)(u)  && \text{$m,u \in M$, $n \in N$}  \\ 
%&\textstyle\sum_{u \in M} c^{m,n}_{u,v} = \theta_{n,v} && \text{$m \in M$, $n,v \in N$} \\
%& c^{m,n}_{u,v} \geq 0 &&\text{$m,u \in M$, $n,v \in N$} 
%\end{align*}
As above said, this implies $\ddiscr{\C'}(\M,\N') \geq \ddiscr{\C}(\M,\N)$. \qed
\end{proof}

Unfortunately, $\text{EM}\instance{\N,\C}$ does not have an easy analytic 
solution and turns out to be inefficiently solved by nonlinear optimization methods. 
In contrast, the relaxed optimization problem obtained by dropping the constraints~\eqref{eq:dropmarginal} 
has a simple analytic solution\footnote{By abusing the notation, whenever the nominator is $0$, we consider the entire expression equal to $0$, regardless of any division by $0$. The same convention is used implicitly in the rest of the section.}:
\begin{equation*}
c^{m,n}_{u,v} = \frac{\tau(m)(n) \cdot \E[\C]{Z^{m,n}_{u,v}}}{\sum_{x \in N} \E[\C]{Z^{m,n}_{u,x}}} \,,
\end{equation*}
and the first heuristic at line~\ref{line:update}, updates $\theta_i$ as follows
\begin{equation*}
  \theta_i(n)(v) =
\begin{cases}
  \theta_{i-1}(n)(v) & \text{if $\exists m \in M.\, n \mathrel{R_\C} m$} \\
  \displaystyle
  \frac{\sum_{m,u \in M} c^{m,n}_{u,v}}{\sum_{x \in N} \sum_{m,u \in M} c^{m,n}_{u,x} }
  &\text{otherwise.}
\end{cases}
\end{equation*}
Recall that the $c^{m,n}_{u,v}$ above may not describe a coupling structure because of the dropping of the constraints~\eqref{eq:dropmarginal} . Nevertheless we recover the transition probability $\theta_i$, from it as the average of the right marginals.

\subparagraph*{The ``Averaged Expectations'' Heuristic.}
In contrast to the previous case, the second heuristic will update $\theta_i$ by directly averaging the expected values of $Z^{m,n}_{u,v}$ as follows  
\begin{equation*}
  \theta_i(n)(v) = \left\{
\begin{aligned}
  &\theta_{i-1}(n)(v) && \text{if $\exists m \in M.\, n \mathrel{R_\C} m$} \\
  &\frac{\sum_{m,u \in M} \E[\C]{Z^{m,n}_{u,v}}}{\sum_{x \in N} \sum_{m,u \in M} \E[\C]{Z^{m,n}_{u,x}}} 
  && \text{otherwise}\,.
\end{aligned} \right.
\end{equation*}

\subparagraph*{Computing the Expected Values.}

We compute $\E[\C]{Z^{m,n}_{u,v}}$ using a variant of the \emph{forward-backward} algorithm for hidden Markov models. 
Let $Z^{m,n} \colon ((M \times N) \cup \bot)^\omega \to \naturals$ be the random variable that counts the number of occurrences of the pair $(m,n)$ in a prefix in ${\cong}^*(R_\C \cup \bot)$ of the path. 
We compute the expected value of $Z^{m,n}$ w.r.t.\ the probability induced by $\C_\lambda$ as the solution $z_{m,n}$ of the following system of equations
\begin{equation}
z_{m,n} = 
\begin{cases}
0 & \text{if $m \not\cong n$} \\
\iota(m,n) + \lambda \sum_{u,v} (z_{u,v} + 1) \cdot \C(u,v)(m,n) &\text{otherwise} \,,
\end{cases}
\label{eq:zeta}
\end{equation}
where $\iota$ is defined as $\iota(x) = 1$ if $x = (m_0,n_0)$; $0$ otherwise.
Then, the expected value of $Z^{m,n}_{u,v}$ with respect to the probability distribution induced by $\C_\lambda$ is given by 
\begin{equation}
\E[\C]{Z^{m,n}_{u,v}} = \lambda \cdot z_{m,n} \cdot \C(m,n)(u,v) \cdot \ddiscr{\C}(u,v) \,.
\label{eq:EZ}
\end{equation}

\subparagraph*{Complexity of the Heuristics.} 
It is worth noting that Algorithm~\ref{alg:update} runs in polynomial time in the size of its input. 
Computing the coupling structure $\C$ in line~\ref{line:optC}, can be performed in polynomial time in the size of $\M \oplus \N_0$~\cite{ChenBW12,BacciLM:ictac15}\footnote{As pointed out in~\cite{BacciLM:tacas13} the method proposed in~\cite{ChenBW12} turns out to be slow in practice. In our implementation of Algorithm~\ref{alg:update} the computation of the coupling structure $\C$ in line~\ref{line:optC} is performed by using the on-the-fly method proposed in~\cite{TangB17}}.

The update step requires one to compute $\E[\C]{Z^{m,n}_{u,v}}$ for each $m,u \in M$ and $n,v \in N$. By Equation~\eqref{eq:EZ}, this can be done by solving two systems of linear equations, namely \eqref{eq:zeta} and \eqref{eq:B}, each with $O(|M| \cdot |N|)$ unknowns. 

\subparagraph*{Choosing the Initial Approximant.} Similarly to EM algorithms, the choice of the initial approximant $\N_0$ may have a significant effect on the quality of the solution. 

Notice that the requirement $\dist(\M,\N_0) < 1$ is fundamental for both the heuristics, otherwise $\E[\C]{Z^{m,n}_{u,v}} = 0$ for all $m,u \in M$ and $n,v \in N$. In case $\N_0$ does not satisfy the above condition, one should replace it with a better approximant. We have seen that this is trivial to do when $\lambda < 1$, but for $\lambda = 1$ it may be hard to find one (\cf\ Theorem~\ref{th:NP-completeSA}), in which case one may need to increase the size of the initial approximant. 

For the labeling of the states, one should follow Lemma~\ref{lem:usefullabels}. As for the choice of the underlying structure one shall be guided by Lemma~\ref{lem:SAasFindingSCCinGraph}.  
However, due to Theorem~\ref{th:NPhardAM}, it seems unlikely to have generic good strategies for selecting a starting approximant candidate. Nevertheless, good selections for the transition probabilities and the number of states may be suggested by looking at the problem instance.

Finally, by the assumption made on the rationality of the transition functions of $\M$ and $\N_0$, Algorithm~\ref{alg:update} will always return a chain with rational transition function. In light of Example~\ref{ex:minimalirrational}, this means that there are cases where Algorithm~\ref{alg:update} will never return an (exact) optimal solution, regardless from the choice of (the rational) initial approximant. Nevertheless, we will see that the we are still able to provide good sub-optimal solutions. 

\begin{example}
Consider the Markov chain $\M$ from Example~\ref{ex:minimalirrational}. We show some iterations of Algorithm~\ref{alg:update} comparing averaged marginals (AM) and the averaged expectation (AE) heuristics. The tests have been performed staring from two instances of the parametric MC $\N_{x,y}$ described in Example~\ref{ex:minimalirrational}. 
\begin{table}[h]
\centering
\setlength{\tabcolsep}{0.5em}
\renewcommand{\arraystretch}{1.3}
\begin{tabular}{|c|c|c|c|c|}
\hline
\multicolumn{5}{|c|}{Test $1$} \\
\hline
Heur. & iter & $x$ & $y$ & $\dist[1](\M, \N_{x,y})$ \\ \hline\hline
\multirow{3}{*}{AM}&
0 & $\frac{1}{10}$ & $\frac{1}{2}$ & $\frac{7629}{10000} \approx 0.76$ \\ \cline{2-5}
& $\mathbf{1}$ & $\mathbf{\frac{79}{150}}$ & $\mathbf{\frac{21}{100}}$ & $\mathbf{\frac{231233}{421875} \approx 0.55}$ \\  \cline{2-5}
& 2 & $\frac{79}{150}$ & $\frac{21}{100}$ & $\frac{231233}{421875} \approx 0.55$ \\ 
\hline
\multirow{2}{*}{AE}&
$\mathbf{0}$ & $\mathbf{\frac{1}{10}}$ & $\mathbf{\frac{1}{2}}$ & $\mathbf{\frac{7629}{10000} \approx 0.762}$ \\ \cline{2-5}
& 1 & $\frac{131}{1215}$ & $\frac{308}{405}$ & $\frac{231233}{421875} \approx 0.763$ \\ \hline
\end{tabular}
\\[1ex]
\begin{tabular}{|c|c|c|c|c|}
\hline
\multicolumn{5}{|c|}{Test $2$} \\
\hline
Heur. & iter & $x$ & $y$ & $\dist[1](\M, \N_{x,y})$ \\ \hline\hline
\multirow{3}{*}{AM}&
0 & $\frac{1}{10}$ & $\frac{1}{8}$ & $\frac{1707}{2000} \approx 0.85$ \\  \cline{2-5}
& $\mathbf{1}$ & $\mathbf{\frac{79}{150}}$ & $\mathbf{\frac{21}{100}}$ & $\mathbf{\frac{231233}{421875} \approx 0.55}$ \\  \cline{2-5}
& 2 & $\frac{79}{150}$ & $\frac{21}{100}$ & $\frac{231233}{421875} \approx 0.55$ \\ 
\hline
\multirow{3}{*}{AE}&
0 & $\frac{1}{10}$ & $\frac{1}{8}$ & $\frac{1707}{2000} \approx 0.85$ \\  \cline{2-5}
& $\mathbf{1}$ & $\mathbf{\frac{79}{595}}$ & $\mathbf{\frac{66}{119}}$ & $\mathbf{\frac{3171910751}{4212897500} \approx 0.752}$ \\ \cline{2-5}
& 2 & $\frac{38983972}{287248825}$ & $\frac{211382913}{287248825}$ & $\frac{71603676151221064844103643}{94805770890127757886062500} \approx 0.755$ \\ \hline
\end{tabular}
\caption{Tests on the accuracy of Algorithm~\ref{alg:update} on the Markov chain $\M$ from Example~\ref{ex:minimalirrational}.}
\label{table:irrational}
\end{table}

Table~\ref{table:irrational} shows how the two heuristics update the initial approximant and how the distance between $\M$ and the current approximant evolves at each iteration. Notably, the averaged marginals heuristics sets the parameter $y$ to $\frac{21}{100}$ in both tests, reaching the its theoretical optimal value. Regarding the parameter $x$ the heuristics sets its value to $\frac{79}{150}$ reaching an absolute error of $0.23$ from its (irrational) theoretical optimal value. 
As Table~\ref{table:irrational} shows, in this example, the averaged expectation heuristic suffers from its oversimplified update procedure. 
It is worth noting how the quality of the outcome may not be influenced by the quality of the initial estimate. \qed
\end{example}

\section{Experimental Results}
\label{sec:experiments}
We evaluate the performances of Algorithm~\ref{alg:update} by comparing the two proposed heuristics on 
two classical case studies: the \emph{IPv4 zeroconf protocol} from~\cite[Ex.10.5]{BaierK08} (\cf\ Figure~\ref{IPv4}) and the \emph{drunkard's walk} (\cf\ Figure~\ref{DrkW}).

\begin{figure}[h]
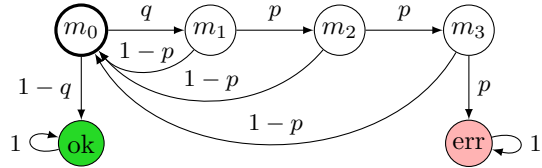

\def\h{1.7cm}
\def\v{1.5cm}
\centering
\tikz[labels, baseline= {(m0.center)}, %{($(current bounding box.north)+(down:0.7cm)$)}
n1/.style={}, n2/.style={fill=red!30}, n3/.style={fill=green!70!gray}]{ 
  \draw (0,0) node[label, initial, n1] (m0) {$m_0$}; 
  \draw ($(m0)+(right:\h)$) node[label, n1] (m1) {$m_1$};
  \draw ($(m1)+(right:\h)$) node[label, n1] (m2) {$m_2$};
   \draw ($(m2)+(right:\h)$) node[label, n1] (m3) {$m_3$};
  \draw ($(m3)+(down:\v)$) node[label, n2] (err) {err};
  \draw ($(m0)+(down:\v)$) node[label, n3] (ok) {ok};
  \path[-latex, font=\small]
    (m0) edge node[above] {$q$} (m1)
    	    edge node[left] {$1-q$} (ok)
    (m1) edge node[above] {$p$} (m2)
            edge[bend left=50] node[above] {$1-p$} (m0)
    (m2) edge node[above] {$p$} (m3)
            edge[bend left=50] node[above] {$1-p$} (m0)
    (m3) edge node[right] {$p$} (err)
            edge[bend left=60] node[above] {$1-p$} (m0)
    (err) edge[loop right] node[right] {$1$} (err)
    (ok) edge[loop left] node[left] {$1$} (ok)
    ;
}
\caption{Markov chain $\text{IPv4}(p,q,n)$ of the IPv4 zeroconf protocol for $n = 3$ probes. The parameter $p \in [0,1]$ models the probability that no reply is received and the next probe is sent, while $q \in [0,1]$ models the probability that the randomly chosen IP address is already used.}
\label{IPv4}
\end{figure}
\begin{figure}[h]
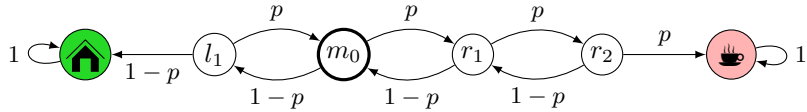

\def\h{1.7cm}
\def\v{1.5cm}
\def\house{\hbox{\kern3pt \vbox to13pt{}% 
   \pdfliteral{q 0 0 m 0 5 l 5 10 l 10 5 l 10 0 l 7 0 l 7 5 l 3 5 l 3 0 l f
               1 j 1 J -2 5 m 5 12 l 12 5 l S Q }%
   \kern 13pt}}
\centering
%\text{DrkW}(p,q,n, k) = 
\tikz[labels, baseline= {(m0.center)}, %{($(current bounding box.north)+(down:0.7cm)$)}
n1/.style={}, n2/.style={fill=red!30}, n3/.style={fill=green!70!gray}]{ 
  \draw (0,0) node[label, initial, n1] (m0) {$m_0$}; 
  \draw ($(m0)+(right:\h)$) node[label, n1] (r1) {$r_1$};
  \draw ($(r1)+(right:\h)$) node[label, n1] (r2) {$r_2$};
  \draw ($(r2)+(right:\h)$) node[label, n2,font={\large}] (co) {\Coffeecup};
  \draw ($(m0)+(left:\h)$) node[label, n1] (l1) {$l_1$};
  \draw ($(l1)+(left:\h)$) node[label,n3,inner sep=-.5pt] (home) {\house};
  \path[-latex, font=\small]
    (m0) edge[bend left] node[above] {$p$} (r1)
    	    edge[bend left] node[below] {$1-p$} (l1)
    (r1) edge[bend left] node[above] {$p$} (r2)
    	    edge[bend left] node[below] {$1-p$} (m0)
    (r2) edge node[above] {$p$} (co)
    	    edge[bend left] node[below] {$1-p$} (r1)
    (l1) edge[bend left] node[above] {$p$} (m0)
    	    edge node[below] {$1-p$} (home)
    (co) edge[loop right] node[right] {$1$} (co)
    (home) edge[loop left] node[left] {$1$} (home)
    ;
}
\caption{Markov chain $\text{DrkW}(p,n, k)$ of the drunkard's walk for $n = 1$ steps to home and $k = 2$ steps to the bar. The parameter $p \in [0,1]$ models the probability of moving one step toward the bar.}
\label{DrkW}
\end{figure}

Table~\ref{tab:EMexperiments} shows the results of our tests\footnote{
The tests have been made using a prototype implementation coded
in $\text{Mathematica}^{\circledR}$ (available at \url{people.cs.aau.dk/giovbacci/tools.html})
running on an Intel Core-i5 2.5 GHz processor with 8GB of DDR3 RAM 1600MHz.}.
The experiments have been performed by running our algorithm on a number of instances $\instance{\M,k}$
of increasing size, where $\M$ is an instance of either the IPv4 protocol or the drunkard's walk. 
 
As initial approximant we use a suitably small instance of the same model, with parameters $p$ and $q$ chosen randomly in the open interval $(0,1)$.
For each experiment we report the discount factor $\lambda$; the distance to the original model respectively
from $\N_0$ and $\N_h$, where $h$ is the total number of iterations; and execution time (in seconds).
We compare the two heuristics, averaged marginals (AM) and averaged expectation (AE), on the same initial approximant.

The results obtained on the IPv4 protocol show significant improvements between the initial
and the returned approximant. Notably, these are obtained in very few iterations of the update procedure.
On this model, AM gives approximants of better quality compared with those obtained using AE; however
AE seems to be slightly faster than AM. Both the heuristics can handle instances of size up to ${\sim}100$ states.
On the drunkard's walk model, the two heuristics exhibit opposite results w.r.t.\ the previous experiment: AE provides the best solutions with fewer iterations and significantly lower execution times.

\begin{table}[ht]
\begin{center}
%\footnotesize
\setlength{\tabcolsep}{1.5ex}
\begin{tabular}[t]{|c|c|c|c|c|c|c|c|c|c|c|}
\hline
 \multirow{2}{*}{Case} & 
 \multirow{2}{*}{$|M|$} &
 \multirow{2}{*}{$k$} &
 \multicolumn{4}{c|}{$\lambda = 1$} &
 \multicolumn{4}{c|}{$\lambda = 0.8$}
 \\ \cline{4-11}
 &&& $\dist$-init & $\dist$-final & \# & time
 & $\dist$-init & $\dist$-final & \# & time \\
 \hline\hline
 \multirow{6}{*}{$\begin{array}{c}\text{IPv4} \\ \text{(AM)} \end{array}$} %% 1
 & 23 & 5 & 0.775 & 0.054 & 3 & 4.8 		& 0.576 & 0.025 & 3 & 4.8 \\
 & 53 & 5 & 0.856 & 0.062 & 3 & 25.7	& 0.667 & 0.029 & 3 & 25.9 \\
 & 103 & 5 & 0.923 & 0.067 & 3 & 116.3	& 0.734 & 0.035 & 3 & 116.5 \\
 \cline{2-11}
 & 53 & 6 & 0.757 & 0.030 & 3 & 39.4	& 0.544 & 0.011 & 3 & 39.4 \\
 & 103 & 6 & 0.837 & 0.032 & 3 & 183.7	& 0.624 & 0.017 & 3 & 182.7 \\
 & 203 & 6 & -- & -- & -- & TO			& -- & -- & -- & TO \\
 \hline
 \multirow{6}{*}{$\begin{array}{c}\text{IPv4} \\ \text{(AE)} \end{array}$} %% 1
 & 23 & 5 & 0.775 & 0.109 & 2 & 2.7		& 0.576 & 0.049 & 3 & 4.2 \\
 & 53 & 5 & 0.856 & 0.110 & 2 & 14.2	& 0.667 & 0.049 & 3 & 21.8 \\
 & 103 & 5 & 0.923 & 0.110 & 2 & 67.1	& 0.734 & 0.049 & 3 & 100.4 \\
 \cline{2-11}
 & 53 & 6 & 0.757 & 0.072 & 2 & 21.8	& 0.544 & 0.019 & 3 & 33.0 \\
 & 103 & 6 & 0.837 & 0.072 & 2 & 105.9	& 0.624 & 0.019 & 3 & 159.5 \\
 & 203 & 6 & -- & -- & -- & TO			& -- & -- & -- & TO \\
 \hline\hline
 \multirow{3}{*}{$\begin{array}{c}\text{DrkW} \\ \text{(AM)} \end{array}$} 
 & 39 & 7 & 0.565 & 0.466 & 14 & 259.3 	& 0.432 & 0.323 & 14 & 252.8 \\
 & 49 & 7 & 0.568 & 0.460 & 14 & 453.7 	& 0.433 & 0.322 & 14 & 420.5 \\
 & 59 & 8 & 0.646 & -- & -- & TO			& 0.423 & -- & -- & TO \\
 \hline
 \multirow{3}{*}{$\begin{array}{c}\text{DrkW} \\ \text{(AE)} \end{array}$} 
 & 39 & 7 & 0.565 & 0.435 & 11 & 156.6	& 0.432 & 0.321 & 2 & 28.6 \\
 & 49 & 7 & 0.568 & 0.434 & 10 & 247.7 	& 0.433 & 0.316 & 2 & 46.2 \\
 & 59 & 8 & 0.646 & 0.435 & 10 & 588.9	& 0.423 & 0.309 & 2 & 115.7 \\
 \hline
\end{tabular}
\end{center}

\caption{Comparison of the performance of Algorithm~\ref{alg:update} on the IPv4 zeroconf protocol
and the classic Drunkard's Walk w.r.t.\ the heuristics AM and AE.
}
\label{tab:EMexperiments}
\end{table}

\section{Conclusions and Future Work}
\label{sec:conclusion}
To the best of our knowledge, this is the first paper addressing the complexity of the \emph{optimal} approximate minimization of MCs w.r.t.\ a behavioral metric semantics. Even though for 
a good evaluation of our heuristics more tests are needed, the current results seem promising.
Moreover, in the light of~\cite{ChenBW12,BacciLM:ictac15}, relating the probabilistic bisimilarity distance to the LTL-model checking problem as $\dist[1](\M,\N) \geq |\Pr[\M]{\varphi} - \Pr[\N]{\varphi}|$, for all $\varphi \in \text{LTL}$,
our results might be used to lead saving in the overall model checking time. A deeper study of this topic will be
the focus of future work. 

We close with an interesting open problem. Membership of $\BA$ in NP is left open. However, 
by arguments analogous to~\cite{ChenK14,EtessamiY09} and leveraging on the ideas that 
made us produce the MC in Example~\ref{ex:minimalirrational}, we suspect that 
$\BA$ is hard for the square-root-sum problem. The latter is known to be NP-hard and in PSPACE,
but membership in NP has been open since 1976. Allender et al.~\cite{Allender09} showed that it can be
decided in the 4th level of the counting hierarchy, thus it is unlikely its PSPACE-completeness.

Furthermore, in light of the relation of the bisimilarity distance $\dist$ with the LTL model checking problem (\cf\ Corollary~\ref{cor:disctvupperbound}) and the PosSLP-hardness of the model checking problem for Interval Markov chains discovered in~\cite[Theorem 3]{BenediktLW13} one may also consider to study the $\BA$ problem as a good candidate for $\text{NP}_{\mathbb{R}}$ completeness in the Blum-Shub-Smale model of computation over the real filed with order $(\mathbb{R}, <)$~\cite{BlumCSS97}.

\paragraph{Acknowledgements}
We thank the anonymous reviewers for their comments on an earlier version of the article.
We would also like to show our gratitude to Nathana\"el Fijalkow who pointed out a flaw on an earlier version of the proof of Theorem~\ref{th:NPhardAM}.

\section*{References}

\bibliography{biblio}

\begin{thebibliography}{10}
\expandafter\ifx\csname url\endcsname\relax
  \def\url#1{\texttt{#1}}\fi
\expandafter\ifx\csname urlprefix\endcsname\relax\def\urlprefix{URL }\fi
\expandafter\ifx\csname href\endcsname\relax
  \def\href#1#2{#2} \def\path#1{#1}\fi

\bibitem{BacciLM:icalp17}
G.~Bacci, G.~Bacci, K.~G. Larsen, R.~Mardare, {On the Metric-Based Approximate
  Minimization of Markov Chains}, in: I.~Chatzigiannakis, P.~Indyk, F.~Kuhn,
  A.~Muscholl (Eds.), 44th International Colloquium on Automata, Languages, and
  Programming (ICALP 2017), Vol.~80 of Leibniz International Proceedings in
  Informatics (LIPIcs), Schloss Dagstuhl--Leibniz-Zentrum fuer Informatik,
  Dagstuhl, Germany, 2017, pp. 104:1--104:14.
\newblock \href {http://dx.doi.org/10.4230/LIPIcs.ICALP.2017.104}
  {\path{doi:10.4230/LIPIcs.ICALP.2017.104}}.

\bibitem{Moore56}
E.~F. Moore, {Gedanken Experiments on Sequential Machines}, in: Automata
  Studies, Princeton University, 1956, pp. 129--153.

\bibitem{Hopcroft71}
J.~Hopcroft, An $n \log n$ algorithm for minimizing states in a finite
  automaton, in: Z.~Kohavi, A.~Paz (Eds.), Theory of Machines and Computations,
  Academic Press, 1971, pp. 189--196.
\newblock \href {http://dx.doi.org/10.1016/B978-0-12-417750-5.50022-1}
  {\path{doi:10.1016/B978-0-12-417750-5.50022-1}}.

\bibitem{KanellakisS83}
P.~C. Kanellakis, S.~A. Smolka, {CCS} expressions, finite state processes, and
  three problems of equivalence, in: Proceedings of the 2nd Annual {ACM}
  {SIGACT-SIGOPS} Symposium on Principles of Distributed Computing, {ACM},
  1983, pp. 228--240.
\newblock \href {http://dx.doi.org/10.1145/800221.806724}
  {\path{doi:10.1145/800221.806724}}.

\bibitem{KanellakisS90}
P.~C. Kanellakis, S.~A. Smolka, {CCS} expressions, finite state processes, and
  three problems of equivalence, Information and Computation 86~(1) (1990)
  43--68.
\newblock \href
  {http://dx.doi.org/http://dx.doi.org/10.1016/0890-5401(90)90025-D}
  {\path{doi:http://dx.doi.org/10.1016/0890-5401(90)90025-D}}.

\bibitem{Milner80}
R.~Milner, A Calculus of Communicating Systems, Vol.~92 of Lecture Notes in
  Computer Science, Springer, 1980.
\newblock \href {http://dx.doi.org/10.1007/3-540-10235-3}
  {\path{doi:10.1007/3-540-10235-3}}.

\bibitem{Baier96}
C.~Baier, Polynomial time algorithms for testing probabilistic bisimulation and
  simulation, in: {CAV}, Vol. 1102 of Lecture Notes in Computer Science,
  Springer, 1996, pp. 50--61.
\newblock \href {http://dx.doi.org/10.1007/3-540-61474-5_57}
  {\path{doi:10.1007/3-540-61474-5_57}}.

\bibitem{LarsenS91}
K.~G. Larsen, A.~Skou, Bisimulation through probabilistic testing, Information
  and Computation 94~(1) (1991) 1--28.

\bibitem{AlurCHDW92}
R.~Alur, C.~Courcoubetis, N.~Halbwachs, D.~L. Dill, H.~Wong{-}Toi, Minimization
  of timed transition systems, in: {CONCUR}, Vol. 630 of Lecture Notes in
  Computer Science, Springer, 1992, pp. 340--354.
\newblock \href {http://dx.doi.org/10.1007/BFb0084802}
  {\path{doi:10.1007/BFb0084802}}.

\bibitem{Yannakakis97}
M.~Yannakakis, D.~Lee, An efficient algorithm for minimizing real-time
  transition systems, Formal Methods in System Design 11~(2) (1997) 113--136.
\newblock \href {http://dx.doi.org/10.1023/A:1008621829508}
  {\path{doi:10.1023/A:1008621829508}}.

\bibitem{ZhangS92}
S.~Zhang, S.~A. Smolka, Towards efficient parallelization of equivalence
  checking algorithms, in: {FORTE}, Vol. {C-10} of {IFIP} Transactions,
  North-Holland, 1992, pp. 121--135.

\bibitem{Blom2005}
S.~Blom, S.~Orzan, A distributed algorithm for strong bisimulation reduction of
  state spaces, International Journal on Software Tools for Technology Transfer
  7~(1) (2005) 74--86.
\newblock \href {http://dx.doi.org/10.1007/s10009-004-0159-4}
  {\path{doi:10.1007/s10009-004-0159-4}}.

\bibitem{LeeY92}
D.~Lee, M.~Yannakakis, Online minimization of transition systems (extended
  abstract), in: Annual {ACM} Symposium on Theory of Computing, {ACM}, 1992,
  pp. 264--274.
\newblock \href {http://dx.doi.org/10.1145/129712.129738}
  {\path{doi:10.1145/129712.129738}}.

\bibitem{JouS90}
C.-C. Jou, S.~A.Smolka, Equivalences, congruences, and complete axiomatizations
  for probabilistic processes, in: CONCUR'90 Theories of Concurrency:
  Unification and Extension, Vol. 458 of LNCS, 1990, pp. 367--383.

\bibitem{DesharnaisGJP99}
J.~Desharnais, V.~Gupta, R.~Jagadeesan, P.~Panangaden, {Metrics for Labeled
  Markov Systems}, in: {CONCUR}, Vol. 1664 of LNCS, Springer, 1999, pp.
  258--273.
\newblock \href {http://dx.doi.org/10.1007/3-540-48320-9_19}
  {\path{doi:10.1007/3-540-48320-9_19}}.

\bibitem{BreugelW:icalp01}
F.~van Breugel, J.~Worrell, {Towards Quantitative Verification of Probabilistic
  Transition Systems}, in: ICALP, Vol. 2076 of LNCS, 2001, pp. 421--432.

\bibitem{BreugelW06}
F.~van Breugel, J.~Worrell, Approximating and computing behavioural distances
  in probabilistic transition systems, {Theoretical Computer Science} 360~(3)
  (2006) 373--385.

\bibitem{ChenBW12}
D.~Chen, F.~van Breugel, J.~Worrell, {On the Complexity of Computing
  Probabilistic Bisimilarity}, in: FoSSaCS, Vol. 7213 of LNCS, Springer, 2012,
  pp. 437--451.

\bibitem{FernsPP04}
N.~Ferns, P.~Panangaden, D.~Precup, {Metrics for finite Markov Decision
  Processes}, in: {UAI}, AUAI Press, 2004, pp. 162--169.

\bibitem{KocvaraS03}
M.~Ko\v{c}vara, M.~Stingl, {PENNON: A code for convex nonlinear and
  semidefinite programming}, Optimization Methods and Software 18~(3) (2003)
  317--333.
\newblock \href {http://dx.doi.org/10.1080/1055678031000098773}
  {\path{doi:10.1080/1055678031000098773}}.

\bibitem{PENBMI}
M.~Ko\v{c}vara, M.~Stingl, {PENBMI 2.0},
  \url{http://www.penopt.com/penbmi.html}, accessed: 2016-08-28.

\bibitem{McLachlanK08}
G.~J. McLachlan, T.~Krishnan, {The EM Algorithm and Extensions}, 2nd Edition,
  Wiley-Interscience, 2008.

\bibitem{BenediktLW13}
M.~Benedikt, R.~Lenhardt, J.~Worrell, {LTL Model Checking of Interval Markov
  Chains}, in: TACAS, Vol. 7795 of Lecture Notes in Computer Science, Springer,
  2013, pp. 32--46.
\newblock \href {http://dx.doi.org/10.1007/978-3-642-36742-7_3}
  {\path{doi:10.1007/978-3-642-36742-7_3}}.

\bibitem{FranceschinisM94}
G.~Franceschinis, R.~R. Muntz, Bounds for quasi-lumpable markov chains,
  Perform. Eval. 20~(1-3) (1994) 223--243.
\newblock \href {http://dx.doi.org/10.1016/0166-5316(94)90015-9}
  {\path{doi:10.1016/0166-5316(94)90015-9}}.

\bibitem{BallePP15}
B.~Balle, P.~Panangaden, D.~Precup, A canonical form for weighted automata and
  applications to approximate minimization, in: {LICS}, {IEEE} Computer
  Society, 2015, pp. 701--712.
\newblock \href {http://dx.doi.org/10.1109/LICS.2015.70}
  {\path{doi:10.1109/LICS.2015.70}}.

\bibitem{DesharnaisGJP04}
J.~Desharnais, V.~Gupta, R.~Jagadeesan, P.~Panangaden, {Metrics for labelled
  Markov processes}, Theoretical Compututer Science 318~(3) (2004) 323--354.

\bibitem{BaierK08}
C.~Baier, J.-P. Katoen, {Principles of Model Checking}, MIT Press, 2008.

\bibitem{Canny88}
J.~F. Canny, {Some Algebraic and Geometric Computations in PSPACE}, in:
  Proceedings of the 20th Annual {ACM} Symposium on Theory of Computing
  (STOC'88), {ACM}, 1988, pp. 460--467.
\newblock \href {http://dx.doi.org/10.1145/62212.62257}
  {\path{doi:10.1145/62212.62257}}.

\bibitem{DerisaviHS03}
S.~Derisavi, H.~Hermanns, W.~H. Sanders, {Optimal state-space lumping in Markov
  chains}, Inf. Process. Lett. 87~(6) (2003) 309--315.
\newblock \href {http://dx.doi.org/10.1016/S0020-0190(03)00343-0}
  {\path{doi:10.1016/S0020-0190(03)00343-0}}.

\bibitem{BacciLM:ictac15}
G.~Bacci, G.~Bacci, K.~G. Larsen, R.~Mardare, {Converging from Branching to
  Linear Metrics on Markov Chains}, in: {ICTAC}, Vol. 9399 of LNCS, Springer,
  2015, pp. 349--367.
\newblock \href {http://dx.doi.org/10.1007/978-3-319-25150-9_21}
  {\path{doi:10.1007/978-3-319-25150-9_21}}.

\bibitem{BacciLM:tacas13}
G.~Bacci, G.~Bacci, K.~G. Larsen, R.~Mardare, {On-the-Fly Exact Computation of
  Bisimilarity Distances}, in: TACAS, Vol. 7795 of LNCS, 2013, pp. 1--15.

\bibitem{TangB17}
Q.~Tang, F.~van Breugel, {Algorithms to Compute Probabilistic Bisimilarity
  Distances for Labelled Markov Chains}, in: R.~Meyer, U.~Nestmann (Eds.), 28th
  International Conference on Concurrency Theory (CONCUR 2017), Vol.~85 of
  Leibniz International Proceedings in Informatics (LIPIcs), Schloss
  Dagstuhl--Leibniz-Zentrum fuer Informatik, Dagstuhl, Germany, 2017, pp.
  27:1--27:16.
\newblock \href {http://dx.doi.org/10.4230/LIPIcs.CONCUR.2017.27}
  {\path{doi:10.4230/LIPIcs.CONCUR.2017.27}}.

\bibitem{ChenK14}
T.~Chen, S.~Kiefer, {On the Total Variation Distance of Labelled Markov
  Chains}, in: CSL-LICS`14, {ACM}, 2014, pp. 33:1--33:10.
\newblock \href {http://dx.doi.org/10.1145/2603088.2603099}
  {\path{doi:10.1145/2603088.2603099}}.

\bibitem{EtessamiY09}
K.~Etessami, M.~Yannakakis, {Recursive Markov chains, stochastic grammars, and
  monotone systems of nonlinear equations}, J. {ACM} 56~(1) (2009) 1:1--1:66.
\newblock \href {http://dx.doi.org/10.1145/1462153.1462154}
  {\path{doi:10.1145/1462153.1462154}}.

\bibitem{Allender09}
E.~Allender, P.~B\"{u}rgisser, J.~Kjeldgaard-Pedersen, P.~B. Miltersen, On the
  complexity of numerical analysis, SIAM Journal on Computing 38~(5) (2009)
  1987--2006.
\newblock \href {http://dx.doi.org/10.1137/070697926}
  {\path{doi:10.1137/070697926}}.

\bibitem{BlumCSS97}
L.~Blum, F.~Cucker, M.~Shub, S.~Smale, {Complexity and Real Computation},
  Springer-Verlag New York, Inc., Secaucus, NJ, USA, 1998.

\end{thebibliography}
\end{document}

\newpage
\appendix
\section{Missing Proofs of the Technical Results}

For $\M,\N$ MCs, we say that $\N$ is an embedding of $\M$, written $\N \hookrightarrow \M$, iff there exists an injective map $f \colon N \to M$ that is label preserving (i.e., $\ell \circ f = \alpha$). %
\begin{lemma} \label{lem:injectiveLabelingSolution}
If $\M$ is maximally collapsed, then there exists $\N \in \MC{k}$ such that $\N \hookrightarrow \M$ that minimizes the distance $\dist(\M,\N)$. 
\end{lemma}
\begin{proof}%[Proof of Lemma~\ref{lem:injectiveLabelingSolution}]
By Lemma~\ref{lem:usefullabels} it suffices to show that for any $\N' \in \MC{L(\M)}$ there exists $\N \in \MC{L(\M)}$ such that $\N \hookrightarrow \M$ and $\dist(\M,\N) \leq \dist(\M,\N')$. If $\N' \hookrightarrow \M$ we are done. 
Assume $\N' \not\hookrightarrow \M$. 
Let $f \colon N \to M$ a label preserving map (i.e., $\ell \circ f = \alpha'$). For each $m \in f(N)$ we fix $\bar{m} \in f^{-1}(m)$ to represent $m$ in the MC $\N'$.

By Theorem~\ref{th:mincoupling}, there exists $\C \in \coupling{\M}{\N'}$ such that $\dist(\M,\N') = \gamma_\lambda^\C(\M,\N')$. For arbitrary $m \in f(N)$ and $n \in f^{-1}(m)$ one can construct $\omega_{m,n} \in \D(M \times N)$ that satisfies the following constraints
\begin{align}
&\omega_{m,n}(u,v) \geq \C(m,n)(u,v) &&\text{for all $u \in f(N)$ and $v \in f^{-1}(u)$} \label{eq:inj:better} \\
&\textstyle \sum_{v \in N} \omega_{m,n}(u,v) = \tau(m)(u) &&\text{for all $u \in M$} \label{eq:inj:marginal} \\
&\textstyle \sum_{v \in f^{-1}(u)} \omega_{m,n}(u,v) = \tau(m)(u) &&\text{for all $u \in f(N)$} \label{eq:inj:min} \\
&\omega_{m,n}(u,\bar{u}) = \tau(m)(u) &&\text{for all $u \notin f(N)$} \label{eq:inj:remainder}
\end{align}

\noindent 
Let $\N'' = (N,\theta'',\alpha)$ where, for $n,v \in N$, $\theta''(n)(v) = \sum_{u \in M} \omega_{m,n}(u,v)$. Note that $\theta''$ is well defined because $\M$ has injective labeling. 
Next we show that $\dist(\M,\N'') \leq \dist(\M,\N')$. 

By Theorem~\ref{th:mincoupling} and Knaster-Tarski's fixed point theorem it suffices to find $\C'' \in \coupling{\M}{\N''}$ such that $\Gamma_\lambda^{\C''}(\gamma_\lambda^{\C}) \sqsubseteq \gamma_\lambda^{\C}$. Take any $\C''$ such that $\C''(m,n) = \omega_{m,n}$ whenever $\ell(m) = \alpha(n)$. Note that by \eqref{eq:inj:marginal},  $\omega_{m,n} \in \coupling{\tau(m)}{\theta''(n)}$.  %otherwise $\C''(m,n) = \mu$ for any fixed $\mu \in \coupling{\tau(m)}{\theta''(n)}$. 
Next we show that, for all $m \in M$ and $n \in N$, $\Gamma_\lambda^{\C''}(\gamma_\lambda^{\C})(m,n) \leq \gamma_\lambda^{\C}(m,n)$. If $\ell(m) \neq \alpha(n)$, the inequality holds because $\gamma_\lambda^{\C}(m,n) = 1$. If $\ell(m) \neq \alpha(n)$, the following hold
%{\color{red}
\begin{align*}
 \textstyle \sum_{u,v} \C(m,n)(u,v) - \omega_{m,n}(u,v) &= 0 \tag{by $\C(m,n), \omega_{m,n} \in \D(M \times N)$}\\
 \textstyle \sum_{\ell(u) \neq \alpha(v)} \C(m,n)(u,v) - \omega_{m,n}(u,v) &= \textstyle \sum_{\ell(u) = \alpha(v)}  \omega_{m,n}(u,v) - \C(m,n)(u,v) \\
 \textstyle \sum_{\ell(u) \neq \alpha(v)} \C(m,n)(u,v) - \omega_{m,n}(u,v) &\geq \textstyle \sum_{\ell(u) = \alpha(v)}  \gamma_\lambda^\C(u,v) \big( \omega_{m,n}(u,v) - \C(m,n)(u,v) \big) \tag{by \eqref{eq:inj:better} and $\gamma_\lambda^\C \sqsubseteq \mathbf{1}$}\\
 %\textstyle \sum_{\ell(u) \neq \alpha(v)} \C(m,n)(u,v) - \omega_{m,n}(u,v) &\geq  \textstyle \sum_{\ell(u) = \alpha(v)}  \gamma_\lambda^\C(u,v) \big( \omega_{m,n}(u,v) - \C(m,n)(u,v) \big)\\
 %\textstyle \sum_{\ell(u) \neq \alpha(v)} \gamma_\lambda^\C(u,v) \big( \C(m,n)(u,v) - \omega_{m,n}(u,v) \big) &\geq  \textstyle \sum_{\ell(u) = \alpha(v)}  \gamma_\lambda^\C(u,v) \big( \omega_{m,n}(u,v) - \C(m,n)(u,v) \big)\\
 \textstyle \sum_{u,v} \gamma_\lambda^\C(u,v) \cdot \C(m,n)(u,v) &\geq \textstyle \textstyle \sum_{u,v}  \gamma_\lambda^\C(u,v) \cdot\omega_{m,n}(u,v) \tag{for $\ell(u) \neq \alpha(v)$ $\gamma^\C(u,v) = 1$} \\
 \gamma_\lambda^\C(m,n) &\geq \lambda \textstyle \sum_{u,v}  \gamma_\lambda^\C(u,v) \cdot\omega_{m,n}(u,v) \tag{by def.\ $\gamma_\lambda^\C$ and $\lambda > 0$} \\
  \gamma_\lambda^\C(m,n) &\geq \Gamma_\lambda^{\C''}(\gamma_\lambda^{\C})(m,n) \tag{by def.\ $\C''$ and $\Gamma_\lambda^{\C''}$}
\end{align*}
%}

So far we proved that $\dist(\M,\N'') \leq \dist(\M,\N')$. $\N''$ may not have injective labeling, however we will show that its bisimilarity quotient has injective labeling function. To prove that we will show that the relation $R = \{(n,v) \mid \alpha(n) = \alpha(v) \} \subseteq N \times N$ is a probabilistic bisimulation. $R$ is readily seen to be equivalence relation that preserves labeling. It only remains to prove that if $n \mathrel{R} v$, then for any $C \in N/_R$, $\theta''(n)(C) = \theta''(v)(C)$.
Let $m \in M$ be the unique state of $\M$ such that $\ell(m) = \alpha(n) = \alpha(v)$ and $m' \in M$ the one that has the same label as any element of $C$.
We consider two cases. 
\begin{enumerate}
\item If $n,v \in C$. Let $M' = \{m \in M \mid \ell(u) \notin L(\N) \}$, then the following equalities hold
\begin{align*}
\theta''(n)(C) &= \textstyle \sum_{c \in C} \sum_{u \in M} \omega_n(u,c) \tag{by def.\ $\theta''$} \\
&= \textstyle \sum_{c \in C} \omega_n(m',c) + \sum_{u \in M'} \omega_n(u,n) \tag{by \eqref{eq:inj:marginal}, \eqref{eq:inj:min} and \eqref{eq:inj:remainder}, $\omega_n(u,c) > 0$ implies $\ell(u) = \alpha(c)$ or $c = n$}\\
&= \tau(m)(m') + \tau(m)(M') \tag{by \eqref{eq:inj:min} and \eqref{eq:inj:remainder}} \\
& = \textstyle \sum_{c \in C} \omega_v(m',c) + \sum_{u \in M'} \omega_v(u,v) \\
& = \textstyle \sum_{c \in C} \sum_{u \in M} \omega_v(u,c) =  \theta''(v)(C)
\end{align*}
\item If $n,v \notin C$, then the following equalities hold
\begin{align*}
\theta''(n)(C) &= \textstyle \sum_{c \in C} \sum_{u \in M} \omega_n(u,c) \tag{by def.\ $\theta''$} \\
&= \textstyle \sum_{c \in C} \omega_n(m',c) \tag{by \eqref{eq:inj:marginal} and \eqref{eq:inj:min} $\omega_n(u,c) > 0$ implies $\ell(u) = \alpha(c)$}\\
&= \tau(m)(m') + \tau(m)(M') \tag{by \eqref{eq:inj:min}} \\
& = \textstyle \sum_{c \in C} \omega_v(m',c) \\
& = \textstyle \sum_{c \in C} \sum_{u \in M} \omega_v(u,c) =  \theta''(v)(C)
\end{align*}
\end{enumerate}
This proves the thesis. \qed
\end{proof}

{\color{red}
Before presenting the reduction we establish structural properties for an optimal solution of $\CBA$ in the case the given MC has injective labeling (i.e., no two distinct states with the same label). 
Specifically, we show that an optimal solution for an instance $\instance{\M,k}$ of $\CBA$ can be found among MCs with injective labeling into $L(\M)$.
\begin{lemma} \label{lem:injectiveLabelingSolution}
If $\M$ has injective labeling, there exists $\N \in \MC{L(\M)}$ with injective labeling that minimizes the distance $\dist(\M, \N)$.
\end{lemma}
\begin{proof}%[Proof of Lemma~\ref{lem:injectiveLabelingSolution}]
By Lemma~\ref{lem:usefullabels} it suffices to show that for any $\N' \in \MC{L(\M)}$ there exists $\N \in \MC{L(\M)}$ with injective labeling such that $\dist(\M,\N) \leq \dist(\M,\N')$.

Assume $\N'$ does not have injective labeling. By Theorem~\ref{th:mincoupling}, there exists $\C \in \coupling{\M}{\N'}$ such that $\dist(\M,\N') = \gamma_\lambda^\C(\M,\N')$. Consider $m \in M$ and $n \in N$ with same label (i.e., $\ell(m) = \alpha(n)$). We can construct $\omega_n \in \D(M \times N)$ satisfying the following
\begin{align}
&\omega_n(u,v) \geq \C(m,n)(u,v) &&\text{for all $u \in M$ and $v \in N$ s.t.\ $\ell(u) = \ell(v)$} \label{eq:inj:better} \\
&\textstyle \sum_{v \in N} \omega_n(u,v) = \tau(m)(u) &&\text{for all $u \in M$} \label{eq:inj:marginal} \\
&\textstyle \sum_{v \in \alpha^{-1}(\ell(u))} \omega_n(u,v) = \tau(m)(u) &&\text{for all $u \in M$ s.t.\ $\ell(u) \in L(\N)$} \label{eq:inj:min} \\
&\omega_n(u,n) = \tau(m)(u) &&\text{for all $u \in M$ s.t.\ $\ell(u) \notin L(\N)$} \label{eq:inj:remainder}
\end{align}

\noindent 
Let $\N'' = (N,\theta'',\alpha)$ where, for $n,v \in N$, $\theta''(n)(v) = \sum_{u \in M} \omega_n(u,v)$. Note that $\theta''$ is well defined because $\M$ has injective labeling. 
Next we show that $\dist(\M,\N'') \leq \dist(\M,\N')$. 

By Theorem~\ref{th:mincoupling} and Knaster-Tarski's fixed point theorem it suffices to find $\C'' \in \coupling{\M}{\N''}$ such that $\Gamma_\lambda^{\C''}(\gamma_\lambda^{\C}) \sqsubseteq \gamma_\lambda^{\C}$. Take any $\C''$ such that $\C''(m,n) = \omega_n$ whenever $\ell(m) = \alpha(n)$. Note that by \eqref{eq:inj:marginal},  $\omega_n \in \coupling{\tau(m)}{\theta''(n)}$.  %otherwise $\C''(m,n) = \mu$ for any fixed $\mu \in \coupling{\tau(m)}{\theta''(n)}$. 
Next we show that, for all $m \in M$ and $n \in N$, $\Gamma_\lambda^{\C''}(\gamma_\lambda^{\C})(m,n) \leq \gamma_\lambda^{\C}(m,n)$. If $\ell(m) \neq \alpha(n)$, the inequality holds because $\gamma_\lambda^{\C}(m,n) = 1$. If $\ell(m) \neq \alpha(n)$, the following hold
%{\color{red}
\begin{align*}
 \textstyle \sum_{u,v} \C(m,n)(u,v) - \omega_n(u,v) &= 0 \tag{by $\C(m,n), \omega_n \in \D(M \times N)$}\\
 \textstyle \sum_{\ell(u) \neq \alpha(v)} \C(m,n)(u,v) - \omega_n(u,v) &= \textstyle \sum_{\ell(u) = \alpha(v)}  \omega_n(u,v) - \C(m,n)(u,v) \\
 \textstyle \sum_{\ell(u) \neq \alpha(v)} \C(m,n)(u,v) - \omega_n(u,v) &\geq \textstyle \sum_{\ell(u) = \alpha(v)}  \gamma_\lambda^\C(u,v) \big( \omega_n(u,v) - \C(m,n)(u,v) \big) \tag{by \eqref{eq:inj:better} and $\gamma_\lambda^\C \sqsubseteq \mathbf{1}$}\\
 %\textstyle \sum_{\ell(u) \neq \alpha(v)} \C(m,n)(u,v) - \omega_n(u,v) &\geq  \textstyle \sum_{\ell(u) = \alpha(v)}  \gamma_\lambda^\C(u,v) \big( \omega_n(u,v) - \C(m,n)(u,v) \big)\\
 %\textstyle \sum_{\ell(u) \neq \alpha(v)} \gamma_\lambda^\C(u,v) \big( \C(m,n)(u,v) - \omega_n(u,v) \big) &\geq  \textstyle \sum_{\ell(u) = \alpha(v)}  \gamma_\lambda^\C(u,v) \big( \omega_n(u,v) - \C(m,n)(u,v) \big)\\
 \textstyle \sum_{u,v} \gamma_\lambda^\C(u,v) \cdot \C(m,n)(u,v) &\geq \textstyle \textstyle \sum_{u,v}  \gamma_\lambda^\C(u,v) \cdot\omega_n(u,v) \tag{for $\ell(u) \neq \alpha(v)$ $\gamma^\C(u,v) = 1$} \\
 \gamma_\lambda^\C(m,n) &\geq \lambda \textstyle \sum_{u,v}  \gamma_\lambda^\C(u,v) \cdot\omega_n(u,v) \tag{by def.\ $\gamma_\lambda^\C$ and $\lambda > 0$} \\
  \gamma_\lambda^\C(m,n) &\geq \Gamma_\lambda^{\C''}(\gamma_\lambda^{\C})(m,n) \tag{by def.\ $\C''$ and $\Gamma_\lambda^{\C''}$}
\end{align*}
%}

So far we proved that $\dist(\M,\N'') \leq \dist(\M,\N')$. $\N''$ may not have injective labeling, however we will show that its bisimilarity quotient has injective labeling function. To prove that we will show that the relation $R = \{(n,v) \mid \alpha(n) = \alpha(v) \} \subseteq N \times N$ is a probabilistic bisimulation. $R$ is readily seen to be equivalence relation that preserves labeling. It only remains to prove that if $n \mathrel{R} v$, then for any $C \in N/_R$, $\theta''(n)(C) = \theta''(v)(C)$.
Let $m \in M$ be the unique state of $\M$ such that $\ell(m) = \alpha(n) = \alpha(v)$ and $m' \in M$ the one that has the same label as any element of $C$.
We consider two cases. 
\begin{enumerate}
\item If $n,v \in C$. Let $M' = \{m \in M \mid \ell(u) \notin L(\N) \}$, then the following equalities hold
\begin{align*}
\theta''(n)(C) &= \textstyle \sum_{c \in C} \sum_{u \in M} \omega_n(u,c) \tag{by def.\ $\theta''$} \\
&= \textstyle \sum_{c \in C} \omega_n(m',c) + \sum_{u \in M'} \omega_n(u,n) \tag{by \eqref{eq:inj:marginal}, \eqref{eq:inj:min} and \eqref{eq:inj:remainder}, $\omega_n(u,c) > 0$ implies $\ell(u) = \alpha(c)$ or $c = n$}\\
&= \tau(m)(m') + \tau(m)(M') \tag{by \eqref{eq:inj:min} and \eqref{eq:inj:remainder}} \\
& = \textstyle \sum_{c \in C} \omega_v(m',c) + \sum_{u \in M'} \omega_v(u,v) \\
& = \textstyle \sum_{c \in C} \sum_{u \in M} \omega_v(u,c) =  \theta''(v)(C)
\end{align*}
\item If $n,v \notin C$, then the following equalities hold
\begin{align*}
\theta''(n)(C) &= \textstyle \sum_{c \in C} \sum_{u \in M} \omega_n(u,c) \tag{by def.\ $\theta''$} \\
&= \textstyle \sum_{c \in C} \omega_n(m',c) \tag{by \eqref{eq:inj:marginal} and \eqref{eq:inj:min} $\omega_n(u,c) > 0$ implies $\ell(u) = \alpha(c)$}\\
&= \tau(m)(m') + \tau(m)(M') \tag{by \eqref{eq:inj:min}} \\
& = \textstyle \sum_{c \in C} \omega_v(m',c) \\
& = \textstyle \sum_{c \in C} \sum_{u \in M} \omega_v(u,c) =  \theta''(v)(C)
\end{align*}
\end{enumerate}
This proves the thesis. \qed
\end{proof}
}

\end{document}